\keywords{Automata, Nondeterminism, Complexity}
  \tikzset{snake arrow/.style= {->, decorate, decoration={snake,amplitude=.4mm,segment length=1.7mm,post length=2mm}}}
  \tikzstyle{accepting}=[double distance=2pt,outer sep=1pt+\pgflinewidth]
  \newcommand{\ra}[1]{\renewcommand{\arraystretch}{#1}}
\newcommand{\eps}{\varepsilon}
\newcommand{\epsnostate}{\bullet}
\DeclareMathOperator{\sub}{\rm sub}
\DeclareMathOperator{\depth}{\rm depth}
\DeclareMathOperator{\A}{\mathcal A}
\DeclareMathOperator{\B}{\mathcal B}
\DeclareMathOperator{\D}{\mathcal D}
\DeclareMathOperator{\M}{\mathcal M}
\newcommand{\R}{\mathcal{R}}
\newcommand{\tuple}[1]{\langle{#1}\rangle}
\newcommand{\blank}{\llcorner\!\lrcorner}
\newcommand{\Deltaplus}{\Delta_{\#\$}}
\DeclareMathOperator{\enc}{\rm enc}
\newcommand{\complclass}[1]{{\sc #1}\xspace}
\newcommand{\LogSpace}{\complclass{L}}
\newcommand{\NL}{\complclass{NL}}
\newcommand{\coNP}{\complclass{coNP}}
\newcommand{\PSpace}{\complclass{PSpace}}
\newcommand{\coNL}{\complclass{coNL}}
\newcommand{\Ptime}{\complclass{P}}
\newcommand{\NP}{\complclass{NP}}
\theoremstyle{plain}
\newtheorem{claim}[thm]{Claim}
\def\ie{{\em i.e.}}
\def\eg{{\em e.g.}}
\def\cf{{\em cf.}}
\begin{document}

\title{Partially Ordered Automata and Piecewise Testability}
\titlecomment{This paper is a revised and full version of our recent work partially presented at MFCS 2016~\cite{ptnfas} and SOFSEM 2018~\cite{MasopustK18}. Except for full proofs, it contains new results and provides an overview of the complexity results for several types of partially ordered NFAs and operations of universality, inclusion, equivalence, and ($k$-)piecewise testability.}

\author[T.~Masopust]{Tom\'{a}\v{s} Masopust\rsuper{a}}
\address{\lsuper{a}Department of Computer Science, Palacky University in Olomouc, and Institute of Mathematics of the Czech Academy of Sciences, Prague, Czechia}
\email{tomas.masopust@upol.cz}

\author[M. Kr\"otzsch]{Markus Kr\"otzsch\rsuper{b}}
\address{\lsuper{b}Knowledge-Based Systems Group, TU Dresden, Germany}
\email{markus.kroetzsch@tu-dresden.de}

\thanks{Supported in part by the German Research Foundation (DFG) in project number 389792660 (TRR~248, \href{https://www.perspicuous-computing.science/}{Center for Perspicuous Systems}) and Emmy Noether grant KR 4381/1-1 (DIAMOND), by the Ministry of Education, Youth and Sports under the INTER-EXCELLENCE project LTAUSA19098, by the Czech Science Foundation grant GC19-06175J, and by RVO~67985840.}

%% required for running head on odd and even pages, use suitable
%% abbreviations in case of long titles and many authors:

%%%%%%%%%%%%%%%%%%%%%%%%%%%%%%%%%%%%%%%%%%%%%%%%%%%%%%%%%%%%%%%%%%%%%%%%%%%

%% the abstract has to PRECEDE the command \maketitle:
%% be sure not to issue the \maketitle command twice!

\begin{abstract}
  \emph{Partially ordered automata\/} are automata where the transition relation induces a partial order on states. The expressive power of partially ordered automata is closely related to the expressivity of fragments of first-order logic on finite words or, equivalently, to the language classes of the levels of the Straubing-Th\'erien hierarchy. Several fragments (levels) have been intensively investigated under various names. For instance, the fragment of first-order formulae with a single existential block of quantifiers in prenex normal form is known as \emph{piecewise testable languages} or \emph{$J$-trivial languages}. These languages are characterized by confluent partially ordered DFAs or by complete, confluent, and self-loop-deterministic partially ordered NFAs (ptNFAs for short). 
  In this paper, we study the complexity of basic questions for several types of partially ordered automata on finite words; namely, the questions of inclusion, equivalence, and ($k$-)piecewise testability. The lower-bound complexity boils down to the complexity of universality. The universality problem asks whether a system recognizes all words over its alphabet. For ptNFAs, the complexity of universality decreases if the alphabet is fixed, but it is open if the alphabet may grow with the number of states. We show that deciding universality for general ptNFAs is as hard as for general NFAs. Our proof is a novel and nontrivial extension of our recent construction for self-loop-deterministic partially ordered NFAs, a model strictly more expressive than ptNFAs. We provide a comprehensive picture of the complexities of the problems of inclusion, equivalence, and ($k$-)piecewise testability for the considered types of automata.
\end{abstract}

\maketitle

\section{Introduction}
  \emph{Partially ordered automata} -- also known as \emph{1-weak}, \emph{very weak}, \emph{linear}, \emph{acyclic} or \emph{extensive automata}~\cite{DaxK08,KufleitnerL11,pin2019} -- are finite automata where the transition relation induces a partial order on states. This restriction on the behaviour of the automata implies that as soon as a state is left during the computation, it is never visited again. In other words, the only cycles of partially ordered automata are self-loops. 
  
  Partially ordered automata attracted attention of many researchers because of their relation to logical, algebraic, and combinatorial charazterizations of languages.
  From the logical perspective, they characterize (Boolean combinations of) fragments of first-order logic on finite words. For an integer $i\ge 1$, the fragment $\Sigma_i$ of the first-order logic FO$[<]$ on finite words consists of all FO$[<]$ formulae in prenex normal form with $i$ blocks of quantifiers, starting with a block of existential quantifiers~\cite{DiekertGK08}.
  From the algebraic perspective, they characterize regular languages whose syntactic monoids possess some algebraic properties. For instance, the syntactic monoid of the language is $J$-trivial or $R$-trivial, where $J$ and $R$ are Green's relations~\cite{green}, see also Pin~\cite{pin2019}.
  From the combinatorial perspective, they characterize some levels of the Straubing-Th\'erien hierarchy~\cite{Straubing81,Therien81}. For an alphabet $\Sigma$, level 0 of the Straubing-Th\'{e}rien hierarchy is defined as $\mathscr{L}(0)=\{\emptyset, \Sigma^*\}$, and for integers $n\geq 0$, the half levels $\mathscr{L}(n+\frac{1}{2})$ consist of all finite unions of languages $L_0 a_1 L_1 a_2 \cdots a_k L_k$, with $k\geq 0$, $L_0,\ldots, L_k\in\mathscr{L}(n)$, and $a_1,\ldots,a_k\in\Sigma$, and the full levels $\mathscr{L}(n+1)$ consist of all finite Boolean combinations of languages from level $\mathscr{L}(n+\frac{1}{2})$. The hierarchy does not collapse on any level~\cite{BrzozowskiK78}; see also Pin~\cite{pin2019} for more details. The Straubing-Th\'erien hierarchy has a close relation to the dot-depth hierarchy~\cite{BrzozowskiK78,CohenB71,Straubing85} and to complexity theory~\cite{Wagner04}.
  We now provide more details and mention related research.
  
  As early as 1972, Simon~\cite{Simon1972} studied a subclass of regular languages called \emph{piecewise testable languages\/} and provided its characterization in terms of partially ordered automata. A regular language over $\Sigma$ is piecewise testable if it is a finite Boolean combination of languages of the form $\Sigma^* a_1 \Sigma^* a_2 \Sigma^* \cdots \Sigma^* a_n \Sigma^*$, where $a_i\in \Sigma$ and $n\ge 0$. If $n\le k$, the language is {\em $k$-piecewise testable}. Simon showed that piecewise testable languages are characterized by \emph{confluent} partially ordered deterministic finite automata (DFAs), where confluence means that for all states $q$ and letters $a$ and $b$, if there are transitions $q\stackrel{a}{\to}q_a$ and $q\stackrel{b}{\to}q_b$, then there is a word $w \in \{a, b\}^*$ that leads the automaton from both $q_a$ and $q_b$ to the same state. 
  Piecewise testable languages are known under many names in the literature. Considering the perpectives described above, piecewise testable languages are known
  as finite boolean combinations of the fragment $\Sigma_1$ of the first-order logic FO$[<]$ on finite words~\cite{DiekertGK08}; 
  as \emph{$J$-trivial languages}, because their syntactic monoids are $J$-trivial~\cite{Simon1972}; or
  as level~1 of the Straubing-Th\'{e}rien hierarchy. 
  This list is, indeed, not exhaustive. Piecewise testable languages appear in many contexts; see, \eg, Karandikar and Schnoebelen~\cite{KarandikarS19} who introduced the \emph{height} of a language and used it to study the expressivity of a two-variable fragment of first-order logic of sequences with the subword ordering. This fragment can only express piecewise testable properties.
  Although we study piecewise testable languages on classical $*$-languages in this paper, it is worth mentioning that piecewise testable languages have been extended from words to trees by Bojanczyk, Segoufin and Straubing~\cite{Bojanczyk:2012}.

  In 1980, Brzozowski and Fich~\cite{BrzozowskiF80} investigated the expressivity of \emph{partially ordered DFAs} and showed that they characterize the class of languages whose syntactic monoid is $R$-trivial. Consequently, the languages of partially ordered DFAs are called \emph{$R$-trivial languages} and are strictly more powerful than piecewise testable languages.
   
  Schwentick et al.~\cite{SchwentickTV01} studied \emph{partially ordered nondeterministic finite automata} (poNFAs) and showed that they characterize the first-order fragment $\Sigma_2$ on finite words or, equivalently, level~$\frac32$ of the Straubing-Th\'{e}rien hierarchy. Partially ordered NFAs are thus strictly more powerful than their deterministic counterpart, partially ordered DFAs.
  Bouajjani et al.~\cite{BMT2001} characterized the class of languages accepted by partially ordered NFAs as languages effectively closed under permutation rewriting, that is, closed under the iterative application of rules of the form $ab\to ba$, and call this class of languages \emph{Alphabetical Pattern Constraints}. 
  
  Schwentick et al.~\cite{SchwentickTV01} further showed that \emph{partially ordered two-way DFAs} coincide with the first-order fragment $\Delta_2$, which consists of all languages where both the language and its complement are $\Sigma_2$-definable. Hence, it is the largest subclass of $\Sigma_2$ closed under complementation. This class of languages is also known as \emph{unambiguous languages}, introduced by Sch\"utzenberger~\cite{Sch76}, who showed that unambiguous languages are exactly those languages whose syntactic monoid is in the variety DA; see also Grosshans et al.~\cite{GrosshansMS17} for their relation to programs over monoids. Lodaya et al.~\cite{LodayaPS08} further characterized the languages of partially ordered two-way DFAs as a fragment of interval temporal logic.
  
  Considering our contribution to the automata characterization of the discussed language classes, we defined \emph{self-loop-deterministic partially ordered NFAs} and showed that they are expressively equivalent to partially ordered DFAs~\cite{mfcs16:mktmmt_full}. A partially ordered NFA is self-loop-deterministic if, in every state, the automaton has never a choice, under the same letter, between staying in the state or leaving the state. Furthermore, we defined and studied a nondeterministic counterpart of confluent partially ordered DFAs recognizing piecewise testable languages called \emph{complete, confluent, and self-loop-deterministic partially ordered NFAs} (or ptNFAs for short)~\cite{ptnfas,dlt15}. 

  There is a constant interest in automata characterizations of the levels of the Straubing-Th\'erien hierarchy (as well as of the fragments of first-order logic), in particular in decidability and complexity of checking membership of a language in a specific level of the hierarchy. Despite a recent progress~\cite{AlmeidaBKK15,Place15,PlaceZ15}, decidability of whether a language belongs to level $k$ of the Straubing-Th\'erien hierarchy is open for $k>\frac{7}{2}$. The recent results were obtained by considering more general problems than membership, and the investigation has brought {\em separability\/} and {\em covering\/} problems into focus. We do not provide more details about these results and rather refer the reader to the literature~\cite{PlaceZ18}. 

  Although we consider only $*$-languages in this paper, we now briefly mention related research for tree- and $\omega$-languages.
  H\'eam~\cite{Heam08} studied tree regular languages and showed that the class of tree regular languages accepted by $\Sigma_2$ formulae is strictly included in the class of languages accepted by partially ordered tree automata. 
  Kufleitner and Lauser~\cite{KufleitnerL11} extended the results of Schwentick et al.~\cite{SchwentickTV01} from finite words to infinite words. They defined \emph{partially ordered two-way B\"{u}chi automata} and characterized their expressivity in terms of first-order logic. Partially ordered B\"{u}chi automata can be used, \eg, to characterize the common fragment of the two temporal logics ACTL and LTL~\cite{Bojanczyk08,Maidl00}.
  They showed that nondeterministic partially ordered two-way B\"{u}chi automata are equivalent to their one-way counterpart, and that their expressivity coincides with the first-order fragment $\Sigma_2$. Considering deterministic partially ordered two-way B\"{u}chi automata, they showed that they characterize the first-order fragment $\Delta_2$, and that deterministic partially ordered two-way B\"{u}chi automata are more expressive than deterministic partially ordered one-way B\"{u}chi automata. Over finite words, the fragment $\Delta_2$ coincides with the fragment FO$^2$ of first-order logic with only two variables~\cite{TherienW98}, and the corresponding languages are exactly unambiguous languages~\cite{PinW97}. The situation is, however, different over infinite words, where the fragment $\Delta_2$ is a proper subclass of FO$^2$, and only \emph{restricted unambiguous languages\/} are $\Delta_2$-definable. 
  They further showed that deterministic partially ordered two-way B\"{u}chi automata are effectively closed under Boolean operations, and discussed the complexity of several problems for partially ordered two-way B\"{u}chi automata, including emptiness, inclusion, universality, and equivalence. The same problems were studied by Lodaya et al.~\cite{LodayaPS10} for partially ordered two-way automata over finite words and by Sistla et al.~\cite{SistlaVW87} for one-way B\"{u}chi automata.

  We finally point out that there is also a kind of partially ordered alternating automata, motivated by applications in specification and verification of nonterminating programs. Namely, Kupferman and Vardi~\cite{KupfermanV01} discussed weak alternating automata of Muller et al.~\cite{MullerSS92}, where the state space is partitioned into partially ordered sets, and the automaton can move from a set only to a smaller set.
  
  Our interest in partially ordered automata has several reasons.
  First, it comes from the supervisory control synthesis of discrete event systems~\cite{KomendaM17,KomendaMS15} and from the verification of properties of discrete-event systems~\cite{MasopustDetectability,MY2017}. Given an automaton modelling a system (manufacturing, technological, etc.) and a regular language describing a specification, the aim of supervisory control is to automatically design a controller such that, running the controller and the system in a closed-loop feedback manner, the controlled system satisfies the prescribed specification. Although the goal of supervisory control synthesis is similar to that of reactive synthesis~\cite{HarelPnueli85}, there are significant differences between the approaches; interested readers are referred to the literature for details~\cite{EhlersLTV17,SchmuckMoorMajumdar18}. For our interest, partially ordered automata are in some sense and on some level of abstraction the simplest models of deadlock-free discrete event systems, and hence convenient to study the lower-bound complexity of the problems under consideration, such as detectability, diagnosability, opacity, etc.~\cite{Bryans2005,CainesGW1988,JacobLF16,Lin2011,OzverenW1990,Ramadge1986,SabooriHadjicostis2007,ShuLin2011,ShuLinYing2007}. 

  Our second motivation comes from database theory and schema languages for XML data, namely from efficient approximate query answering and increasing the user-friendliness of XML Schema. Both problems are motivated by scenarios in which we want to describe something complex by means of a simple language. The technical core of these scenarios consists of {\em separation\/} problems, which are usually of the form ``Given two languages $K$ and $L$, does there exist a language $S$, coming from a family $\mathcal{F}$ of `simple' languages, such that $S$ contains everything from $K$ and nothing from $L$?'' The family $\mathcal{F}$ of simple languages could be, for example, languages definable by a first-order fragment, piecewise testable languages, languages definable by a special class of automata, etc.~\cite{icalp2013,HofmanM15,MartensNNS15,Place18,PlaceRZ13}. The separability technique is further closely related to {\em interpolation\/}, a method providing means to compute separation between good and bad states in program verification~\cite{Craig57,RybalchenkoS10}.

  Our third motivation comes from the evaluation of regular path queries in graph databases. Regular path queries are an important feature of modern query languages, such as SPARQL 1.1, allowing queries about arbitrarily long paths in the graph database. Regular path queries are regular expressions that are matched against labeled directed paths of a graph. The expressions are often of a specific and simple form~\cite{MartensNS09,MartensT18}. Our particular interest is in translating regular path queries to simple (\eg, partially ordered) automata.

  All of our motivations boil down to the complexity questions of basic operations, such as inclusion and equivalence. Since the universality question provides the lower-bound complexity for both inclusion and equivalence, we in particular focus on the complexity of deciding universality.

  Universality is a fundamental question asking whether a given system recognizes all words over its alphabet. The study of universality has a long tradition in formal languages with many applications across computer science, \eg, in knowledge representation and database theory~\cite{BarceloLR:jacm14,CalvaneseGLV03:rpqreasoning,SMKR:elcq14} or in verification~\cite{BaierKatoenBook}. 
  Deciding universality for NFAs is \PSpace-complete~\cite{MeyerS72}, and there are two typical proof techniques for showing hardness. One is based on the reduction from the {\em DFA-union-universality\/} problem~\cite{Kozen77} and the other on the reduction from the {\em word problem\/} for polynomially-space-bounded Turing machines~\cite{AhoHU74}.
  Kozen's~\cite{Kozen77} proof showing \PSpace-hardness of DFA-union universality (actually of its complemented equivalent -- DFA-intersection emptiness) results in DFAs consisting of nontrivial cycles, and these cycles are essential for the proof; indeed, if all cycles of the DFAs were only self-loops, then the problem would be easier (namely, \coNP-complete, see Theorem~\ref{thm4}).

  Deciding universality for partially ordered NFAs has the same worst-case complexity as for general NFAs, even if restricted to binary alphabets~\cite{mfcs16:mktmmt_full}. This could be caused by an unbounded number of nondeterministic steps admitted in partially ordered NFAs -- a partially ordered NFA either stays in the same state or moves to another state. Forbidding this kind of nondeterminism, that is, considering self-loop-deterministic partially ordered NFAs, indeed affects the complexity of universality -- it is \coNP-complete if the alphabet is fixed, but remains \PSpace-complete if the alphabet may grow polynomially with the number of states~\cite{mfcs16:mktmmt_full}. The growth of the alphabet thus, in some sense, compensates for the restricted number of nondeterministic steps. 
  
  In this paper, we study the complexity of deciding universality for ptNFAs -- complete, confluent, and self-loop-deterministic partially ordered NFAs on finite words. Since we use a different definition of these automata in this paper than in our previous work, we first show that these two definitions coincide (Lemma~\ref{lem3.1}). Then we show that deciding universality for ptNFAs is \NL-complete if the input alphabet is unary (Theorem~\ref{thmMainNL}), \coNP-complete if the input alphabet is fixed (Therem~\ref{0ptNFAhard}), and \PSpace-complete in general (Therem~\ref{thmMain}). The proof of Therem~\ref{thmMain} requires a novel and nontrivial extension of our recent construction for self-loop-deterministic partially ordered NFAs~\cite{mfcs16:mktmmt_full}. The proof is based on a construction of its own interest; namely, on a construction of a ptNFA accepting all but a single exponentially long word (Lemma~\ref{exprponfas}). These results are summarized in Table~\ref{table_resultsI}. 
  \begin{table*}\centering
    \ra{1.1}
    \begin{tabular}{@{}llll@{}}\toprule
            & $|\Sigma|=1$
            & $|\Sigma|\ge 2$
            & $\Sigma$ is growing\\
          \midrule
          DFA       & L-c    \cite{Jones75}
                    & \NL-c  \cite{Jones75}
                    & \NL-c  \cite{Jones75}\\
          ptNFA     & \NL-c      (Thm.~\ref{thmMainNL})
                    & \coNP-c    (Thm.~\ref{0ptNFAhard})
                    & \PSpace-c  (Thm.~\ref{thmMain})\\
          rpoNFA    & \NL-c      \cite{mfcs16:mktmmt_full}
                    & \coNP-c    \cite{mfcs16:mktmmt_full}
                    & \PSpace-c  \cite{mfcs16:mktmmt_full}\\
          poNFA     & \NL-c      \cite{mfcs16:mktmmt_full}
                    & \PSpace-c  \cite{mfcs16:mktmmt_full}
                    & \PSpace-c  \cite{AhoHU74} \\
          NFA       & \coNP-c    \cite{StockmeyerM73}
                    & \PSpace-c  \cite{AhoHU74} 
                    & \PSpace-c  \cite{AhoHU74} \\
      \bottomrule
    \end{tabular}
      \caption{Complexity of deciding universality; $\Sigma$ denotes the input alphabet.}
      \label{table_resultsI}
  \end{table*}
  Then we show how to reduce universality to the question of whether the language of a given automaton is $k$-piecewise testable (Lemma~\ref{lemma0k}), and we use this result to obtain the complexity results for deciding $k$-piecewise testability (Theorems~\ref{thmMainUniv} through~\ref{thm30}); see Table~\ref{table1} for an overview of the results. 
  \begin{table*}\centering
    \ra{1.1}
    \begin{tabular}{@{}lllll@{}}\toprule
          & Unary alphabet
          & Fixed alphabet
          & \multicolumn{2}{l}{Arbitrary alphabet}\\
          & $|\Sigma|=1$
          & $|\Sigma|\ge 2$
          & $k\le 3$
          & $k\ge 4$ \\
          \midrule
          DFA   & L-c            (Thm. \ref{DFAlc})
                & \NL-c          (Thm. \ref{kPTfixedDFA})
                & \NL-c          \cite{dlt15}
                & \coNP-c        \cite{KKP}    \\
          ptNFA & \NL-c          (Thm. \ref{thmP})
                & \coNP-c        (Thm. \ref{theorem27})
                & \multicolumn{2}{l}{\PSpace-c (Thm. \ref{thmMainUniv})} \\
        rpoNFA & \NL-c          
                & \coNP-c        \cite{mfcs16:mktmmt_full}
                & \multicolumn{2}{l}{\PSpace-c} \\
          poNFA & \NL-c          (Thm. \ref{poNFAnl})
                & \PSpace-c      (Thm. \ref{thm72})
                & \multicolumn{2}{l}{\PSpace-c} \\
            NFA & \coNP-c        (Thm. \ref{thm30})
                & \PSpace-c      \cite{dlt15}
                & \multicolumn{2}{l}{\PSpace-c \cite{dlt15}} \\
      \bottomrule
    \end{tabular}
    \caption{Complexity of deciding $k$-piecewise testability.}
    \label{table1}
  \end{table*}
  After that, we study the question whether the language of a given automaton is piecewise testable, that is, without the restriction to a specific $k$ (Theorems~\ref{thm74} through~\ref{rpoNFApt}); see Table~\ref{table2} for an overview of the results.
  Finally, we discuss the complexity of the problems of inclusion and equivalence in Section~\ref{IandE}. These results are summarized in Tables~\ref{table3} and~\ref{table4}.

  \begin{table*}\centering
    \ra{1.1}
    \begin{tabular}{@{}llll@{}}\toprule
            & $|\Sigma|=1$
            & $|\Sigma|\ge 2$
            & $\Sigma$ is growing\\
          \midrule
            DFA    & L-c            (Thm. \ref{DFAlcb})
                    & \NL-c          \cite{ChoH91}
                    & \NL-c          \cite{ChoH91} \\
          rpoNFA    & $\checkmark$   (Thm. \ref{thm74})
                    & \coNP-c        (Thm. \ref{rpoNFApt})
                    & \PSpace-c      (Thm. \ref{rpoNFAtoPT})\\
          poNFA     & $\checkmark$   (Thm. \ref{thm74})
                    & \PSpace-c      (Thm. \ref{thm72b})
                    & \PSpace-c       \\
          NFA       & \coNP-c        (Thm. \ref{thm30b})
                    & \PSpace-c      \cite{tm2016}
                    & \PSpace-c      \cite{tm2016} \\
      \bottomrule
    \end{tabular}
    \caption{Complexity of deciding piecewise testability.}
    \label{table2}
  \end{table*}

\begin{table*}\centering
  \ra{1.1}
  \begin{tabular}{@{}lllll@{}}\toprule
                    & \multicolumn{4}{c}{$B$} \\
    \cmidrule{2-5}
    $A$ & DFA & ptNFA \& rpoNFA & poNFA & NFA\\ \midrule
    DFA   & \LogSpace/\NL
          & \NL/\coNP/\PSpace
          & \NL/\PSpace
          & \coNP/\PSpace \\
    ptNFA & \NL
          & \NL/\coNP/\PSpace
          & \NL/\PSpace
          & \coNP/\PSpace \\
    rpoNFA& \NL
          & \NL/\coNP/\PSpace
          & \NL/\PSpace
          & \coNP/\PSpace \\
    poNFA & \NL
          & \NL/\coNP/\PSpace
          & \NL/\PSpace
          & \coNP/\PSpace \\
    NFA   & \NL
          & \NL/\coNP/\PSpace
          & \NL/\PSpace
          & \coNP/\PSpace\\
    \bottomrule
  \end{tabular}
  \caption{Complexity of deciding inclusion $L(A)\subseteq L(B)$ (unary/fixed[/growing] alphabet), all results are complete for the given class.}
  \label{table3}
\end{table*}

\begin{table*}\centering
  \ra{1.1}
  \begin{tabular}{@{}lllll@{}}\toprule
                  & DFA & ptNFA \& rpoNFA & poNFA & NFA \\ \midrule
           DFA    & \LogSpace/\NL
                  & \NL/\coNP/\PSpace
                  & \NL/\PSpace
                  & \coNP/\PSpace \\
        ptNFA     & 
                  & \NL/\coNP/\PSpace
                  & \NL/\PSpace
                  & \coNP/\PSpace \\
        rpoNFA    & 
                  & \NL/\coNP/\PSpace
                  & \NL/\PSpace
                  & \coNP/\PSpace \\
        poNFA     & 
                  & 
                  & \NL/\PSpace
                  & \coNP/\PSpace \\
        NFA       & 
                  & 
                  & 
                  & \coNP/\PSpace \\
    \bottomrule
  \end{tabular}
  \caption{Complexity of deciding equivalence (unary/fixed[/growing] alphabet), the problems are complete for the given classes.}
  \label{table4}
\end{table*}

\section{Preliminaries}
\subsection{Basic Definitions}
  We assume that the reader is familiar with automata and formal language theory~\cite{AhoHU74}. The cardinality of a set $A$ is denoted by $|A|$ and the power set of $A$ by $2^A$. The empty word is denoted by $\eps$. For a word $w=xyz$, $x$ is a {\em prefix}, $y$ a {\em factor\/}, and $z$ a {\em suffix\/} of $w$. Let $\Sigma$ be an alphabet, and let $L_{a_1 a_2 \cdots a_n} = \Sigma^* a_1 \Sigma^* a_2 \Sigma^* \cdots \Sigma^* a_n \Sigma^*$, where $a_i\in\Sigma$ for $i=1,\ldots,n$. A word $v$ is a {\em subword\/} of a word $w$, denoted by $v \preccurlyeq w$, if $w \in L_{v}$. A prefix (factor, suffix, subword) of $w$ is {\em proper\/} if it is different from $w$.

  A {\em nondeterministic finite automaton\/} (NFA) is a quintuple $\A = (Q,\Sigma,\delta,I,F)$, where $Q$ is a finite nonempty set of states, $\Sigma$ is an input alphabet, $I\subseteq Q$ is a set of initial states, $F\subseteq Q$ is a set of accepting states, and $\delta \colon Q\times\Sigma \to 2^Q$ is the transition function that can be extended to the domain $2^Q\times \Sigma^*$ in the usual way. The language {\em accepted\/} by $\A$ is the set $L(\A) = \{w\in\Sigma^* \mid \delta(I,w) \cap F \neq \emptyset\}$. The automaton $\A$ is {\em complete\/} if for every state $q\in Q$ and every letter $a \in \Sigma$, the set $\delta(q,a)$ is nonempty, and it is {\em deterministic\/} (DFA) if $|I|=1$ and $|\delta(q,a)|=1$ for every state $q \in Q$ and every letter $a \in \Sigma$. 

  A {\em path\/} $\pi$ from a state $q_0$ to a state $q_n$ under a word $a_1a_2\cdots a_{n}$, for some $n\ge 0$, is a sequence of states and input symbols $q_0 a_1 q_1 a_2 \ldots q_{n-1} a_{n} q_n$ where $q_{i+1} \in \delta(q_i,a_{i+1})$ for all $i=0,1,\ldots,n-1$. Path $\pi$ is {\em accepting\/} if $q_0\in I$ is an initial state and $q_n\in F$ is an accepting state. We write $q_0 \xrightarrow{a_1a_2\cdots a_{n}} q_{n}$ to denote that there exists a path from $q_0$ to $q_n$ under the word $a_1a_2\cdots a_{n}$. A path is {\em simple\/} if all its states are pairwise distinct. The number of states on the longest simple path of $\A$ that starts in an initial state, decreased by one (\ie, the number of transitions on that path), is the {\em depth\/} of $\A$, denoted by $\depth(\A)$. 
  
\subsection{Partially Ordered Automata}\label{subsec3}
  Let  $\A = (Q,\Sigma,\delta,I,F)$ be an NFA. The reachability relation $\le$ on states is defined by setting $p\le q$ if there is a word $w\in \Sigma^*$ such that $q\in \delta(p,w)$. The NFA $\A$ is {\em partially ordered (poNFA)\/} if the reachability relation $\le$ is a partial order. 

  The NFA $\A$ is {\em confluent\/} provided that for every state $q\in Q$ and every pair of (not necessarily distinct) letters $a, b \in \Sigma$, if $s \in \delta(q,a)$ and $t\in \delta(q,b)$ then there exists a word $w \in \{a, b\}^*$ such that $\delta(s,w) \cap \delta(t,w) \neq \emptyset$; see Figure~\ref{fig_bad_pattern} (left) for an illustration. 
  
  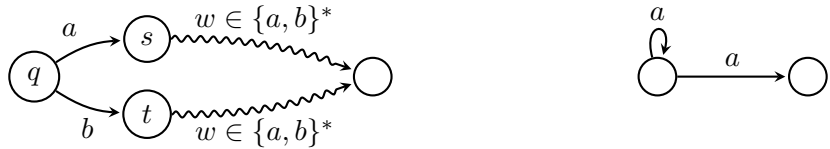
\begin{figure}[b]
    \centering
    \begin{tikzpicture}[baseline,->,auto,>=stealth,shorten >=1pt,node distance=1.5cm,thick,
      state/.style={circle,minimum size=5mm,thick,draw=black,initial text=}]
      \node[state]  (1) {$q$};
      \node         (0) [right of=1]  {};
      \node[state]  (2) [above of=0,node distance=.5cm]  {$s$};
      \node[state]  (3) [below of=0,node distance=.5cm]  {$t$};
      \node[state]  (4) [right of=0,node distance=3cm] {};
      \path
        (1) edge[bend left=15]  node {$a$} (2)
        (1) edge[bend right=15] node[below] {$b$} (3)
        (2) edge[snake arrow,bend left=10] node[above] {$w\in\{a,b\}^*$} (4)
        (3) edge[snake arrow,bend right=10] node[below] {$w\in\{a,b\}^*$} (4)
        ;
    \end{tikzpicture}
    \hspace{3cm}
    \begin{tikzpicture}[baseline,->,auto,>=stealth,shorten >=1pt,node distance=2cm,thick,
      state/.style={circle,minimum size=5mm,thick,draw=black,initial text=}]
      \node[state]  (a) {};
      \node[state]  (aa) [right of=a]  {};
      \path
        (a) edge[loop above] node {$a$} (a)
        (a) edge node {$a$} (aa)
        ;
    \end{tikzpicture}
    \caption{Confluence (left) and the forbidden pattern of rpoNFAs (right).}
    \label{fig_bad_pattern}
  \end{figure}
  
  The poNFA $\A$ is \emph{restricted\/} or {\em self-loop-deterministic\/} ({\em rpoNFA}) if, for every state $q\in Q$ and every letter $a\in \Sigma$, $q\in \delta(q,a)$ implies that $\delta(q,a) = \{q\}$. Restricted poNFAs can thus be defined in terms of forbidden patterns similar to that of Gla{\ss}er and Schmitz~\cite{GlasserS08}; Figure~\ref{fig_bad_pattern} (right) shows the forbidden pattern of rpoNFAs.

  An rpoNFA is a {\em ptNFA\/} if it is complete and confluent. The name ptNFA comes from \emph{piecewise testable}, since ptNFAs characterize piecewise testable languages~\cite{ptnfas,dlt15}. Notice that the disjoint union of two or more ptNFAs is a ptNFA; we use this fact in the proof of Theorem~\ref{thmMain}.
  The violation of the definitional properties of ptNFAs can be tested by several reachability checks, and these properties are \NL-hard even for minimal DFAs~\cite{ChoH91}. Since \NL~=~\coNL, checking whether an NFA is a ptNFA is NL-complete.

\subsection{The Unique Maximal State Property}
  For two states $p$ and $q$, we write $p < q$ if $p\le q$ and $p\ne q$. A state $p$ is {\em maximal\/} if there is no state $q$ such that $p < q$. 
  
  A poNFA $\A$ over $\Sigma$ with the state set $Q$ can be turned into a directed graph $G(\A)$ with the set of vertices $Q$ where a pair $(p,q) \in Q \times Q$ is an edge in $G(\A)$ if there is a transition from $p$ to $q$ in $\A$. For an alphabet $\Gamma \subseteq \Sigma$, we define the directed graph $G(\A,\Gamma)$ with the set of vertices $Q$ by considering only those transitions corresponding to letters in $\Gamma$. 
  Let $\Sigma(p)=\{a\in\Sigma \mid p\xrightarrow{\,a\,} p\}$ denote all letters labeling self-loops in state $p$. 
  We say that $\A$ satisfies the {\em unique maximal state\/} (UMS) property if, for every state $q$ of $\A$, $q$ is the unique maximal state of the connected component of $G(\A,\Sigma(q))$ containing $q$.

  To decide whether, for a given DFA, there exists an equivalent confluent poDFA, or, said differently, whether the given DFA recognizes a piecewise testable language, Kl\'ima and Pol\'ak~\cite{KlimaP13} check the confluence property of the minimal DFA while Trahtman~\cite{Trahtman2001} checks the UMS property.
  Both properties have their advantages and an effect on algorithmic complexity. While Trahtman's algorithm runs in quadratic time with respect to the number of states and in linear time with respect to the size of the alphabet, Kl\'ima and Pol\'ak's algorithm swaps the complexities; it runs in linear time with respect to the number of states and in quadratic time with respect to the size of the alphabet. From the computational complexity view, the problem is \NL-complete~\cite{ChoH91}. 
  
  Although the UMS property and the confluence property coincide on DFAs, they differ on NFAs. Consider the automaton $\B$ in Figure~\ref{fig1}. One can verify that $\B$ is confluent. However, $\B$ does not satisfy the UMS property. Indeed, for state $0$, the connected component is $G(\B,\Sigma(0))=G(\B,\{a,b\})=G(\B)$, \ie, the component is the whole automaton. Therefore, state $0$ is not a maximal state of this component; in particular, it is not the unique maximal state. 
  Furthermore, notice that $\B$ accepts a language that is not piecewise testable. Indeed, there is an infinite sequence of words $a,ab,aba,abab,\ldots$ where accepted words alternate with non-accepted words.
  The corresponding minimal DFA therefore must have a nontrivial cycle that contains at least one accepting and one non-accepting state. This, however, means that the minimal DFA is not partially ordered, and hence it cannot recognize a piecewise testable language. In summary, we find that the confluence property on its own is not suitable for characterizing piecewise testable languages for poNFAs. Intuitively, the problem comes from the presence of the forbidden pattern of rpoNFAs (see Figure~\ref{fig_bad_pattern}), which may fool confluence but not the UMS property.  
  
  \begin{figure}
    \centering
    \begin{tikzpicture}[baseline,->,auto,>=stealth,shorten >=1pt,node distance=2.5cm,thick,
      state/.style={circle,minimum size=7mm,thick,draw=black,initial text=},
      every node/.style={fill=white,font=\small}]
      \node[state,initial]    (0) {$0$};
      \node[state,accepting]  (1) [right of=0] {$1$};
      \node[state]            (2) [right of=1] {$2$};
      \path
        (0) edge node {$a$} (1)
        (1) edge node {$b$} (2)
        (0) edge[loop above] node {$a,b$} (0)
        (1) edge[loop above] node {$a$} (1)
        (2) edge[loop above] node {$a,b$} (2)
        ;
    \end{tikzpicture}
    \caption{A confluent automaton $\B$ accepting a non-piecewise testable language.}
    \label{fig1}
  \end{figure}
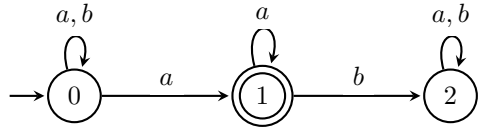
  
  Recall that ptNFAs are complete and confluent rpoNFAs. The next lemma gives an alternative characterization of ptNFAs as poNFAs that are complete and satisfy the UMS property, which was used as the primary definition of ptNFAs in our previous work~\cite{ptnfas}.
  
  \begin{lem}\label{lem3.1}
    Partially ordered NFAs that are complete and satisfy the UMS property are exactly ptNFAs.
  \end{lem}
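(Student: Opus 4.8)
The plan is to prove the two inclusions separately, but first to simplify the statement. Since a ptNFA is by definition a complete, confluent, self-loop-deterministic poNFA, and the other side consists of complete poNFAs with the UMS property, it suffices to show that \emph{for a complete poNFA $\A$, the UMS property is equivalent to the conjunction of self-loop-determinism and confluence}. I would begin with the cheap half of the backward direction: UMS implies self-loop-determinism. If the forbidden pattern occurred at some state $q$, i.e.\ $q\xrightarrow{a}q$ together with $q\xrightarrow{a}p$ for $p\neq q$, then $a\in\Sigma(q)$, so the edge $q\xrightarrow{a}p$ lies in $G(\A,\Sigma(q))$ and places $p$ in the same connected component as $q$ with $q<p$; this contradicts $q$ being the (unique) maximal state of that component. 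Hence no forbidden pattern exists.

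The technical engine for both remaining implications is a rewriting view. Fix an alphabet $\Gamma$ and, on a complete self-loop-deterministic poNFA, consider the relation ``read one non-looping $\Gamma$-letter'', i.e.\ $x\to y$ whenever $y\in\delta(x,c)$ for some $c\in\Gamma$ with $y\neq x$. Because $\A$ is partially ordered and finite, this relation is terminating, and its normal forms are exactly the states $r$ with $\Gamma\subseteq\Sigma(r)$ (call them \emph{$\Gamma$-stable}); here completeness forces $\delta(r,c)$ to be nonempty and self-loop-determinism forces it to equal $\{r\}$. The confluence property is precisely local confluence of $\to$ (a one-letter fork $s\leftarrow q\to t$ is joined by a word over the two letters used, which lies in $\Gamma^*$), so by Newman's Lemma $\to$ is globally confluent. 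A short induction on undirected paths then shows that any two states in the same connected component of $G(\A,\Gamma)$ have a common $\to$-descendant; consequently each such component contains a \emph{unique} $\Gamma$-stable state, reachable from every state of the component.

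For the forward direction (ptNFA $\Rightarrow$ UMS) this engine suffices immediately: given a state $q$, set $\Gamma=\Sigma(q)$. Then $q$ is itself $\Gamma$-stable, so it is the unique $\Gamma$-stable state of its component, hence its unique maximal state, which is exactly the UMS requirement. For the backward direction (UMS $\Rightarrow$ confluence) I would take $q$ with $s\in\delta(q,a)$, $t\in\delta(q,b)$, put $\Gamma=\{a,b\}$, and let $C$ be the $\Gamma$-component of $q$ (which also contains $s$ and $t$). The goal is to produce a unique $\le$-maximal state $m$ of $C$, $\Gamma$-reachable from all of $C$, and then merge through it: choosing $u$ with $m\in\delta(s,u)$, any $y\in\delta(t,u)$ (nonempty by completeness) lies in $C$ and thus $\Gamma$-reaches $m$ via some $u_y$, so $w=u\,u_y\in\{a,b\}^*$ satisfies $m\in\delta(s,w)\cap\delta(t,w)$ because $m$ is $\Gamma$-stable; this gives confluence.

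The main obstacle is the mismatch between what UMS guarantees and what confluence needs: UMS only controls components $G(\A,\Sigma(q'))$ for the self-loop alphabets, whereas here I must analyse a component for the fixed pair $\{a,b\}$. I expect to resolve this as follows. Every $\le$-maximal state $m_i$ of $C$ is $\Gamma$-stable (a maximal element has no proper outgoing $\Gamma$-transition, and completeness plus self-loop-determinism then force $\Gamma\subseteq\Sigma(m_i)$), so $\Gamma\subseteq\Sigma(m_i)$ and therefore $C\subseteq D_i$, where $D_i$ is the $\Sigma(m_i)$-component of $m_i$. UMS makes $m_i$ the unique maximal, hence the maximum, of the finite poset $D_i$ (in a finite poset a unique maximal element dominates everything). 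Applying this to two maximal states $m_1,m_2$ of $C$ gives $m_2\le m_1$ from $m_2\in D_1$ and $m_1\le m_2$ from $m_1\in D_2$, so antisymmetry of $\le$ yields $m_1=m_2$; thus $C$ has a unique maximal state $m$, and finiteness of $(C,\le)$ makes $m$ $\Gamma$-reachable from every state of $C$, completing the argument.
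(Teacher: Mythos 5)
Your proof is correct, and it splits into one half that matches the paper and one that does not. For UMS $\Rightarrow$ ptNFA your argument is essentially the paper's: derive self-loop-determinism from the forbidden pattern, take the maximal states reachable from $s$ and $t$ inside the $\{a,b\}$-component, use completeness to see that their self-loop alphabets contain $\{a,b\}$, and then play the UMS property of each against the other plus antisymmetry of $\le$ to identify them, before assembling the joining word; your $D_i$/finite-poset packaging is just a tidier version of the paper's two reachability arguments. (One caveat: read ``maximal state of $C$'' throughout as maximal for $\Gamma$-reachability, i.e.\ $\Gamma$-stable --- that is what the UMS property quantifies over and what normalization along proper $\Gamma$-transitions produces; since your uniqueness argument only uses $\Gamma$-stability of $m_1,m_2$, it applies verbatim to this larger set, so nothing breaks.) For ptNFA $\Rightarrow$ UMS you genuinely diverge: the paper argues by contradiction, choosing a $\le$-largest state $r$ from which two distinct maximal states of the offending component are reachable and ruling out three cases for where a confluence merge can land, whereas you observe that the proper-step relation $\to_\Gamma$ is terminating and locally confluent (which is exactly what the automaton's confluence supplies, the joining word lying in $\{a,b\}^*\subseteq\Gamma^*$), apply Newman's Lemma, identify normal forms with $\Gamma$-stable states via completeness and self-loop-determinism, and conclude that each weakly connected component has a unique normal form reached from everywhere --- precisely the UMS statement for $\Gamma=\Sigma(q)$. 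Your route buys a case-free and conceptually cleaner argument at the cost of importing Newman's Lemma; the paper's is elementary and self-contained, but its extremal case analysis (i)--(iii) is harder to verify.
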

  \begin{proof}
    First, we show that if $\A$ is partially ordered, complete, and satisfies the UMS property, then $\A$ is a ptNFA, \ie, that $\A$ is a complete and confluent rpoNFA. Indeed, the presence of the forbidden pattern of rpoNFAs violates the UMS property, \cf\ Figure~\ref{fig1}. Therefore, $\A$ is a complete rpoNFA. It remains to show that $\A$ is confluent. To this aim, let $r$ be a state of $\A$, and let $a$ and $b$ be letters of the alphabet of $\A$ ($a=b$ not excluded) such that $r \xrightarrow{a} s$, $r\xrightarrow{b} t$, and $s \neq t$. Let $s'$ and $t'$ be any maximal states reachable from $s$ and $t$ in the component $G(\A,\{a,b\})$, respectively. Since $\{a,b\}\subseteq \Sigma(s')$ by the definition of $s'$ and the assumption that $\A$ is complete, $t'$ is also in the component $G(\A,\Sigma(s'))$ containing $s'$. However, by the UMS property of $\A$, $s'$ is the unique maximal state of that component, and therefore there must be a path from $t'$ to $s'$ under $\Sigma(s')$. Similarly, interchanging $s'$ and $t'$, there must be a path from $s'$ to $t'$ under $\Sigma(t')$. Since $\A$ is partially ordered, the paths give that $t'\le s'$ and $s'\le t'$, \ie, $s'=t'$. To show that $\A$ is confluent, let $u\in\{a,b\}^*$ be a word labeling the path from $s$ to $s'$ in $G(\A,\{a,b\})$. Since $\A$ is complete, there is a path from state $t$ under $u$ to a state $p$. Let $v\in\{a,b\}^*$ be a word labeling a path from $p$ to a maximal state, $p'$, in $G(\A,\{a,b\})$; such a path exists, since $\A$ is complete and partially ordered. Let $w=uv$, and notice that $v$ can be read in $s'$. Then we have that $s\xrightarrow{w} s'$, $t\xrightarrow{w} p'$, and, from the above, since $s'$ and $t'$ were arbitrary, we have that $s'=p'$. Thus, $\A$ is confluent. 

    To prove the other direction, assume that $\A$ is a ptNFA, \ie, that $\A$ is a complete and confluent rpoNFA. Obviously, $\A$ is a complete poNFA, and it remains to show that $\A$ satisfies the UMS property. For the sake of contradiction, suppose that $\A$ does not satisfy the UMS property, \ie, that there is a state $q$ in $\A$ such that the component $G(\A,\Sigma(q))$ containing $q$ has at least two different maximal states. Let $r$ be a biggest state with respect to the partial order on states of that component such that at least two different maximal states, say $s\neq t$, are reachable from $r$ under the alphabet $\Sigma(q)$; such a state exists by assumption. Obviously, $r\notin \{s,t\}$. Let $s' \in \delta(r,a)$ and $t' \in \delta(r,b)$ be two different states on the paths from $r$ to $s$ and from $r$ to $t$, respectively, for some letters $a, b \in \Sigma(q) \setminus \Sigma(r)$; notice that such letters $a$ and $b$ exist because $\A$ is an rpoNFA, and that $a=b$ is not excluded. Then $r < s'$ and $r<t'$. Since $\A$ is confluent, there exists $r'$ such that $r'\in \delta(s',w) \cap \delta(t',w)$, for some word $w \in \{a,b\}^*$. Let $r''$ denote a maximal state that is reachable from $r'$ in $G(\A,\Sigma(q))$. Then there are three cases: 
    (i) if $r'' = s$, then $r < t'$ and both $s$ and $t$ are reachable from $t'$ under $\Sigma(q)$, which is a contradiction with the choice of $r$ as the biggest state of the component with this property;
    (ii) $r'' = t$ yields a contradiction with the choice of $r$ as in (i) by interchanging $t'$ and $s'$;
    (iii) similarly, $r'' \notin \{s,t\}$ yields a contradiction with the choice of $r$, since $r < s'$ and $r < t'$ and, \eg, $s$ and $r''$ are two different maximal states of the component $G(\A,\Sigma(q))$ containing $q$ that are both reachable from $s' > r$. 
    Thus, $\A$ satisfies the UMS property, which completes the proof.
  \end{proof}

\section{Complexity of Deciding Universality for ptNFAs}
  We now study the complexity of deciding universality for ptNFAs. As mentioned before, these automata are known to characterize the piecewise testable languages. Our results of this section in the context of existing results are summarized in Table~\ref{table_results}.
  \begin{table*}\centering
    \ra{1.1}
    \begin{tabular}{@{}llllllll@{}}\toprule
            & ST
            & $|\Sigma|=1$
            & $|\Sigma|\ge 2$
            & $\Sigma$ is growing\\
          \midrule
          DFA       &
                    & L-c    \cite{Jones75}
                    & \NL-c  \cite{Jones75}
                    & \NL-c  \cite{Jones75}\\
          ptNFA     & $1$
                    & \NL-c      (Thm.~\ref{thmMainNL})
                    & \coNP-c    (Thm.~\ref{0ptNFAhard})
                    & \PSpace-c  (Thm.~\ref{thmMain})\\
          rpoNFA    &
                    & \NL-c      \cite{mfcs16:mktmmt_full}
                    & \coNP-c    \cite{mfcs16:mktmmt_full}
                    & \PSpace-c  \cite{mfcs16:mktmmt_full}\\
          poNFA     & $\frac{3}{2}$
                    & \NL-c      \cite{mfcs16:mktmmt_full}
                    & \PSpace-c  \cite{mfcs16:mktmmt_full}
                    & \PSpace-c  \cite{AhoHU74} \\
          NFA       &
                    & \coNP-c    \cite{StockmeyerM73}
                    & \PSpace-c  \cite{AhoHU74} 
                    & \PSpace-c  \cite{AhoHU74} \\
      \bottomrule
    \end{tabular}
      \caption{Complexity of deciding universality; ST stands for the corresponding level of the Straubing-Th\'erien hierarchy; $\Sigma$ denotes the input alphabet.}
      \label{table_results}
  \end{table*}

  For unary alphabets, deciding universality for ptNFAs is solvable in polynomial time~\cite{mfcs16:mktmmt_full}. We now improve the result and show that the problem is \NL-complete. 
  
  \begin{thm}\label{thmMainNL}
    Deciding universality for ptNFAs over a unary alphabet is \NL-complete.
  \end{thm}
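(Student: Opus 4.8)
The plan is to prove both membership in \NL\ and \NL-hardness. Membership rests on a structural analysis of unary ptNFAs that bounds the length of a shortest non-universality witness polynomially, reducing the question to graph reachability; hardness is by a reduction from directed graph reachability, which is \NL-complete. I would begin with the key structural fact. Since a ptNFA satisfies the UMS property (Lemma~\ref{lem3.1}) and the alphabet is unary, for a state $p$ carrying a self-loop we have $\Sigma(p)=\{a\}$, so $G(\A,\Sigma(p))=G(\A)$; the UMS property then forces $p$ to be the unique maximal --- hence, in a finite poset, the greatest --- state of its connected component, so $p$ is maximal in the reachability order. Conversely, completeness together with partial order forces every maximal state to carry a self-loop. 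Thus, \emph{in a unary ptNFA a state has a self-loop if and only if it is maximal}. Consequently, any path ending in a non-maximal state uses no self-loops (the only repeats available), so it is simple and has length at most $\depth(\A)\le |Q|-1$.

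Writing $S_n=\delta(I,a^n)$, it follows that for $n\ge |Q|$ the set $S_n$ equals $M^\ast$, the set of maximal states reachable from $I$: every reachable maximal state is reached by a short path and then retained by its self-loop, while no non-maximal state survives past step $\depth(\A)$. Hence $S_n$ is constant for $n\ge|Q|$, and $\A$ is universal if and only if $S_n\cap F\ne\emptyset$ for every $n\in\{0,1,\dots,|Q|\}$. I would then turn this into an \NL\ procedure for the complement: non-universality holds iff there is some $n\le |Q|$ admitting no path of length exactly $n$ from $I$ to $F$. Such an $n$ can be guessed in logarithmic space, and ``there exists a path of length exactly $n$ from $I$ to $F$'' is decidable in \NL\ by guessing the path one state at a time while counting steps up to $n$; its negation therefore lies in $\coNL=\NL$. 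Composing the guess of $n$ with this check places non-universality in \NL, and since $\NL=\coNL$, universality is in \NL.

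For \NL-hardness I would reduce from reachability: given a digraph $G$ with $N$ vertices and vertices $s,t$, decide whether $t$ is reachable from $s$. I build a unary automaton as the disjoint union of two ptNFAs, using the fact recorded earlier that a disjoint union of ptNFAs is a ptNFA. The first, $\A_1$, is the deterministic chain $0\to 1\to\cdots\to (N-1)\to\top_1$ with a self-loop at $\top_1$, in which every state except $N-1$ is accepting; it is plainly a ptNFA and accepts $\Sigma^\ast\setminus\{a^{N-1}\}$. The second, $\A_2$, is the time-expanded graph on vertices $(v,i)$ for $v\in V$ and $0\le i\le N-1$, with edges $(v,i)\to(v,i+1)$ and $(u,i)\to(w,i+1)$ for each edge $(u,w)$ of $G$, augmented by a single non-accepting sink $\top_2$ carrying a self-loop with an edge into $\top_2$ from every last-layer vertex; its initial state is $(s,0)$ and its only accepting state is $(t,N-1)$. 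By the standard layered construction, $(t,N-1)$ is reachable from $(s,0)$ by a path of length exactly $N-1$ iff $t$ is reachable from $s$ in $G$, and $(t,N-1)$ can be entered only at step $N-1$; hence $L(\A_2)=\{a^{N-1}\}$ when $t$ is reachable and $L(\A_2)=\emptyset$ otherwise. Taking the union, $L(\A)=\Sigma^\ast$ exactly when $t$ is reachable, and $L(\A)=\Sigma^\ast\setminus\{a^{N-1}\}$ otherwise, so $\A$ is universal iff $t$ is reachable from $s$, and the construction is clearly computable in logarithmic space.

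The step I expect to be most delicate is verifying that $\A_2$ is a genuine ptNFA, i.e.\ that it is confluent (equivalently, satisfies the UMS property) in addition to being complete and partially ordered. The observation that makes this work is that every vertex of $\A_2$ reaches the single self-looping sink $\top_2$: any two transitions out of a common state lead to states that both reach $\top_2$, and since $\top_2$ carries a self-loop, the two branches can be padded to meet at $\top_2$ at a common step, which is exactly confluence; moreover $\top_2$ is then the unique maximal state of the whole component, yielding the UMS property. This ``route everything into one self-looping sink'' device is precisely what lets a reachability instance be encoded without violating the confluence constraint that separates ptNFAs from general poNFAs, and it is the part of the argument that I would take most care to justify in full.
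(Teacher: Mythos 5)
Your proof is correct, and both halves hold up under scrutiny: the structural claim that in a unary ptNFA self-loops occur exactly at maximal states is right (self-loop-determinism plus a unary alphabet gives one direction, completeness plus partial order the other), the resulting stabilization of $\delta(I,a^n)$ for $n\ge|Q|$ is sound, and your layered automaton $\A_2$ is indeed a ptNFA for exactly the reason you flag --- every state reaches the single self-looping sink, which settles confluence and the UMS property at once. The overall strategy matches the paper's: both proofs engineer the situation so that $a^{m}$ for one specific $m$ is the only candidate non-accepted word, and its acceptance encodes reachability. The differences are in the execution. For membership the paper simply cites the known \NL\ bound for unary poNFAs, whereas you give a self-contained argument exploiting the stronger ptNFA structure; your version is more informative but proves less than the cited result. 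For hardness the paper reduces from \emph{DAG}-reachability and builds a single automaton of linear size: the graph's states are all accepting, a non-accepting chain $f_1,\dots,f_{n-1}$ detours into $t$, so words accepted away from $t$ have length at most $n-2$, words accepted at $t$ without using $G$'s edges have length at least $n$, and $a^{n-1}$ is accepted iff $t$ is reachable in $G$. You instead reduce from general digraph reachability, which forces the quadratic-size time-expanded graph, and you cleanly separate concerns by taking a disjoint union with a chain accepting $\Sigma^*\setminus\{a^{N-1}\}$, leaning on the paper's remark that disjoint unions of ptNFAs are ptNFAs. Your decomposition is arguably easier to verify (each piece has an obvious language and the confluence argument localizes to one sink), at the cost of a larger construction; the paper's is tighter but requires the acyclicity of the source instance to bound the lengths of words accepted in the $G$-part.
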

  \begin{proof}
    The problem is in \NL even for unary poNFAs~\cite{mfcs16:mktmmt_full}. 
    To prove hardness, we reduce the DAG-reachability problem~\cite{Jones75}. Let $G$ be a directed acyclic graph with $n$ nodes, and let $s$ and $t$ be two nodes of $G$ such that there is no edge from $t$. We define a ptNFA $\A$ as follows. With each node of $G$, we associate a state in $\A$. Whenever there is an edge from $i$ to $j$ in $G$, we add a transition $i\xrightarrow{a} j$ to $\A$. The initial state of $\A$ is $s$ and all states are accepting. The automaton is obviously an rpoNFA, because there  are no (self-)loops. To make it complete and confluent, we add $n-1$ new non-accepting states $f_1,\ldots,f_{n-1}$ together with transitions $f_i \xrightarrow{a} f_{i+1}$, for $i=1,\ldots,n-2$, $f_{n-1} \xrightarrow{a} t$, $t\xrightarrow{a} t$, and, for every state $q\notin\{t,f_1,\ldots,f_{n-1}\}$, we add the transition $q\xrightarrow{a} f_1$. The resulting automaton is a ptNFA.
    
    We now show that $\A$ is universal if and only if $t$ is reachable from $s$ in $G$. If $t$ is reachable from $s$ in $G$, then $L(\A)=\{a\}^*$, since $t$ is reachable from $s$ via states corresponding to nodes of $G$, which are all accepting in $\A$. If $t$ is not reachable from $s$ in $G$, then $t$ is reachable from $s$ in $\A$ via a path $s \xrightarrow{a^k} q \xrightarrow{a} f_1\xrightarrow{a} f_2\xrightarrow{a} \ldots \xrightarrow{a} f_{n-1} \xrightarrow{a} t$ for any state $q$ corresponding to a node of $G$ different from $t$ reachable from $s$ in $G$. We show that $a^{n-1}$ does not belong to $L(\A)$. The shortest path from state $s$ to state $t$ in $\A$ is of length $n$ for $q=s$. Thus, any word accepted in $t$ is of length at least $n$. On the other hand, every word accepted in a state corresponding to a node of $G$ different from $t$ is of length at most $n-2$, since there are $n-1$ such states and $\A$ is acyclic on those states. This gives that $a^{n-1}$ is not accepted by $\A$, and hence $L(\A)$ is not universal.
  \end{proof}

  We next show that if the alphabet is fixed, deciding universality for ptNFAs is \coNP-complete, and that hardness holds even if restricted to binary alphabets. Our proof is based on a construction that shows non-equivalence of regular expressions with union and concatenation to be \NP-complete, even if one of the expressions has the form $\Sigma^n$ for some fixed $n$~\cite{Hunt73,StockmeyerM73}.
  
  \begin{thm}\label{0ptNFAhard}
    Deciding universality for ptNFAs over a fixed alphabet is \coNP-complete even if the alphabet is binary.
  \end{thm}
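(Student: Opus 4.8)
The plan has two parts. For membership in \coNP\ I would not argue from scratch: since every ptNFA is in particular a complete, confluent rpoNFA, and deciding universality for rpoNFAs over a fixed alphabet is already known to be in \coNP~\cite{mfcs16:mktmmt_full}, the upper bound is inherited immediately. All the work is in the lower bound, and I would establish \coNP-hardness by reducing from the \coNP-complete problem of deciding, for a regular expression $E$ over $\Sigma=\{a,b\}$ built only from union and concatenation with $L(E)\subseteq\Sigma^n$, whether $L(E)=\Sigma^n$; this remains \coNP-complete for binary $\Sigma$~\cite{Hunt73,StockmeyerM73}. Writing $E=\bigcup_{i=1}^m P_i$ with each $P_i=c_{i,1}c_{i,2}\cdots c_{i,n}$ a generalized word ($c_{i,j}\in\{a,b,\Sigma\}$), the goal is to build, in logarithmic space, a ptNFA $\A$ with $L(\A)=(\Sigma^*\setminus\Sigma^n)\cup L(E)$, so that $\A$ is universal if and only if $\Sigma^n\setminus L(E)=\emptyset$, that is, $L(E)=\Sigma^n$.

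The construction I would use is strictly layered. Introduce a single initial state $q_0$ (layer $0$), states $q^i_j$ (layer $j$) for each disjunct $i$ and position $1\le j\le n$, a chain of ``dead'' states $d_1,\dots,d_n$ (with $d_j$ on layer $j$), and one sink state $\top$. Every letter advances exactly one layer and there are no self-loops below $\top$: from $q^i_{j-1}$, reading a letter $x$ leads to $q^i_j$ if $x\in c_{i,j}$ and to $d_j$ otherwise; $q_0$ additionally always goes to $d_1$; each $d_j$ goes to $d_{j+1}$; and all layer-$n$ states $q^i_n$ together with $d_n$ go to $\top$, which carries self-loops on $a$ and $b$. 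Declaring $q_0$, every $d_j$ with $j<n$, every $q^i_n$, and $\top$ accepting, while $d_n$ is rejecting, yields the intended behaviour: a word of length $<n$ is accepted through the accepting dead chain, a word of length $>n$ reaches the accepting sink, and a word $w$ of length exactly $n$ lands on layer $n$ in $\{q^i_n : w\text{ matches }P_i\}\cup\{d_n\}$, hence is accepted precisely when some disjunct matches, i.e.\ when $w\in L(E)$. This gives $L(\A)=(\Sigma^*\setminus\Sigma^n)\cup L(E)$ with $O(mn)$ states.

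It then remains to check that $\A$ is a ptNFA. Being partially ordered and self-loop-deterministic is immediate from the layering, since the layers strictly increase and $\top$ is the only state with self-loops, so the forbidden pattern of rpoNFAs cannot occur; completeness is visible from the transition rules, as both letters are defined everywhere. The only property needing an argument is confluence, and this is exactly where the acyclic design pays off: $\top$ is the unique maximal state and is reachable from every state under \emph{any} sufficiently long word, because each letter advances a layer and layer $n$ feeds into $\top$. Thus for any state $q$ and letters $a,b$ with $s\in\delta(q,a)$ and $t\in\delta(q,b)$, taking any $w\in\{a,b\}^*$ of length at least $n+1$ gives $\top\in\delta(s,w)\cap\delta(t,w)$; note this also covers the delicate case $a=b$, where $w$ must lie in $\{a\}^*$, precisely because a word of $a$'s alone still climbs the layers into $\top$. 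Equivalently, one may verify the UMS property directly and invoke Lemma~\ref{lem3.1}. Hence $\A$ is a ptNFA, and its universality decides $L(E)=\Sigma^n$, giving \coNP-hardness over a binary alphabet; together with the upper bound this proves \coNP-completeness.

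The step I expect to be the main obstacle is the simultaneous satisfaction of confluence, completeness, and self-loop-determinism while keeping the automaton of polynomial size — this is exactly what distinguishes the ptNFA construction from the simpler rpoNFA one, where confluence is not required. The resolution is the deliberate choice to forbid all self-loops except at a single accepting sink and to funnel every computation (mismatching branches via the dead chain, and all overlong inputs) into that sink; this makes confluence essentially automatic and, in particular, disposes of the troublesome single-letter case of the confluence condition.
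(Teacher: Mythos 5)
Your proposal is correct and follows essentially the same route as the paper: membership is inherited from rpoNFAs, and hardness is shown via the union-of-generalized-words form of DNF validity (your $P_i$ are exactly the paper's $\beta_i$), using a layered automaton in which mismatching branches and all inputs of the wrong length are funnelled into a single accepting absorbing sink so that completeness plus a unique maximal state reachable from everywhere yields confluence. The only cosmetic difference is that the paper keeps two separate chains (an accepting $\alpha$-chain for words of length $\neq n$ and a non-accepting completion chain $r_1,\ldots,r_n$) where you merge both roles into one dead chain; the correctness argument is unchanged.
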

  \begin{proof}
    Membership follows from the membership for rpoNFAs~\cite[Corollary~24]{mfcs16:mktmmt_full}.
    To show hardness, we reduce DNF validity\footnote{\label{ft1}A \emph{(boolean) formula} is built from propositional variables; operators conjunction, disjunction, and negation; and parentheses. A formula is \emph{valid} if it is true for every assignment of $1$ ({\it true}) and $0$ ({\it false}) to its variables. A \emph{literal} is a variable or its negation. A formula is in \emph{disjunctive normal form} (DNF) if it is a disjunction of one or more conjunctions of literals; \eg, $\varphi = (x\land y \land z) \lor (\neg x\land y \land z)$ is a formula in DNF consisting of two conjunctions $x\land y \land z$ and $\neg x\land y \land z$. Given a formula in DNF, the DNF validity problem asks whether the formula is valid. The formula $\varphi$ is not valid; it is not true for, \eg, $(x,y,z)=(0,1,0)$.}. 
    Let $U=\{x_1,\ldots,x_n\}$ be a set of variables and $\varphi = \varphi_1 \lor \cdots \lor \varphi_m$ be a formula in DNF, where every $\varphi_i$ is a conjunction of literals. Without loss of generality, we may assume that no $\varphi_i$ contains both $x$ and $\neg x$. To illustrate our construction, we use the formula $(x\land y) \lor (\neg x \land z)$ over three variables $x,y,z$. For every $i=1,\ldots,m$, we define a regular expression $\beta_i = \beta_{i,1}\ldots\beta_{i,n}$, where 
    \[
      \beta_{i,j} = \left\{
        \begin{array}{ll}
          0+1 & \text{ if neither } x_j \text{ nor } \neg x_j \text{ appear in } \varphi_i\\
          0   & \text{ if } \neg x_j \text{ appears in } \varphi_i\\
          1   & \text{ if } x_j \text{ appears in } \varphi_i
        \end{array}
        \right.
    \]
    for $j=1,\ldots,n$. For our formula $(x\land y) \lor (\neg x \land z)$, we obtain $\beta_1=11(0+1)$ and $\beta_2=0(0+1)1$. We now define a regular expression $\beta = \sum_{i=1}^{m} \beta_{i}$ as the alternative of all expressions $\beta_{i}$. Then $w\in L(\beta)$ if and only if $w$ corresponds to a truth assignment that satisfies some $\varphi_i$. That is, $L(\beta) = \{0,1\}^n$ if and only if $\varphi$ is valid. By construction, the length of each $\beta_{i}$ is proportional to $n$.

    We now construct a ptNFA $\M$ as follows (the transitions are the minimal sets satisfying the definitions). The initial state of $\M$ is state $0$. For every $\beta_{i}$, we construct a deterministic path consisting of $n+1$ states $\{q_{i,0},q_{i,1},\ldots,q_{i,n}\}$ and transitions $q_{i,\ell} \xrightarrow{\beta_{i,\ell+1}} q_{i,\ell+1}$, for $0\le \ell < n$, $q_{i,0} = 0$, and $q_{i,n}$ is accepting, where, for $\beta_{i,\ell+1}=0+1$, $q_{i,\ell} \xrightarrow{\beta_{i,\ell+1}} q_{i,\ell+1}$ denotes two transitions: $q_{i,\ell} \xrightarrow{~0~} q_{i,\ell+1}$ and $q_{i,\ell}\xrightarrow{~1~} q_{i,\ell+1}$. This ensures that $\M$ accepts $L(\beta)$. The overall construction for our formula $(x\land y) \lor (\neg x \land z)$ is illustrated in Figure~\ref{exampleProof32}.
    \begin{figure}
      \centering
        \includegraphics[angle=0,scale=.7]{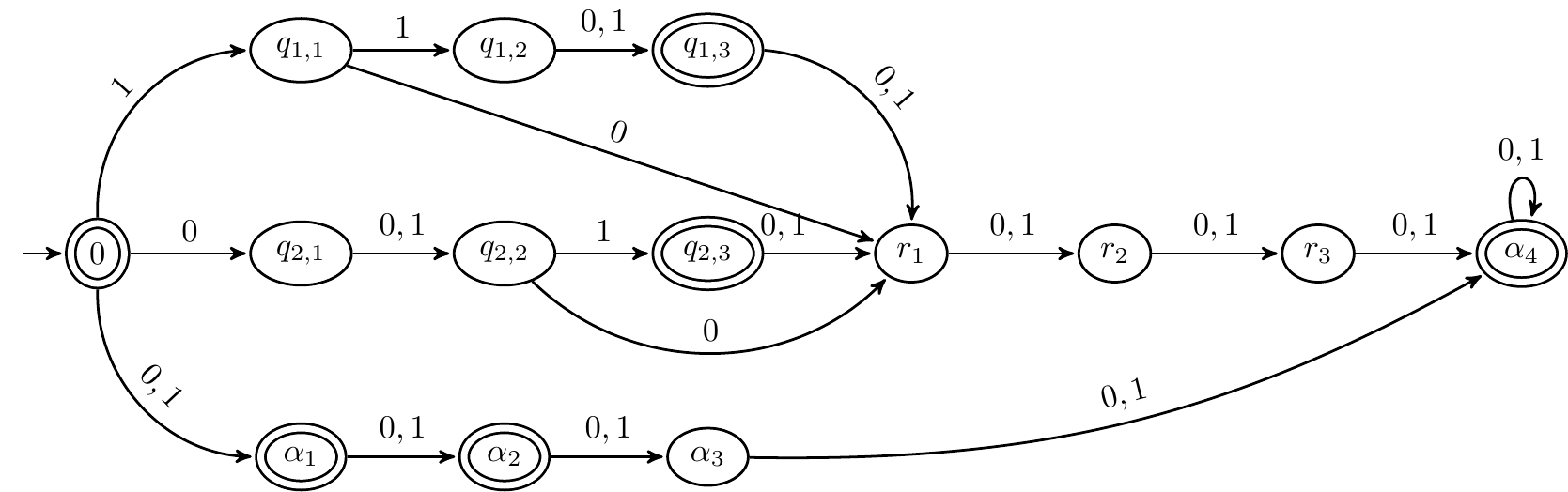}
        \caption{The ptNFA for the formula $(x\land y) \lor (\neg x \land z)$ from the proof of Theorem~\ref{0ptNFAhard}.}
        \label{exampleProof32}
    \end{figure} 

    To accept all words of length different from $n$, we add $n+1$ states $\{\alpha_1,\alpha_2,\ldots,\alpha_{n+1}\}$ and transitions $\alpha_\ell \xrightarrow{~a~} \alpha_{\ell+1}$, for $0\le \ell < n+1$, and $\alpha_{n+1}\xrightarrow{~a~} \alpha_{n+1}$, where $a\in \{0,1\}$, $\alpha_0 = 0$, and $\{\alpha_0,\alpha_1,\ldots\alpha_{n+1}\}\setminus\{\alpha_n\}$ are accepting.
    
    Finally, to make the automaton complete and confluent, we add $n$ non-accepting states $\{r_1,\ldots,r_n\}$ and transitions $r_i\xrightarrow{~a~} r_{i+1}$, for $1\le i<n$, and $r_n \xrightarrow{~a~} \alpha_{n+1}$, where $a\in \{0,1\}$. These states are used to complete $\M$ by adding a transition from every state $q$ to $r_1$ under $a$ if $a$ is not defined in $q$. Then completeness and the fact that state $\alpha_{n+1}$ is reachable from every state implies confluence of the automaton. Notice that the states $r_1,\ldots,r_n$ and the corresponding transitions do not accept any word of length $n$ that does not belong to $L(\beta)$. 
    
    Altogether, the accepting states of $\M$ are states $\{0\}\cup \{q_{1,n},\ldots,q_{m,n}\}\cup \{\alpha_1,\ldots\alpha_{n+1}\}\setminus\{\alpha_n\}$. Thus, $\M$ is a ptNFA accepting the language $L(\M) = L(\beta) \cup \{w \in \{0,1\}^* \mid |w| \neq n\}$, and hence $L(\M)=\{0,1\}^*$ if and only if $L(\beta) = \{0,1\}^n$, which is if and only if $\varphi$ is valid.
  \end{proof}
  
  If the alphabet may grow polynomially with the number of states, there are basically two approaches how to tackle the universality problem for ptNFAs to show \PSpace-hardness: (1) to use a reduction from Kozen's DFA-union-universality problem~\cite{Kozen77}, or (2) to use a reduction from the word problem of polynomially-space-bounded Turing machines \`a la Aho, Hopcroft and Ullman~\cite{AhoHU74}. We discuss these two options in the following two subsections. In the first subsection, we show that the partially ordered variant of the DFA-union-universality problem is easier than its general counterpart, and hence not suitable to prove \PSpace-hardness of deciding universality for ptNFAs. Therefore, in the second subsection, we use the reduction from the word problem of a polynomially-space-bounded Turing machine to prove the result.
  
\subsection{Partially Ordered DFA Union Universality}
  To use the union-universality problem for our purposes, we would need to use partially ordered DFAs (poDFAs) rather than general DFAs to ensure that the union of the DFAs is partially ordered. We now show that the difficulty of the DFA-union-universality problem comes from nontrivial cycles, and hence its partially-ordered variant is easier unless \PSpace~=~\NP.

  We consider the complemented equivalent of the problem -- the {\em DFA-intersection emptiness}. It asks, given $n$ DFAs, whether the intersection of their languages is empty. Clearly, the union of $n$ DFA languages is universal if and only if the intersection of their complements is empty.
  \begin{thm}\label{thm4}
    The intersection-emptiness problem for poDFAs/poNFAs is \coNP-complete. It is \coNP-hard even if the alphabet is binary.\footnote{We thank G. Zetzsche and O. Kl\'ima, who suggested the membership proof.}
  \end{thm}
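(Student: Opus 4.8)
The plan is to prove the two bounds separately: \coNP-membership for poNFAs and \coNP-hardness for poDFAs over a binary alphabet. Since every poDFA is in particular a poNFA, the two bounds meet and yield \coNP-completeness for both classes simultaneously.

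For membership I would show that whenever the intersection of $n$ poNFAs $A_1,\ldots,A_n$ is nonempty, it already contains a word of length at most $\sum_{i=1}^{n}(|Q_i|-1)$, which is linear in the total input size. This places nonemptiness in \NP, since one can guess such a short word and verify acceptance by each $A_i$ in polynomial time; emptiness is therefore in \coNP. To obtain the length bound, take a shortest word $w=w_1\cdots w_\ell$ in the intersection and, for each $i$, fix an accepting run $\rho_i$ of $A_i$ on $w$. Because $A_i$ is partially ordered, its reachability relation is antisymmetric, so the only cycles are self-loops and the states of $\rho_i$ form a nondecreasing chain $\rho_i(0)\le\rho_i(1)\le\cdots\le\rho_i(\ell)$ in which an equality $\rho_i(j)=\rho_i(j+1)$ means that reading $w_{j+1}$ traverses a self-loop. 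The crucial observation is that no position can be a self-loop \emph{simultaneously} in every component: if $\rho_i(j)=\rho_i(j+1)$ for all $i$, then deleting the letter $w_{j+1}$ leaves each $\rho_i$ an accepting run on the shorter word, contradicting minimality of $w$. Hence every one of the $\ell$ steps strictly increases at least one coordinate, and since coordinate $i$ can strictly increase at most $|Q_i|-1$ times along a chain, we get $\ell\le\sum_{i=1}^{n}(|Q_i|-1)$.

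For hardness I would reduce \textsc{Sat} to the complement of emptiness, i.e.\ to nonemptiness of the intersection. Given a CNF formula $\varphi=C_1\wedge\cdots\wedge C_m$ over variables $x_1,\ldots,x_n$, I identify a truth assignment with a word in $\{0,1\}^n$ and, for each clause $C_j$, construct a DFA $A_j$ accepting exactly those words of length $n$ that satisfy $C_j$, while rejecting all words of length $\neq n$. Such an $A_j$ needs only $O(n)$ states: it scans the input position by position, remembering solely whether some literal of $C_j$ has already been made true. Since the position counter strictly increases with every letter read, the automaton has no nontrivial cycles apart from the self-loops of its dead sink, so $A_j$ is a poDFA. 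Then $\bigcap_{j} L(A_j)$ is precisely the set of length-$n$ satisfying assignments of $\varphi$, which is nonempty iff $\varphi$ is satisfiable. The reduction is polynomial and uses only the binary alphabet $\{0,1\}$, establishing \coNP-hardness for poDFAs even over a binary alphabet.

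I expect the main obstacle to be the membership direction, specifically the justification that a shortest witness admits no globally synchronized self-loop step: one has to combine the separate runs of the individual poNFAs and check that removing a letter that self-loops in every component preserves an accepting run in each component at once. Once this local cut-and-paste is in place, the per-coordinate chain bound is routine, as is the final guess-and-verify argument. On the hardness side the only delicate point is confirming that the per-clause automaton is genuinely partially ordered and of polynomial size, which follows from the general fact that the minimal DFA of a finite language is acyclic except for the self-loops of its dead state.
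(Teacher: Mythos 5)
Your proposal is correct and follows essentially the same route as the paper: membership via the observation that self-loop steps can be excised from a witness, yielding a witness of length bounded by the sum of the depths (you phrase it as a shortest-word minimality argument, the paper as marking the state-changing letters, but these are the same bound), and hardness via acyclic binary DFAs encoding a Boolean formula. Your reduction from CNF-satisfiability with one poDFA per clause is just the De Morgan dual of the paper's reduction from DNF-validity with one complemented poDFA per conjunct, so the constructions coincide up to negating the formula.
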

  \begin{proof}
    We show membership in \coNP for poNFAs and \coNP-hardness for poDFAs.

    Let $\A_1,\ldots,\A_n$ be poNFAs and assume that $w\in \bigcap_{i=1}^{n} L(\A_i)$. Let $k_i$ be the depth of $\A_i$ and consider a fixed path of $\A_i$ accepting $w$. Along this path, we mark (at most) $k_i$ letters of $w$ that cause the change of state of $\A_i$. Doing this for all of the $n$ automata, we mark at most $k_1+k_2+\cdots+k_n$ letters in $w$. Since all non-marked letters of $w$ correspond to self-loops in the automata, we can remove them to obtain a subword $w'$ of $w$ accepted by every $\A_i$ that is of length at most $\sum_{i=1}^{n} k_i$, which is polynomial in the size of the input. Thus, if the intersection is nonempty, there is a polynomial certificate. Therefore, the intersection-emptiness problem is in \coNP.
    
    To show hardness, we reduce DNF validity. Let $\varphi$ be a formula in DNF with $n$ variables and $m$ conjunctions of literals. For $i=1,\ldots,m$, we define an expression $\beta_i$ as in the proof of Theorem~\ref{0ptNFAhard}. Since every $\beta_i$ represents a finite language, we construct a poDFA recognizing $L(\beta_i)$ and take its complement, denoted by $\A_i$, which is also a poDFA. Then $\{0,1\}^n \cap \bigcap_{i=1}^{m} L(\A_i) = \emptyset$ if and only if $\bigcup_{i=1}^{m} L(\beta_{i}) = \{0,1\}^n$, which is if and only if $\varphi$ is valid.
  \end{proof}

  Thus, we have the following corollary.
  \begin{cor}
    The poDFA-union-universality problem is \coNP-complete.
    \qed
  \end{cor}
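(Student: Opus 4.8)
The plan is to obtain the corollary directly from Theorem~\ref{thm4} via the complementation observation stated just before it: a union of $n$ languages is universal exactly when the intersection of their complements is empty. The one thing to verify is that complementation stays within the class of poDFAs. Under the definition used here every DFA is complete and deterministic, so complementing a poDFA $\A$ amounts to simply swapping its accepting and non-accepting states, leaving the transition function --- and hence the reachability relation and the induced partial order --- untouched. Thus the complement of a poDFA is again a poDFA recognizing the complementary language.

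With this in hand, I would set up a logspace reduction in both directions between poDFA-union-universality and poDFA-intersection-emptiness. Given poDFAs $\A_1,\ldots,\A_n$, let $\B_i$ be the complement poDFA of $\A_i$; then $\bigcup_{i=1}^{n} L(\A_i)=\Sigma^*$ if and only if $\bigcap_{i=1}^{n} L(\B_i)=\emptyset$. Each $\B_i$ is computable from $\A_i$ in logarithmic space, so this gives a reduction from union-universality to intersection-emptiness, and the same (self-inverse) transformation applied to an intersection-emptiness instance yields the reverse reduction. Membership in \coNP follows because the forward reduction lands in the \coNP class established in Theorem~\ref{thm4}, and \coNP-hardness --- already for a binary alphabet --- follows because the reverse reduction turns the hard instances of Theorem~\ref{thm4} into union-universality instances.

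I expect no genuine obstacle here: the only substantive point is that complementation preserves partial-orderedness, which is immediate because the underlying transition graph is not modified. In effect the corollary is just Theorem~\ref{thm4} read in its complemented form.
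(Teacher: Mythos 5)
Your proposal is correct and is essentially the paper's own argument: the corollary is stated with an immediate \qed, relying on the observation made just before Theorem~\ref{thm4} that union-universality and intersection-emptiness of the complements coincide, with complementation of a (complete, deterministic) poDFA being a logspace state-swap that preserves the partial order. You merely spell out the detail the paper leaves implicit.
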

  
  We point out that the previous result cannot be further strengthened to the case of unary alphabets, since the intersection-emptiness problem for unary poNFAs can be solved in polynomial time. Indeed, if there is a word in the intersection $\bigcap_{i=1}^{n} L(\A_i)$, then it is a prefix of the word $a^{k_1+\cdots+k_n}$, where the $k_i$'s are as in the proof of Theorem~\ref{thm4}, and this word is of polynomial length.

\subsection{Complexity of Deciding Universality for ptNFAs}
  In this section, we show that the universality problem for ptNFAs is \PSpace-complete if the alphabet may grow polynomially with the number of states of the automaton. The proof is a novel and nontrivial extension of our recent proof showing a similar result for self-loop-deterministic poNFAs~\cite{mfcs16:mktmmt_full}. 
  
  To prove our result, we use the idea of Aho, Hopcroft and Ullman~\cite{AhoHU74} to take, for a polynomial $p$, a $p$-space-bounded deterministic Turing machine $\M$ together with an input $x$, and to encode the computations of $\M$ on $x$ as words over some alphabet $\Sigma$, where $\Sigma$ depends on $\M$. One then constructs a regular expression (or an NFA) $R_x$ representing all computations that do not encode an accepting run of $\M$ on $x$. That is, $L(R_x)=\Sigma^*$ if and only if $\M$ does not accept $x$~\cite{AhoHU74}.

  The form of $R_x$ is relatively simple, consisting of a union of expressions of the form
  \begin{equation}\label{eq_part}
    \Sigma^* \, K \, \Sigma^*
  \end{equation}
  where $K$ is a finite language of words of length $O(p(|x|))$. 
  Intuitively, $K$ encodes possible ``violations'' of a correct computation of $\M$ on $x$, such as the initial configuration does not contain the input $x$, or the step from a configuration to the next one does not correspond to a rule of $\M$. These checks are local, involving at most two consecutive configurations of $\M$, each of polynomial size. Hence they can be encoded as the finite language $K$. 
  The initial segment $\Sigma^*$ of \eqref{eq_part} nondeterministically guesses a position of the computation where a violation encoded by $K$ occurs, and the last $\Sigma^*$ reads the rest of the word if the violation check was successful.
  However, this idea cannot be directly used to prove our result for two reasons:
  \begin{enumerate}
    \item\label{probi} Although expression~\eqref{eq_part} can easily be translated to a poNFA, it is not true for ptNFAs because the translation of the leading part $\Sigma^* K$ may not be self-loop-deterministic; 
    \item\label{probii} The constructed poNFA may be incomplete and its ``standard'' completion by adding the missing transitions to a new sink state may violate confluence.
  \end{enumerate}
 
  We addressed problem \eqref{probi} in our previous work~\cite{mfcs16:mktmmt_full} where we proved \PSpace-hardness of deciding universality for rpoNFAs. However, the constructed rpoNFA is not a ptNFA, and because of different expressive powers, it is not always possible to complete an rpoNFA to obtain a ptNFA. To solve problem \eqref{probii}, we use an observation that the length of the encoding of a computation of $\M$ on $x$ is at most exponential with respect to the size of $\M$ and $x$, and hence we could use an exponentially long word to make the automaton confluent in a similar way as we did in our previous constructions above. Since such a word cannot be constructed by a polynomial-time reduction, we need to encode it by a ptNFA, which exists and is of polynomial size as we show in Lemma~\ref{exprponfas} -- there we construct, in polynomial time, a ptNFA $\A_{n,n}$ that accepts all words but a single one, $W_{n,n}$, of exponential length. The automaton consists of two parts (the upper and lower parts in Figure~\ref{ptnfa3}). The upper part (together with state {\em max}) is the rpoNFA we used to tackle problem~\eqref{probi} in our previous work~\cite{mfcs16:mktmmt_full}. The lower part not only makes the rpoNFA complete and confluent, but it also encodes an exponentially long word that we use to tackle problem~\eqref{probii}.
  
  In our proof we do not get the same language as defined by the regular expression $R_x$, but the language of the constructed ptNFA is universal if and only if the language of $R_x$ is, which suffices for our reduction.

  It is not hard to see that an automaton that is complete and has a single maximal state reachable from every state must also be confluent. We use this fact to simplify the proof.
  
  Thus, the first step of the construction is to construct the ptNFA $\A_{n,n}$, which accepts all words but the single word $W_{n,n}$ of exponential length. This automaton is the core of the proof. The considered language is the same as in our previous work~\cite[Lemma~17]{mfcs16:mktmmt_full}, where the constructed automaton is an rpoNFA that is not a ptNFA. As already pointed out above, that rpoNFA is formed by the upper part of Figure~\ref{ptnfa3} together with state {\em max}, \cf~Corollary~\ref{cor_nonacc}. In the following lemma, we present the basic idea how to transform this rpoNFA into a ptNFA. This idea is used and further developed in the proof of Theorem~\ref{thmMain}. On an intuitive level, the states of the lower part of Figure~\ref{ptnfa3}, except for state {\em max}, are used to make the rpoNFA (as well as the rpoNFAs constructed in the proof of Theorem~\ref{thmMain}) complete and confluent, and hence a ptNFA. Recall that because of a weaker expressivity of ptNFAs, this transformation is not possible in general.
  
  \begin{lem}\label{exprponfas}
    For all integers $k,n\geq 1$, there exists a ptNFA $\A_{k,n}$ over an $n$-letter alphabet with $n(2k+1)+1$ states, such that the unique non-accepted word of $\A_{k,n}$ is of length $\binom{k+n}{k}-1$.
  \end{lem}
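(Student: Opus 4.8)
The plan is to build $\A_{k,n}$ around an explicit \emph{witness word} $W_{k,n}$, defined by recursion, which will turn out to be the unique rejected word. First I would fix the order $a_1<\dots<a_n$ on $\Sigma_n=\{a_1,\dots,a_n\}$ and set $W_{k,1}=a_1^{\,k}$, $W_{0,n}=\eps$, and, for $k,n\ge 1$,
\[
  W_{k,n} \;=\; W_{k,n-1}\, a_n\, W_{k-1,n}.
\]
Then I would prove $|W_{k,n}|=\binom{k+n}{k}-1$ by induction on $k+n$: the recursion gives $|W_{k,n}| = |W_{k,n-1}| + 1 + |W_{k-1,n}|$, and Pascal's rule $\binom{k+n}{k}=\binom{k+n-1}{k}+\binom{k+n-1}{k-1}$ turns the two induction hypotheses into the claim. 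Unfolding the recursion exhibits the self-similar block structure $W_{k,n}=W_{k,n-1}\,a_n\,W_{k-1,n-1}\,a_n\cdots a_n\,W_{0,n-1}$, i.e.\ $k+1$ blocks over $\Sigma_{n-1}$ separated by the $k$ occurrences of $a_n$. This self-similarity is exactly what will let a polynomial-size automaton \emph{describe} an exponentially long word.

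Second, I would assemble the automaton in two layers, as in Figure~\ref{ptnfa3}. The upper part together with the sink state \emph{max} is the self-loop-deterministic poNFA from our earlier work~\cite{mfcs16:mktmmt_full} (cf.\ Corollary~\ref{cor_nonacc}) whose unique rejected word over $\Sigma_n$ is $W_{k,n}$: using one gadget per letter it tracks the position inside the self-similar structure of $W_{k,n}$ and jumps to the accepting state \emph{max} as soon as it detects any inconsistency (a wrong letter or a wrong length), so that every $w\neq W_{k,n}$ has an accepting run while $W_{k,n}$ has none. The genuinely new ingredient is the lower part: a second per-letter layer whose role is threefold, namely to supply the transitions missing from the upper part (completeness), to funnel every state to \emph{max} (confluence), and to re-encode $W_{k,n}$ so that the freshly added transitions do not create an accepting run on $W_{k,n}$. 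Counting shows each letter contributes $2k+1$ states split across the two layers, plus the single global state \emph{max}, giving $n(2k+1)+1$.

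Third, I would verify that the result is a ptNFA with $L(\A_{k,n})=\Sigma_n^*\setminus\{W_{k,n}\}$. Partial order and self-loop-determinism follow by inspecting the transition structure (acyclic up to self-loops) and checking that no state carries, under one letter, both a self-loop and a non-self-loop edge, i.e.\ that the forbidden pattern of Figure~\ref{fig_bad_pattern} (right) is absent. Completeness is by construction. For confluence I would invoke the observation recorded just before the lemma---a complete automaton with a single maximal state reachable from every state is confluent---so it suffices to arrange that \emph{max} is the unique maximal state reachable from all states; then $\A_{k,n}$ is a complete confluent rpoNFA, hence a ptNFA (equivalently, one may invoke Lemma~\ref{lem3.1}). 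Finally I would confirm that the run on $W_{k,n}$ stays among non-accepting states and never reaches \emph{max}, while every other input reaches the accepting \emph{max}.

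The main obstacle is the design of the lower part. The textbook completion---route every missing transition to a fresh accepting sink---would either destroy confluence (producing several incomparable maximal states) or, worse, give $W_{k,n}$ itself an accepting run and thus change the recognized language. The delicate task is therefore to complete the automaton so that it simultaneously (i) stays self-loop-deterministic and partially ordered, (ii) routes every state to the single maximal state \emph{max} to recover confluence, and (iii) keeps the unique word $W_{k,n}$ rejected. This is precisely where the re-encoding of $W_{k,n}$ inside the lower part is required, and where the weaker expressivity of ptNFAs relative to rpoNFAs makes any naive completion fail.
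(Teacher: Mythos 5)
Your overall architecture is the same as the paper's: the recursive witness word $W_{k,n}=W_{k,n-1}a_nW_{k-1,n}$ with $|W_{k,n}|=\binom{k+n}{k}-1$ via Pascal's rule, an upper layer consisting of the rpoNFA from~\cite{mfcs16:mktmmt_full} together with the single maximal state \emph{max}, a lower per-letter layer of $k$ extra states to restore completeness and confluence, and the state count $n(2k+1)+1$. The length computation and the reduction of confluence to ``\emph{max} is the unique maximal state reachable from everywhere'' are both fine.

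However, there is a genuine gap at exactly the point you flag as ``the delicate task'': you never actually define the lower layer or prove property (iii), that the completion transitions do not create an accepting run on $W_{k,n}$. Saying that the lower part ``re-encodes $W_{k,n}$'' is a statement of intent, not a construction, and this is the entire technical content of the lemma --- everything else is inherited from the earlier rpoNFA result. The paper's resolution is concrete: for each letter $a_i$ it adds a chain of $k$ \emph{non-accepting} states $(k+1;i),\dots,(2k;i)$ advanced only by $a_i$ (with self-loops under smaller letters and with $(2k;i)\xrightarrow{a_i}\textit{max}$), and routes every missing transition under $a_i$ out of a non-accepting state into $(k+1;i)$. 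Correctness then rests on a combinatorial property of $W_{k,n}$ that your proposal never isolates (Corollary~\ref{WnnStructure} in the paper): for any suffix $a_iw$ of $W_{k,n}$, the word $w$ does not contain $k$ further occurrences of $a_i$ interleaved only with letters $a_j$, $j<i$, so a run that falls into the lower chain can never reach the accepting state \emph{max} while reading the remainder of $W_{k,n}$. Relatedly, your induction that $W_{k,n}$ is rejected must account for the new runs through the lower layer --- the paper explicitly tracks the case where the automaton sits in $(k+1;n)$ after reading $W_{k,n-1}a_n$ and shows that the $k-1$ remaining occurrences of $a_n$ in $W_{k-1,n}$ strand it in the non-accepting state $(2k;n)$. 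Without specifying the chain lengths, the targets of the completion transitions, and this counting argument, the proof does not go through.
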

  \begin{proof}
    For positive integers $k$ and $n$, we recursively define words $W_{k,n}$ over the alphabet $\Sigma_n = \{a_1,a_2,\ldots, a_n\}$ as follows. For the base cases, we set $W_{k,1} = a_1^k$ and $W_{1,n} = a_1a_2\ldots a_n$. The cases for $k,n >1$ are defined recursively by setting
    \begin{align*}
      W_{k,n} 
        & = W_{k,n-1}\, a_{n}\, W_{k-1,n} \\
        & = W_{k,n-1}\, a_n\, W_{k-1,n-1}\, a_n\, W_{k-2,n}\\
        &~~\, \vdots\\
        & = W_{k,n-1}\, a_n\, W_{k-1,n-1}\, a_n\, \cdots\, a_n\, W_{1,n-1}\, a_n\,.
    \end{align*}
    The length of $W_{k,n}$ is $\binom{k+n}{n}-1$~\cite{dlt15}. Notice that the letter $a_n$ appears exactly $k$ times in $W_{k,n}$. We further set $W_{k,n}=\eps$ whenever $kn=0$, since this is useful for defining $\A_{k,n}$ below. 

    We construct a ptNFA $\A_{k,n}$ over $\Sigma_n$ that accepts the language $\Sigma_n^* \setminus \{W_{k,n}\}$. For $n=1$ and $k\ge 0$, let $\A_{k,1}$ be a DFA for $\{a_1\}^* \setminus \{a_1^k\}$ with $k$ additional unreachable states used to address confluence of problem~\eqref{probii} and included here for uniformity (see Corollary~\ref{WnnStructure}). Thus, $\A_{k,1}$ consists of $2k+1$ states of the form $(i;1)$ and a state {\em max} together with the given $a_1$-transitions, see Figure~\ref{ptnfa3} for an illustration. All states but $(i;1)$, for $i=k,\ldots,2k$, are accepting, and $(0;1)$ is initial. All undefined transitions in Figure~\ref{ptnfa3} go to state {\em max}.

    Given a ptNFA $\A_{k,n-1}$, we recursively construct $\A_{k,n}$ as defined next. The construction for $n=3$ is illustrated in Figure~\ref{ptnfa3}. We obtain $\A_{k,n}$ from $\A_{k,n-1}$ by adding $2k+1$ states $(0;n),(1;n),\ldots,(2k;n)$, where $(0;n)$ is added to the initial states, and all states $(i;n)$ with $i< k$ are added to the accepting states. The automaton $\A_{k,n}$ therefore has $n(2k+1) + 1$ states.
    \begin{figure} 
      \centering
        \includegraphics[angle=0,scale=.7]{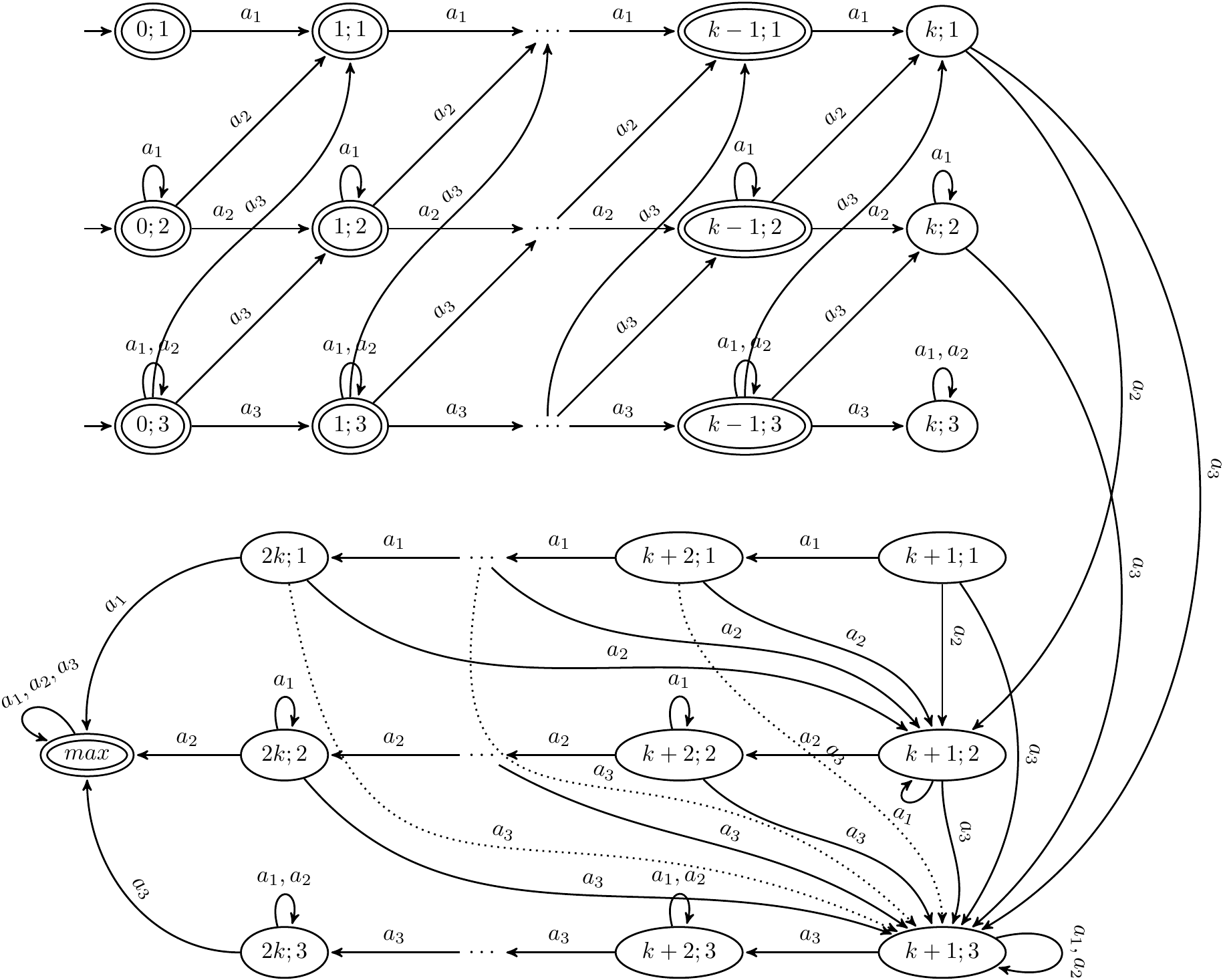}
        \caption{The ptNFA $\A_{k,3}$ with $3(2k+1) + 1$ states; all undefined transitions go to state {\em max}; for readability, transitions $(k+i;1)\stackrel{a_3}{\to}(k+1;3)$ for $i=2,3,\ldots,k$, are dotted.}
        \label{ptnfa3}
    \end{figure}
    The additional transitions of $\A_{k,n}$ consist of the following groups:
    \begin{enumerate}
      \item\label{r1} Self-loops $(i;n)\xrightarrow{a_j}(i;n)$ for $i\in\{0,1,\ldots,2k\}$ and $a_j=a_1,a_2,\ldots,a_{n-1}$;
        
      \item\label{r2} Transitions $(i;n)\xrightarrow{a_n}(i+1;n)$ for $i\in\{0,1,\ldots,2k-1\}\setminus\{k\}$;
        
      \item\label{r3} Transitions $(k,n)\xrightarrow{a_n} max$, $(2k,n)\xrightarrow{a_n} max$, and the self-loop $max \xrightarrow{a_n} max$;
        
      \item\label{r4} Transitions $(i;n)\xrightarrow{a_n}(i+1;m)$ for $i=0,1,\ldots,k-1$ and $m=1,\ldots,n-1$;
        
      \item\label{r5} Transitions $(i;m)\xrightarrow{a_n} max$ for every accepting state $(i;m)$ of $\A_{k,n-1}$;
        
      \item\label{r6} Transitions $(i;m)\xrightarrow{a_n} (k+1,n)$ for every non-accepting state $(i;m)$ of $\A_{k,n-1}$.
    \end{enumerate}

    By construction, $\A_{k,n}$ is complete and partially ordered. It satisfies the UMS property because if there is a self-loop in a state $q\neq max$ under a letter $a$, then there is no other incoming or outgoing transition of $q$ under $a$. This means that the component of the graph $G(\A_{k,n},\Sigma(q))$ containing $q$ is only state $q$, which is indeed the unique maximal state. Hence, it is a ptNFA. 
      
    We show that $\A_{k,n}$ accepts $\Sigma_n^*\setminus \{W_{k,n}\}$. The transitions {\eqref{r1}}, {\eqref{r2}}, and {\eqref{r3}} ensure acceptance of every word that does not contain exactly $k$ occurrences of $a_n$.
    The transitions {\eqref{r4}} and {\eqref{r5}} ensure acceptance of all words in $(\Sigma_{n-1}^* a_n)^{i} L(\A_{k-i,n-1})a_n \Sigma_n^*$, for which the longest factor before the $(i+1)$th occurrence of $a_n$ is not of the form $W_{k-i,n-1}$, and hence is not a correct factor of $W_{k,n} = W_{k,n-1} a_n \cdots a_n W_{k-i,n-1} a_n \cdots a_n W_{1,n-1} a_n$.
    Together, these transitions ensure that $\A_{k,n}$ accepts every input other than $W_{k,n}$. 
    Notice that transitions~\eqref{r6} are not needed to obtain the language, and we show below that they do not change the language. This corresponds to the informal explanation that the lower part of the automaton is used to take care of confluence and has no effect on the language. We formulate this precisely as Corollary~\ref{cor_nonacc}. 
    
    It remains to show that $\A_{k,n}$ does not accept $W_{k,n}$, which we do by induction on $(k,n)$. We start with the base cases. For $(0,n)$ and any $n\geq 1$, the word $W_{0,n}=\eps$ is not accepted by $\A_{0,n}$, since the initial states $(0;m)=(k;m)$ of $\A_{0,n}$ are not accepting. Likewise, for $(k,1)$ and any $k\ge 0$, we find that $W_{k,1}=a_1^k$ is not accepted by $\A_{k,1}$ (\cf\ Figure~\ref{ptnfa3}).

    For the inductive case $(k,n)\ge (1,2)$, assume that $\A_{k',n'}$ does not accept $W_{k',n'}$ for any $(k',n') < (k,n)$. We have $W_{k,n} = W_{k,n-1} a_n W_{k-1,n}$, and $W_{k,n-1}$ is not accepted by $\A_{k,n-1}$ by induction. Therefore, after reading $W_{k,n-1} a_n$, automaton $\A_{k,n}$ must be in one of the states $(1;m)$, $1\le m\le n$, or $(k+1;n)$. However, states $(1;m)$, $1\le m\le n$, are the initial states of $\A_{k-1,n}$, which does not accept $W_{k-1,n}$ by induction, and hence neither $\A_{k,n}$ accepts $W_{k-1,n}$ from the states $(1;m)$, $1\le m\le n$; indeed, after reading $W_{k-1,n}$, automaton $\A_{k-1,n}$ ends up in non-accepting states, and since $\A_{k,n}$ starting from the states $(1;m)$, $1\le m\le n$, differs from $\A_{k-1,n}$ only in states $(2k;i)$, $i=1,\ldots,n$, it ends up in the same non-accepting states after reading $W_{k-1,n}$ as $\A_{k-1,n}$. Thus, assume that $\A_{k,n}$ is in state $(k+1;n)$ after reading $W_{k,n-1} a_n$. Since $W_{k-1,n}$ has exactly $k-1$ occurrences of letter $a_n$, $\A_{k,n}$ is in state $(2k;n)$ after reading $W_{k-1,n}$. Hence $W_{k,n}$ is not accepted by $\A_{k,n}$.
  \end{proof}
    
  The last part of the previous proof shows that the suffix $W_{k-1,n}$ of the word $W_{k,n}=W_{k,n-1}a_nW_{k-1,n}$ is not accepted from state $(k+1;n)$. This can be generalized as follows. 
  \begin{cor}\label{WnnStructure}
    For any suffix $a_i w$ of $W_{k,n}$, $w$ is not accepted from state $(k+1;i)$ of $\A_{k,n}$.
  \end{cor}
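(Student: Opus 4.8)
The plan is to exploit the deterministic structure of the ``lower part'' of $\A_{k,n}$, namely the states $(i;m)$ with $k+1\le i\le 2k$; these are exactly the non-accepting states outside the upper part. First I would record how they behave: from a state $(i;m)$ with $i\ge k+1$, reading a letter $a_p$ gives a self-loop if $p<m$ by~\eqref{r1}; the single successor $(i+1;m)$ if $p=m$ and $i<2k$ by~\eqref{r2}, or the transition to $max$ by~\eqref{r3} if $p=m$ and $i=2k$; and a reset to $(k+1;p)$ if $p>m$ by~\eqref{r6}, since every lower-part state is non-accepting. In particular each letter has a unique successor, so the run of any word from $(k+1;i)$ is deterministic and stays within the lower part until it possibly reaches $max$. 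As every lower-part state is non-accepting while $max$ is accepting and absorbing, $w$ is accepted from $(k+1;i)$ if and only if this run reaches $max$. The whole corollary thus reduces to showing that the run of $w$ started in $(k+1;i)$ never reaches $max$, which I would prove by induction on $n$ (for every $k$), splitting on whether the leading letter $a_i$ of the suffix is the maximal letter $a_n$.

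Writing $W_{k,n}=B_1\,a_n\,B_2\,a_n\cdots B_k\,a_n$ with $B_j=W_{k-j+1,n-1}$, a suffix $a_n w$ starting at the $j$-th separator has $w=W_{k-j,n}$, where $k-j\le k-1$. Since $W_{k-j,n}$ contains exactly $k-j$ occurrences of $a_n$ and all other letters self-loop at layer $n$, reading $w$ from $(k+1;n)$ only increments the first coordinate on each $a_n$, so the run ends in $(2k-j+1;n)\le(2k;n)$ and never reads $a_n$ in $(2k;n)$; hence $max$ is not reached. This is precisely the computation in the last part of the proof of Lemma~\ref{exprponfas}, now carried out for every $j$ rather than only $j=1$.

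For a suffix whose leading letter $a_i$ (with $i<n$) lies strictly inside a block $B_j$, I would write $w=w'\,a_n\,W_{k-j,n}$, where $a_i w'$ is a suffix of $B_j=W_{k-j+1,n-1}$. The observation that lets the induction close with the same index $k$ is that $W_{k-j+1,n-1}$ is a suffix of $W_{k,n-1}$ (immediate by descending the first index in $W_{k',n-1}=W_{k',n-2}\,a_{n-1}\,W_{k'-1,n-1}$), so $a_i w'$ is itself a suffix of $W_{k,n-1}$. By the induction hypothesis for $\A_{k,n-1}$, the run of $w'$ from $(k+1;i)$ does not reach $max$ there; and since the transitions among lower-part states under $a_1,\dots,a_{n-1}$ are identical in $\A_{k,n-1}$ and $\A_{k,n}$, the same run in $\A_{k,n}$ stays in the lower part and ends in some $(i';m')$ with $m'<n$. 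Reading the subsequent $a_n$ then resets this to the single state $(k+1;n)$ by~\eqref{r6}, after which the remaining $W_{k-j,n}$ is treated exactly as in the previous paragraph and the run reaches $(2k-j+1;n)$ without hitting $max$. The base case $n=1$ is the direct computation on $W_{k,1}=a_1^k$: for a suffix $a_1 w=a_1^{t}$ with $t\le k$ one reads $w=a_1^{\,t-1}$ from $(k+1;1)$ and reaches $(k+t;1)\le(2k;1)$.

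The main obstacle is exactly this index bookkeeping. One must notice both that $W_{k-j+1,n-1}$ is a suffix of $W_{k,n-1}$, so that the induction hypothesis for $\A_{k,n-1}$ (with the same $k$, not the smaller first index of the block) applies to $w'$, and that the transition on the separating $a_n$ sends every lower-part state to the single state $(k+1;n)$. Together these ensure that only the qualitative fact ``the run of $w'$ avoids $max$'' — and not the precise state in which it ends — needs to be propagated through the recursion, which is what makes the argument go through cleanly.
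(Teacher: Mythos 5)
Your proof is correct, but it takes a genuinely different route from the paper's. The paper argues by contradiction: it picks the \emph{maximal} index $i$ for which some suffix $a_iw$ of $W_{k,n}$ has $w$ accepted from $(k+1;i)$; by that maximality the accepting run may never take a reset transition into a layer above $i$ (otherwise a later suffix would witness a larger index), so it must climb $(k+1;i)\to\cdots\to(2k;i)\to max$ using $k$ further occurrences of $a_i$ separated only by letters $a_j$ with $j<i$. Hence $W_{k,n}$ would have to contain $k+1$ occurrences of $a_i$ interleaved only with smaller letters, and a short computation with the recursive definition of $W_{k,n}$ shows that every maximal factor over $\{a_1,\ldots,a_i\}$ is some $W_{k-\ell,i}$ and thus contains at most $k$ such occurrences --- a contradiction. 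Your argument instead is a direct induction on $n$ that tracks the (deterministic) run through the lower part, using exactly the two pieces of bookkeeping you highlight: that $W_{k-j+1,n-1}$ is a suffix of $W_{k,n-1}$, so the induction hypothesis applies with the same $k$, and that every reset under $a_n$ from a non-accepting state lands in the single state $(k+1;n)$, after which the count of remaining $a_n$'s bounds the first coordinate by $2k$. Both proofs rest on the same structural facts (determinism and non-acceptance of the lower part, plus the block decomposition of $W_{k,n}$); the paper's counting argument is shorter, while yours is more explicit --- it actually identifies the terminal state $(2k-j+1;n)$ of each run --- and it generalizes the final paragraph of the proof of Lemma~\ref{exprponfas} in a way that makes the index bookkeeping transparent. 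Your case analysis is exhaustive and the base case is handled; I see no gap.
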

  \begin{proof}
    Consider the word $W_{k,n}$ over $\Sigma_n=\{a_1,a_2,\ldots,a_n\}$ constructed in the proof of Lemma~\ref{exprponfas}, and let $i\in\{1,\ldots,n\}$ be the maximal number for which there is a suffix $a_i w$ of $W_{k,n}$ such that $w$ is accepted by $\A_{k,n}$ from state $(k+1;i)$. Then $W_{k,n}=w_1a_i w_2 w_3$, where $w_2\in\{a_1,\ldots,a_i\}^*$ is the shortest word labeling the path from state $(k+1;i)$ to state $max$. By the construction of $\A_{k,n}$, word $a_iw_2$ must contain $k+1$ letters $a_i$. We shown that $W_{k,n}$ does not contain more than $k$ letters $a_i$ interleaved only with letters $a_j$ for $j<i$, which yields a contradiction that proves the claim.
      
    By definition, every longest factor of $W_{k,n}$ over $\{a_1,\ldots,a_i\}$ is of the form $W_{k-\ell,i}$, for $\ell \in\{0,\ldots,k-1\}$. Since $W_{k-\ell,i} = W_{k-\ell,i-1}\, a_i\, W_{k-\ell-1,i-1}\, a_i\, \cdots\, a_i\, W_{1,i-1}\, a_i$, the number of occurrences of $a_i$ interleaved only with letters $a_j$ for $j<i$ is at most $k-\ell$, which results in the maximum of $k$ for $\ell=0$ as claimed above.
  \end{proof}

  As already pointed out in the proof of Lemma~\ref{exprponfas}, transitions~\eqref{r6} are redundant and present only to take care of confluence.
  \begin{cor}\label{cor_nonacc}
    Removing from $\A_{k,n}$ the non-accepting states $(k+1,i),\ldots,(2k,i)$, for $1\le i \le n$, and the corresponding transitions results in an rpoNFA that accepts the same language.
  \end{cor}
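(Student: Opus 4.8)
\emph{Removing from $\A_{k,n}$ the non-accepting states $(k+1,i),\ldots,(2k,i)$, for $1\le i \le n$, and the corresponding transitions results in an rpoNFA that accepts the same language.}

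\medskip

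The plan is to show that deleting the ``lower'' non-accepting states never changes acceptance, because every accepting computation can be rerouted so that it avoids these states entirely. Let $\A_{k,n}'$ denote the automaton obtained by removing the states $(k+1;i),\dots,(2k;i)$ for all $i$ together with their incident transitions. Since $\A_{k,n}'$ is a sub-automaton of $\A_{k,n}$, every word accepted by $\A_{k,n}'$ is accepted by $\A_{k,n}$; the inclusion $L(\A_{k,n})\subseteq L(\A_{k,n}')$ is what needs work. By Lemma~\ref{exprponfas} we already know $L(\A_{k,n})=\Sigma_n^*\setminus\{W_{k,n}\}$, so it suffices to verify two things: that $\A_{k,n}'$ is a well-defined rpoNFA (i.e. removing these states leaves a genuine, still partially ordered, self-loop-deterministic automaton with \emph{max} still reachable as needed), and that $\A_{k,n}'$ still rejects $W_{k,n}$ while accepting everything else.

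First I would record the structural observation that the removed states are entered \emph{only} by transitions~\eqref{r6}, namely $(i;m)\xrightarrow{a_n}(k+1;n)$ from non-accepting states $(i;m)$ of $\A_{k,n-1}$, and internally by transitions~\eqref{r2}/\eqref{r3} that march $(k+1;n),\dots,(2k;n)$ up to \emph{max}. Crucially, the only way to first reach the block $(k+1;n),\dots,(2k;n)$ from an \emph{accepting} portion of the computation is via an already non-accepting state of the recursive part; so any computation that ever visits a removed state has already passed through a non-accepting configuration reachable without transitions~\eqref{r6}. The key step is then to show that the \emph{acceptance profile} is preserved: for every word $u\ne W_{k,n}$ there is an accepting run of $\A_{k,n}$ that uses none of the removed states. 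Since all removed states are themselves non-accepting and every $a_n$-transition out of them leads to \emph{max} or to another removed state, a run that enters the removed block can terminate acceptingly only by reaching \emph{max}; but \emph{max} is also reachable from every non-accepting state $(i;m)$ of $\A_{k,n-1}$ directly (via transition~\eqref{r5} from accepting states, and by the self-loop structure) without first detouring through the $(k+1;n)$-block. Thus any accepting run can be shortcut to reach \emph{max} without the removed states.

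The main obstacle — and the place I would spend most care — is the interaction with the unique rejected word $W_{k,n}$. I must confirm that pruning the removed states does not accidentally turn $W_{k,n}$ into an accepted word. Here Corollary~\ref{WnnStructure} is exactly the tool: it states that for any suffix $a_i w$ of $W_{k,n}$, the word $w$ is not accepted from state $(k+1;i)$. This guarantees that the removed block contributes nothing toward accepting $W_{k,n}$ in the first place, so deleting it cannot create a new accepting run on $W_{k,n}$; the rejection of $W_{k,n}$ in $\A_{k,n}'$ then follows from its rejection in $\A_{k,n}$ (established in Lemma~\ref{exprponfas}) by the same inductive analysis on $(k,n)$, now simply noting that the run on $W_{k,n}$ never needed the pruned states. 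Finally I would remark that $\A_{k,n}'$ is an rpoNFA: deleting states and transitions cannot introduce new cycles, so it remains partially ordered, and it cannot create the forbidden pattern of Figure~\ref{fig_bad_pattern}, so self-loop-determinism is preserved; it is of course not complete, which is exactly why it is an rpoNFA and not a ptNFA, matching the discussion preceding Lemma~\ref{exprponfas}.
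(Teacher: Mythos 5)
Your overall strategy matches the paper's: the inclusion $L(\A_{k,n}')\subseteq L(\A_{k,n})$ is immediate because $\A_{k,n}'$ is a sub-automaton (so in particular $W_{k,n}$ remains rejected --- your worry about ``accidentally'' accepting $W_{k,n}$, and the appeal to Corollary~\ref{WnnStructure} for that point, are unnecessary, as you yourself eventually note), and the only real content is the converse: every word other than $W_{k,n}$ is still accepted after the pruning. The paper disposes of this by pointing back to the proof of Lemma~\ref{exprponfas}, which establishes acceptance of all words $\ne W_{k,n}$ using only transitions \eqref{r1}--\eqref{r5} and, within those, only the retained states: words with the wrong number of occurrences of $a_n$ are accepted along $(0;n)\to\cdots\to(k;n)\to max$, and words with a wrong factor are accepted via \eqref{r4} and \eqref{r5}; the removed block never appears in these accepting runs. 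Your identification of the pruned automaton as an rpoNFA (partial order and self-loop-determinism survive deletion of states; completeness does not) is correct.

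The gap is in your justification of the key inclusion. You argue by run surgery: a run entering the removed block can accept only by reaching $max$, and ``$max$ is also reachable from every non-accepting state $(i;m)$ of $\A_{k,n-1}$ directly (via transition~\eqref{r5} from accepting states, \dots)'', hence ``any accepting run can be shortcut''. This does not hold as stated: from a \emph{non-accepting} state $(i;m)$ of $\A_{k,n-1}$ the only $a_n$-transition is \eqref{r6} into the removed block (transition \eqref{r5} applies to accepting states only), so there is no local detour from that state on the same remaining input, and mere reachability of $max$ by some path says nothing about acceptance of the particular word being read. The correct repair is not to reroute the given run but to exhibit a different accepting run on the same word from the initial states, which is exactly what the proof of Lemma~\ref{exprponfas} already provides: it shows that every $u\ne W_{k,n}$ is accepted without recourse to transitions \eqref{r6} or the states $(k+1;i),\dots,(2k;i)$. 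With that citation in place of the shortcut argument, your proof coincides with the paper's.
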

  \begin{proof}
    By the proof of Lemma~\ref{exprponfas}, removing the states with corresponding transitions has no effect on the accepted language. The resulting automaton is indeed an rpoNFA. This rpoNFA is exactly the rpoNFA used in our previous work~\cite{mfcs16:mktmmt_full}. 
  \end{proof}

  We now have the necessary results to show, using a reduction from the word problem of polynomially-space-bounded Turing machines, that the universality problem for ptNFAs, where the alphabet may grow polynomially with the number of states, is \PSpace-complete.
  
  A {\em deterministic Turing machine} (DTM) is a tuple $M = (Q,T,I,\delta,\blank,q_o,q_f)$, where $Q$ is the finite state set, $T$ is the tape alphabet, $I\subseteq T$ is the input alphabet, $\blank\in T \setminus I$ is the blank symbol, $q_o$ is the initial state, $q_f$ is the accepting state, and $\delta$ is the transition function mapping $Q\times T$ to $Q\times T \times \{L,R,S\}$;
  see Aho et al.~\cite{AhoHU74} for details.

  \begin{thm}\label{thmMain}
    Deciding universality for ptNFAs is \PSpace-complete.
  \end{thm}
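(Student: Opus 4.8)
The plan is as follows. \PSpace membership is immediate: a ptNFA is in particular an NFA, and universality for NFAs is in \PSpace~\cite{MeyerS72}. All the work lies in \PSpace-hardness, and I would obtain it by a reduction from the acceptance problem of polynomially-space-bounded deterministic Turing machines. Fixing a $p$-space-bounded DTM $M$ and an input $x$, I would follow Aho, Hopcroft and Ullman~\cite{AhoHU74}: encode each configuration of $M$ over a suitable alphabet $\Sigma$ (states, tape symbols, and separators such as $\#$), and encode a run as the concatenation $\#C_0\#C_1\#\cdots\#C_t\#$ of its successive configurations. I would then assemble the regular object $R_x$ as a union of ``violation'' patterns of the form~\eqref{eq_part}, each $\Sigma^*K\Sigma^*$ detecting one way in which a word fails to encode an accepting run ($C_0$ does not hold $x$, some step $C_i\to C_{i+1}$ is illegal, the last configuration is not accepting, and so on). Each such check is local, so the corresponding $K$ is a finite set of words of length $O(p(|x|))$, and $L(R_x)=\Sigma^*$ exactly when $M$ does not accept $x$.

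Next I would turn $R_x$ into a ptNFA, confronting precisely the two obstacles already labelled as problems~\eqref{probi} and~\eqref{probii}: the leading $\Sigma^*K$ of each pattern is not self-loop-deterministic, and naively completing by a sink state destroys confluence. Problem~\eqref{probi} I would dispose of exactly as in our earlier rpoNFA construction~\cite{mfcs16:mktmmt_full}, obtaining the \emph{upper part} of the automaton, which realises all the violation checks as an rpoNFA; together with state \emph{max} this is the rpoNFA of Corollary~\ref{cor_nonacc}. For problem~\eqref{probii} I would exploit the key observation that an accepting run of a $p$-space-bounded machine has length at most exponential in $|x|$, so the single word that might have to be rejected—the encoding of the unique accepting computation—is of at most exponential length. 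I would therefore graft onto the upper part a \emph{lower part} built exactly as in the proof of Lemma~\ref{exprponfas}: it adds, in polynomial time, the machinery of $\A_{n,n}$, which is complete, partially ordered, satisfies the UMS property, and whose lower states funnel every word that is not the distinguished exponential word into the unique maximal state \emph{max}.

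The payoff of this design is that completeness together with a single maximal state \emph{max} reachable from every state yields confluence for free, so the combined automaton $\A$ is a ptNFA (using also that a disjoint union of ptNFAs is a ptNFA). It then remains to verify correctness: $\A$ accepts every word that triggers some violation check (via the upper part) as well as every word whose length or separator structure does not match a well-formed run (via the lower part), so the only word $\A$ can possibly reject is the encoding of an accepting computation. Consequently $L(\A)=\Sigma^*$ if and only if no accepting run exists, i.e.\ if and only if $M$ rejects $x$.

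The main obstacle, and the delicate part of the argument, is making the violation-checking upper part and the confluence-completion lower part coexist without corrupting either the language or the ptNFA structure. Concretely, I would need to check that the transitions added for completeness never accidentally accept the encoding of an accepting computation—this is exactly where the length bookkeeping of Lemma~\ref{exprponfas} and its Corollary~\ref{WnnStructure} are used to show that the lower part alone rejects only the distinguished word—and that the merged automaton remains partially ordered and self-loop-deterministic (hence a genuine rpoNFA before completion) and still satisfies the UMS property, which by Lemma~\ref{lem3.1} is equivalent to being a ptNFA. Finally, since the distinguished word is exponentially long it cannot be written out by a polynomial-time reduction, and it is precisely here that the polynomial-size ptNFA of Lemma~\ref{exprponfas} is indispensable: it encodes this exponential object succinctly, keeping the whole reduction polynomial.
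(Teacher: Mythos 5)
Your overall strategy is the paper's: AHU-style violation patterns, the rpoNFA machinery from \cite{mfcs16:mktmmt_full} for problem~\eqref{probi}, and Lemma~\ref{exprponfas} with Corollary~\ref{WnnStructure} for problem~\eqref{probii}. But one idea the construction cannot do without is missing: the \emph{product alphabet}. The automaton $\A_{n,n}$ of Lemma~\ref{exprponfas} lives over $\Sigma_n=\{a_1,\ldots,a_n\}$ and rejects exactly the one word $W_{n,n}$, which has nothing to do with Turing-machine configurations, whereas your run encodings live over $\Delta\cup\{\#,\$\}$. You cannot ``graft'' the one onto the other as stated: the two automata read different letters, and the ``single word that might have to be rejected'' is, in your telling, the run encoding --- a word about which $\A_{n,n}$ knows nothing. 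The paper's resolution is to work over $\Pi=\Sigma_n\times\Deltaplus$ and lift $\A_{n,n}$ to $\enc(\A_{n,n})$, so that the unique candidate for rejection is the word whose \emph{first} projection is $W_{n,n}$ and whose \emph{second} projection encodes the accepting run, repeated in its accepting configuration and padded with $\$$ to have length exactly $|W_{n,n}|$. This synchronization is what makes the leading $\Sigma^*$ replaceable by $\Pi^{\le|W_{n,n}|-1}$ (realized by $\enc(\A_{n,n})$), and what lets Corollary~\ref{WnnStructure} certify that routing the completion transitions into the states $(n+1;i)$ never accidentally accepts the critical word. Without the product alphabet these steps are not even well-typed.

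A second, smaller slip: you describe the ``upper part'' (the rpoNFA of Corollary~\ref{cor_nonacc}) as ``realising all the violation checks.'' It does not; it merely recognizes $\Sigma_n^*\setminus\{W_{n,n}\}$ and serves as the self-loop-deterministic substitute for the leading $\Sigma^*$. The violation checks themselves are separate polynomial-size gadgets (the automata for \textbf{(A)} and \textbf{(C)}, and the tree-shaped DFA $\B_1$ for the transition rules), each attached to its own copy of $\enc(\A_{n,n})$ --- entered from its accepting states and exited, where incomplete, into its lower states. Once the product alphabet is in place, the rest of your outline (disjoint union of the resulting ptNFAs; confluence from completeness plus a unique reachable maximal state) matches the paper.
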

  \begin{proof}
    Membership follows since universality is in \PSpace for NFAs~\cite{GareyJ79}. 
  
    To prove \PSpace-hardness, we consider a polynomial $p$ and a $p$-space-bounded DTM $\M = (Q,T,I,\delta,\blank,q_o,q_f)$ with a nonempty input $x$. The basic idea of the proof is to use the alphabet $\Pi = \Sigma_n \times \Delta$, where $\Sigma_n=\{a_1,\ldots,a_n\}$ and $\Delta$ is used to encode runs of the Turing machine $\M$. The aim is to construct a ptNFA that accepts all words over $\Pi$, where the projection to the first component does not equal $W_{n,n}$ (the word constructed in Lemma~\ref{exprponfas}) or the projection to the second component does not encode an accepting run of $\M$ on $x$. 
    
    Before elaborating on the details of the construction, we make several assumptions on the DTMs that simplify the proof and under which the word problem for these DTMs clearly remains \PSpace-hard. We assume that
    \begin{enumerate}
      \item the initial and accepting states of $\M$ are different, \ie, $q_o\neq q_f$;
      
      \item $\M$ accepts by looping in the accepting state $q_f$ indefinitely, \ie, no transition from state $q_f$ modifies the tape, state, or head position;

     \item and $\M$ always accepts with the head at the very beginning of the tape.
    \end{enumerate}

    A configuration of $\M$ on $x$ consists of a current state $q\in Q$, the position $1\leq \ell\leq p(|x|)$ of the read/write head, and the tape contents $\theta_1,\ldots,\theta_{p(|x|)}$ with $\theta_i\in T$. We represent it by a sequence 
    \[
      \tuple{\theta_1,\epsnostate}\cdots\tuple{\theta_{\ell-1},\epsnostate}\tuple{\theta_{\ell},q}\tuple{\theta_{\ell+1},\epsnostate}\cdots\tuple{\theta_{p(|x|)},\epsnostate}
    \]
    of symbols from $\Delta = T\times(Q\cup\{\epsnostate\})$. A run of $\M$ on $x$ is represented as a word $\# w_1 \# w_2 \# \cdots \allowbreak \# w_m \#$, where $w_i\in\Delta^{p(|x|)}$ and $\#\notin\Delta$ is a fresh separator symbol. One can construct a regular expression recognizing all words over $\Delta\cup\{\#\}$ that do not correctly encode a run of $\M$ (in particular are not of the form $\# w_1 \# w_2 \# \cdots \allowbreak \# w_m \#$) or that encode a run that is not accepting~\cite{AhoHU74}. 
    Such a regular expression can be constructed in the following three steps: we detect all words that
    \begin{description}
      \item[(A)] do not start with the initial configuration; 
      
      \item[(B)] do not encode a valid run since they violate a transition rule (including words with an invalid encoding);

      \item[(C)] encode non-accepting runs or runs that end prematurely.
    \end{description}

    If $\M$ has an accepting run on $x$, it has one without repeated configurations. There are $C(x) = |\Delta|^{p(|x|)}$ distinct configuration words in our encoding. Considering a separator symbol $\#$, the length of the encoding of a run without repeated configurations is at most $1+ C(x)(p(|x|)+1)$, because every configuration word ends with $\#$ and is thus of length $p(|x|)+1$. Let $n$ be the least number such that $|W_{n,n}|\geq 1+ C(x)(p(|x|)+1)$, where $W_{n,n}$ is the word constructed in Lemma~\ref{exprponfas}. Since $|W_{n,n}|+1=\binom{2n}{n} \ge 2^n$, it follows that $n$ is smaller than $\lceil\log(1+ C(x)(p(|x|)+1))\rceil$, and hence polynomial in the size of $\M$ and $x$.

    Consider the ptNFA $\A_{n,n}$ over the alphabet $\Sigma_n=\{a_1,\ldots,a_n\}$ of Lemma~\ref{exprponfas}, and define the alphabet $\Deltaplus = \Delta \cup\{\#,\$\}$. We consider the alphabet 
    $ 
      \Pi=\Sigma_n\times\Deltaplus
    $ 
    where the first component is an input for $\A_{n,n}$ and the second component is used for encoding a run as described above; that is, letters of $\Pi$ are pairs that might have pairs (of tape symbols and states) as their second element. Recall that $\A_{n,n}$ accepts all words different from $W_{n,n}$. Therefore, only those words over $\Pi$ are of our interest, where the first components form the word $W_{n,n}$. Since the length of $W_{n,n}$ may not be a multiple of $p(|x|)+1$, we add $\$$ to fill up any remaining space after the last configuration.

    For a word $w=\tuple{a_{i_1},\delta_1}\cdots \tuple{a_{i_\ell},\delta_\ell}\in\Pi^\ell$, we define $w[1]=a_{i_1}\cdots a_{i_\ell} \in \Sigma_n^\ell$ as the projection of $w$ to the first components, and $w[2]=\delta_1\ldots\delta_\ell\in\Deltaplus^\ell$ as the projection to the second components. Conversely, for a word $a_{i_1}\cdots a_{i_\ell} \in \Sigma_n^\ell$, we write $a_{i_1}\cdots a_{i_\ell}\otimes \Deltaplus$ to denote the set $(\{a_{i_1}\}{\times}\Deltaplus)\cdots (\{a_{i_\ell}\}{\times}\Deltaplus)$ of all words $w\in\Pi^{\ell}$ with $w[1]=a_{i_1}\cdots a_{i_\ell}$. Similarly, for $\delta_1\ldots\delta_\ell\in\Deltaplus^\ell$, we write $\Sigma_n\otimes \delta_1\ldots\delta_\ell$ to denote the set $(\Sigma_n{\times}\{\delta_1\})\cdots (\Sigma_n{\times}\{\delta_\ell\})$ of all words $w\in\Pi^{\ell}$ with $w[2]=\delta_1\ldots\delta_\ell$. We extend this notation to sets of words.

    Let $\enc(\A_{n,n})$ denote the automaton $\A_{n,n}$ with each transition $q\xrightarrow{a_i} q'$ replaced by all transitions $q\xrightarrow{\pi} q'$ with $\pi \in \{a_i\}\times \Deltaplus$. Then $\enc(\A_{n,n})$ accepts the language $\Pi^*\setminus (W_{n,n}\otimes \Deltaplus)$. We say that a word $w$ encodes an accepting run of $\M$ on $x$ if
    $w[1]=W_{n,n}$ and
    $w[2]$ is of the form $\#w_1\#\cdots\# w_m \# \$^j$ 
    such that there is an $i\in\{1,2,\ldots,m\}$ for which
    $\#w_1\#\cdots\#w_i\#$ encodes an accepting run of $\M$ on $x$, 
    $w_k=w_i$ for all $k\in\{i+1,\ldots,m\}$, and
    $j\leq p(|x|)$.
    That is, we extend the encoding by repeating the accepting configuration until we have less than $p(|x|)+1$ symbols before the end of $|W_{n,n}|$, and fill up the remaining places with symbol $\$$. This extension is possible due to the assuption that $\M$ loops in the accepting configuration.

    For {\bf (A)}, we want to detect all words that do not start with the word
    \begin{align}
      w[2] =\#\tuple{x_1,q_0}\allowbreak\tuple{x_2,\epsnostate}\cdots\allowbreak\tuple{x_{|x|},\epsnostate}\tuple{\blank,\epsnostate}\cdots\tuple{\blank,\epsnostate}\#
      \label{eq_correct_initial_word}
    \end{align}
    of length $p(|x|)+2$.
    This happens if 
      {\bf (A.1)} the word is shorter than $p(|x|)+2$, or 
      {\bf (A.2)} at position $j$, for $0\leq j\leq p(|x|)+1$, there is a letter from the alphabet $\Deltaplus \setminus\{\delta_j\}$, where $\delta_j$ is the $j$th letter of the expected initial word \eqref{eq_correct_initial_word}.
    Let $\bar{E}_j=\Sigma_n \times (\Deltaplus\setminus\{\delta_j\})$. We can capture {\bf (A.1)} and {\bf (A.2)} in the regular expression
    \begin{equation}\label{eq_re_wrong_start}
      \left(\varepsilon + \Pi + \Pi^2 +\cdots+ \Pi^{p(|x|)+1}\right) 
      + \sum_{0\leq j\leq p(|x|)+1} (\Pi^{j} \cdot \bar{E}_j\cdot\Pi^*)
    \end{equation}
    where $\Pi^k$ is an abbreviation of the concatenation of $\Pi$ $k$-times.

    Expression \eqref{eq_re_wrong_start} is polynomial in size. It can be captured by a ptNFA as follows. Each of the first $p(|x|)+2$ expressions defines a finite language and can easily be captured by a ptNFA (by a confluent DFA) of size of the expression. (By size of a regular expression, we mean the ordinary length, \ie, the total number of symbols, including parentheses; \cf~Ellul et al.~\cite{EllulKSW05} for more options and a detailed discussion.) The disjoint union of these ptNFAs then clearly forms a single ptNFA recognizing the language $\varepsilon + \Pi + \Pi^2 +\cdots+ \Pi^{p(|x|)+1}$.

    To express the language $\Pi^{j} \cdot \bar{E}_j\cdot\Pi^*$ as a ptNFA, we first construct the minimal incomplete DFA recognizing this language (states $0,1,\ldots,j,max$ in Figure~\ref{fig_const_1}). However, we cannot complete this DFA by simply adding the missing transitions under $\Sigma_n\times \{\delta_j\}$ from state $j$ to a new sink state because it results in a DFA with two maximal states -- $max$ and the sink state -- violating the UMS property. Instead, we use a copy of the ptNFA $\enc(\A_{n,n})$ and add the missing transitions from state $j$ under $\tuple{a_i,\delta_j}\in \Sigma_n\times \{\delta_j\}$ to state $(n+1;i)$; see Figure~\ref{fig_const_1} for an illustration. Notice that states $(n+1;i)$ are the states $(k+1;i)$ in Figure~\ref{ptnfa3}.
    \begin{figure}
      \centering
      \includegraphics[angle=0,scale=.7]{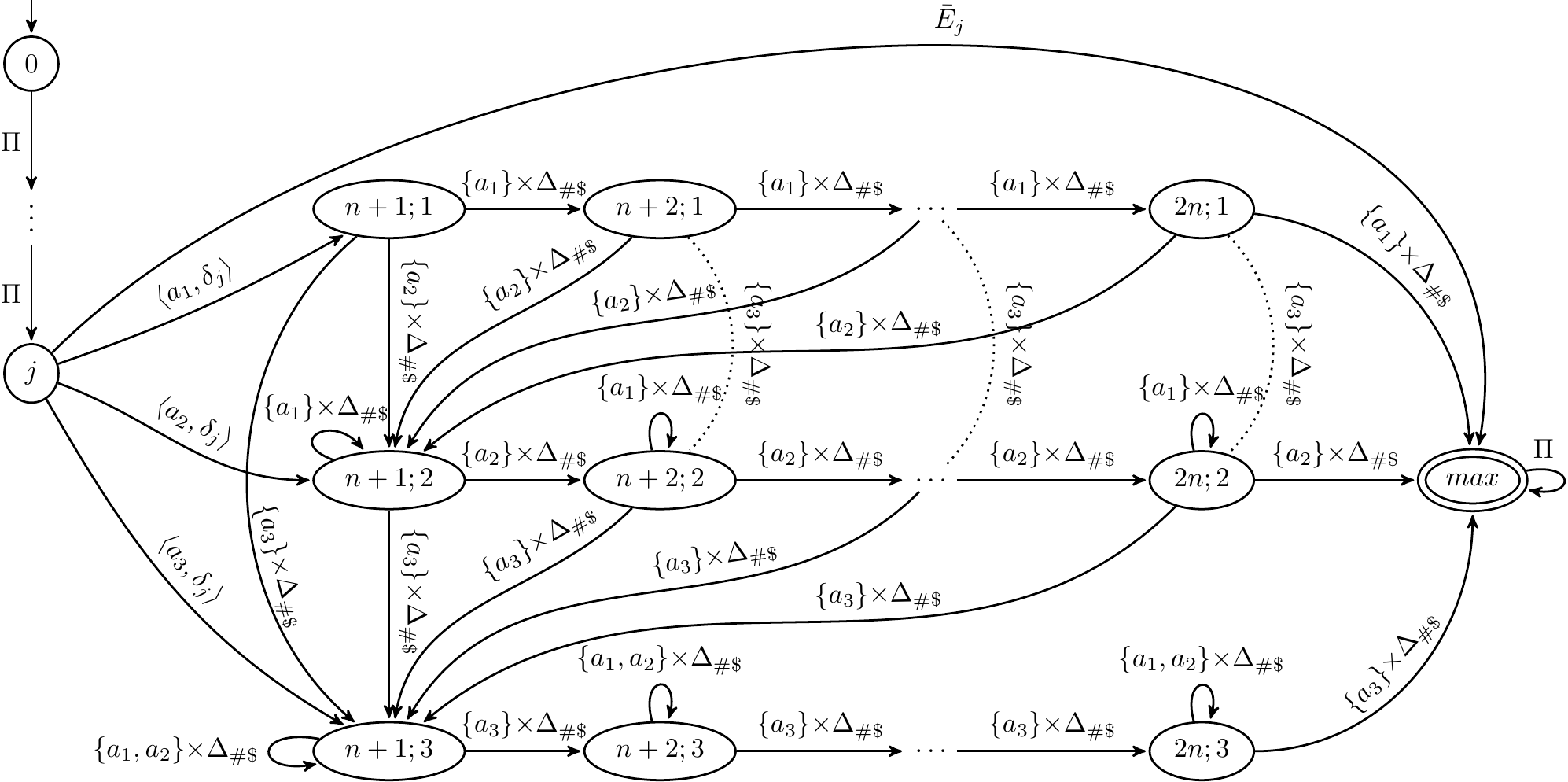}
      \caption{A ptNFA that accepts $\Pi^{j} \cdot \bar{E}_j\cdot\Pi^* + (\Pi^*\setminus (W_{n,n}\otimes \Deltaplus))$ with $\bar{E}_j=\Sigma_n\times (\Deltaplus\setminus\{\delta_j\})$ illustrated for $\Sigma_n=\{a_1,a_2,a_3\}$; only the relevant part of $\A_{n,n}$ is depicted.}
      \label{fig_const_1}
    \end{figure}
    The resulting automaton is a ptNFA, since it is complete, partially ordered, and satisfies the UMS property -- for every state $q$ different from {\em max}, the component co-reachable and reachable under the letters of self-loops in state $q$ is only state $q$ itself. This ptNFA accepts all words of $\Pi^{j} \cdot \bar{E}_j\cdot\Pi^*$.
    
    We now show that any word $w$ that is accepted by this ptNFA and that does not belong to $\Pi^{j} \cdot \bar{E}_j\cdot\Pi^*$ is such that $w[1] \neq W_{n,n}$, \ie, $w$ belongs to $\Pi^*\setminus (W_{n,n}\otimes\Deltaplus)$.
    To this aim, assume that $w[1]=W_{n,n}$ and that $w$ is of the form $w=u \tuple{a_i,\delta_j} v$ such that
    $|u|=j$. Then, $\tuple{a_i,\delta_j}$ is the letter under which the state $(n+1;i)$ of $\A_{n,n}$ is reached and $v$ is read in the $\A_{n,n}$-part of the ptNFA. By Corollary~\ref{WnnStructure}, $v$ is not accepted from state $(n+1;i)$, and hence the ptNFA does not accept $w$. Therefore, the ptNFA accepts the language $\Pi^{j} \cdot \bar{E}_j\cdot\Pi^* + (\Pi^*\setminus (W_{n,n}\otimes\Deltaplus))$. Constructing such a ptNFA for polynomially many expressions $\Pi^{j} \cdot \bar{E}_j \cdot \Pi^*$ and taking their disjoint union results in a polynomially large ptNFA accepting the language $\sum_{j=0}^{p(|x|)+1} (\Pi^{j} \cdot \bar{E}_j\cdot\Pi^*) + (\Pi^*\setminus (W_{n,n}\otimes\Deltaplus))$.

    Notice that we ensure that the surrounding $\#$ in the initial configuration are present.

    For {\bf (B)}, we check for incorrect transitions and invalid encodings. Consider again the encoding $\#w_1\#\cdots \allowbreak \#w_m\#$ of a sequence of configurations with a word over $\Delta\cup\{\#\}$. We can assume that $w_1$ encodes the initial configuration according to {\bf (A)}. In an encoding of a valid run, the symbol at any position $j\geq p(|x|)+2$ is uniquely determined by the three symbols at positions $j-p(|x|)-2$, $j-p(|x|)-1$, and $j-p(|x|)$, corresponding to the cell and its left and right neighbor in the previous configuration. Given symbols $\delta_\ell,\delta,\delta_r\in\Delta\cup\{\#\}$, we can therefore define $f(\delta_\ell,\delta,\delta_r)\in\Delta\cup\{\#\}$ to be the symbol required in the next configuration. The case where $\delta_\ell=\#$ or $\delta_r=\#$ corresponds to transitions applied at the left and right edge of the tape, respectively; for the cases where $\delta=\#$, we define $f(\delta_\ell,\delta,\delta_r)=\#$, ensuring that the separator $\#$ is always present in successor configurations as well. We extend $f$ to $f\colon \Deltaplus^3\to \Deltaplus$. We can then check for invalid transitions using the regular expression
    \begin{equation}\label{eq_re_wrong_trans}
      \Pi^*\, \sum_{\delta_\ell,\delta,\delta_r \in \Deltaplus} (\Sigma_n\otimes \delta_\ell\delta\delta_r)\cdot  \Pi^{p(|x|)-1} \cdot \hat{f}(\delta_\ell,\delta,\delta_r)\cdot \Pi^*
    \end{equation}
    where $\hat{f}(\delta_\ell,\delta,\delta_r)$ is $\Pi \setminus (\Sigma_n \times \{f(\delta_\ell,\delta,\delta_r), \$\})$. Here, we also consider $\$$ as a correct continuation instead of the expected next configuration symbol.
    Note that \eqref{eq_re_wrong_trans} only detects wrong transitions and invalid encodings if a long enough next configuration exists. The case that the run stops prematurely is covered in {\bf (C)}.

    Expression \eqref{eq_re_wrong_trans} is not readily encoded in a ptNFA because of the leading $\Pi^*$. To address this issue, we replace $\Pi^*$ by the expression $\Pi^{\leq |W_{n,n}|-1}$, which matches every word $w\in\Pi^*$ with $|w|\leq |W_{n,n}|-1$. This suffices for our purpose because the computations of interest are of length $|W_{n,n}|$ and a violation of a correct computation must occur.
    As $|W_{n,n}|-1$ is exponential, we cannot encode it directly and we use $\enc(\A_{n,n})$ instead.

    \begin{figure}
      \centering
      \includegraphics[angle=0,scale=.7]{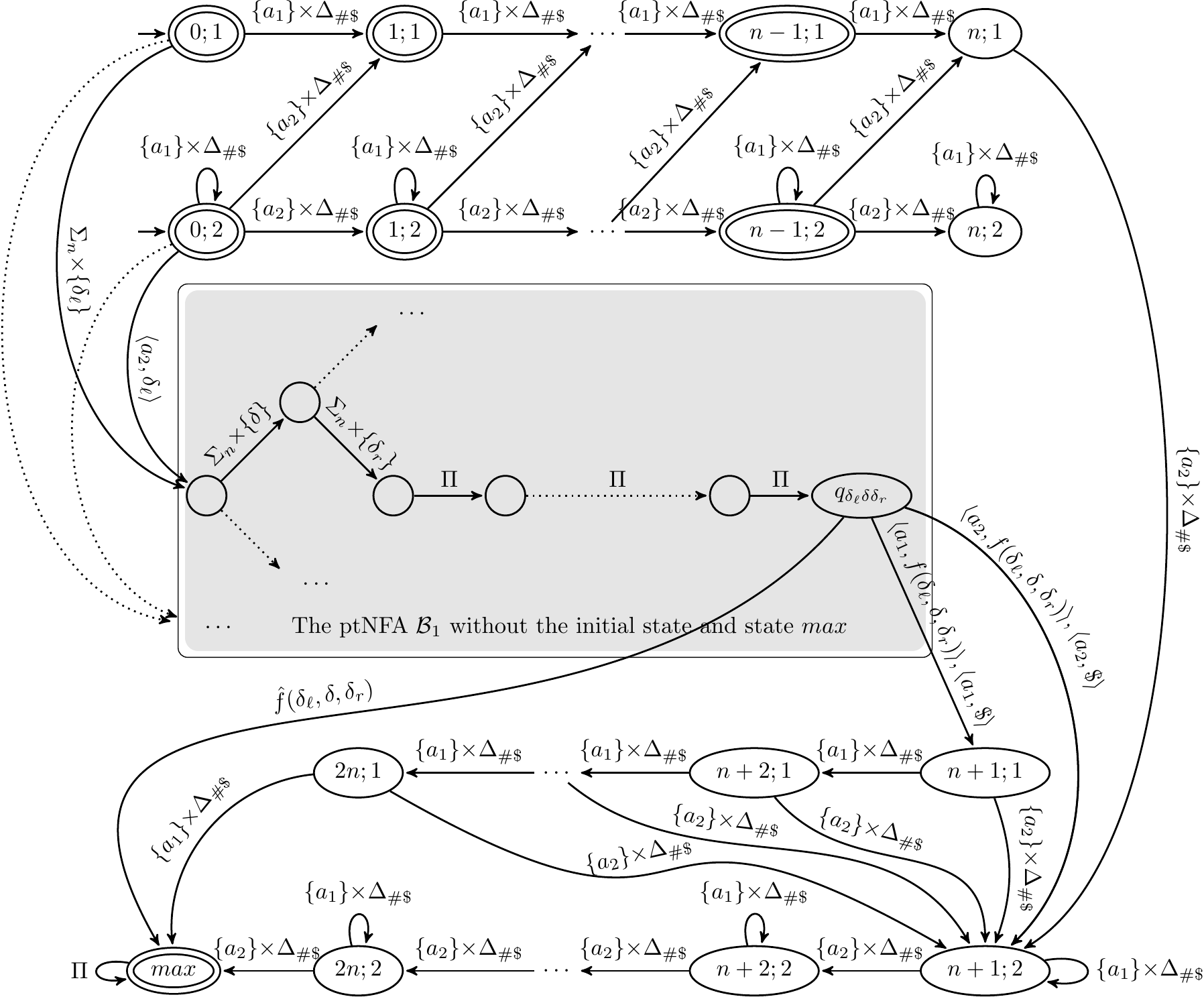}
      \caption{The ptNFA $\B$ that combines ptNFA $\enc(\A_{n,n})$ with ptNFA $\B_1$, for $n=2$; the edges from accepting states of $\enc(\A_{n,n})$ to the second states of $\mathcal{B}_1$ are illustrated only for states $(0;1)$ and $(0;2)$.}
      \label{fig_tree}
    \end{figure}

    In detail, let $E$ be the expression obtained from \eqref{eq_re_wrong_trans} by omitting the initial $\Pi^*$, and let $\B_1$ be an incomplete DFA that accepts the language of $E$ constructed as follows. From the initial state, we construct a tree-shaped DFA corresponding to all words of length three of the finite language $\sum_{\delta_\ell,\delta,\delta_r \in \Deltaplus} (\Sigma_n\otimes \delta_\ell \delta \delta_r)$. To every leaf state, we add a path under $\Pi$ of length $p(|x|)-1$. The result corresponds to the language $\sum_{\delta_\ell,\delta,\delta_r \in \Deltaplus} (\Sigma_n\otimes \delta_\ell \delta \delta_r) \cdot \Pi^{p(|x|)-1}$. Let $q_{\delta_\ell \delta \delta_r}$ denote the states uniquely determined by the words in $(\Sigma_n\otimes \delta_\ell \delta \delta_r) \cdot \Pi^{p(|x|)-1}$. We add the transitions $q_{\delta_\ell \delta \delta_r} \xrightarrow{\hat{f}(\delta_\ell,\delta,\delta_r)} max$, where $max$ is the state of $\enc(\A_{n,n})$. Automaton $\B_1$ is illustrated in the middle part of Figure~\ref{fig_tree}, except for the initial state that is identified with accepting states of the ptNFA $\enc(\A_{n,n})$ as described below, and state $max$ that is a state of $\enc(\A_{n,n})$. It is an incomplete DFA for the language of $E$ of polynomial size. It is incomplete only in states $q_{\delta_r\delta\delta_\ell}$ due to the missing transitions under $\Sigma_n\times \{f(\delta_\ell,\delta,\delta_r)\}$ and $\Sigma_n\times \{\$\}$. We complete it by adding the missing transitions to the states of the ptNFA $\enc(\A_{n,n})$. Namely, for $z\in \{ \tuple{a_i,f(\delta_\ell,\delta,\delta_r)},\tuple{a_i,\$}\}$, we add $q_{\delta_\ell \delta \delta_r} \xrightarrow{~z~} (n+1;i)$; see Figure~\ref{fig_tree} for an illustration.

    We construct a ptNFA $\B$ accepting the language $(\Pi^*\setminus (W_{n,n}\otimes \Deltaplus)) + (\Pi^{\leq |W_{n,n}|-1}\cdot E)$ by merging $\enc(\A_{n,n})$ with the DFA $\B_1$, where we add edges labeled with $(\Sigma_n \setminus \Sigma(q)) \times \{\delta_{\ell}\}$ from any accepting state $q$ of $\enc(\A_{n,n})$ to the states of the second level of the tree-shaped DFA $\B_1$ (the leftmost states in the middle part of Figure~\ref{fig_tree}). This step is justified by Corollary~\ref{cor_nonacc}, since we do not need to consider connecting $\B_1$ to non-accepting states of $\enc(\A_{n,n})$ and it is not possible to connect it to state $max$. The fact that $\enc(\A_{n,n})$ alone accepts $\Pi^*\setminus (W_{n,n}\otimes\Deltaplus)$ was shown in Lemma~\ref{exprponfas}. This also implies that it accepts all words of length $\leq |W_{n,n}|-1$ as needed to show that $(\Pi^{\leq |W_{n,n}|-1}\cdot E)$ is accepted. Entering states of $\B_1$ after reading a word of length $\geq|W_{n,n}|$ is possible but all such words are longer than $W_{n,n}$, and hence in $\Pi^*\setminus (W_{n,n}\otimes\Deltaplus)$.

    To show that the completion does not affect the language, let $w$ be a word that is read but not accepted by $\B_1$, and let $u$ lead $\enc(\A_{n,n})$ to a state from which $w$ is read in $\B_1$. Since $w$ is not accepted, there is a letter $z$ and a word $v$ such that $uwz$ goes to state $(n+1;i)$ of $\enc(\A_{n,n})$ (for $z[1]=a_i$) and $v$ leads $\enc(\A_{n,n})$ from state $(n+1;i)$ to state $max$. If $u[1]w[1]a_iv[1] = W_{n,n,}$, then $v$ is not accepted from $(n+1;i)$ by Corollary~\ref{WnnStructure}, and hence $uwzv[1]\neq W_{n,n}$; thus, $uwzv\notin (W_{n,n}\otimes\Deltaplus)$.

    It remains to show that for every proper prefix $w_{n,n}$ of $W_{n,n}$, there is a state in $\A_{n,n}$ reached by $w_{n,n}$ that has transitions to the second states of $\B_1$, and hence the check represented by $E$ in $\Pi^{\leq |W_{n,n}|-1}\cdot E$ can be performed. In other words, if $a_{n,n}$ denotes the letter following $w_{n,n}$ in $W_{n,n}$, then there must be a state reachable by $w_{n,n}$ in $\A_{n,n}$ that does not have a self-loop under $a_{n,n}$. However, this follows from the fact that $\A_{n,n}$ accepts everything but $W_{n,n}$, since then the DFA obtained from $\A_{n,n}$ by the standard subset construction has a path of length $\binom{2n}{n}-1$ labeled with $W_{n,n}$ without any loop. Moreover, any state of this path in the DFA is a subset of states of $\A_{n,n}$, and therefore at least one of the states reachable under $w_{n,n}$ in $\A_{n,n}$ does not have a self-loop under $a_{n,n}$.

    The ptNFA $\B$ thus accepts the language $(\Pi^{\leq |W_{n,n}|-1}\cdot E) + (\Pi^*\setminus (W_{n,n}\otimes \Deltaplus))$.
    
    Finally, for {\bf (C)}, we detect all words that 
      {\bf(C.1)} end in a configuration that is incomplete (too short), possibly followed by at most $p(|x|)$ trailing $\$$,
      {\bf(C.2)} end in a configuration that is not in the accepting state $q_f$, which must be the very first tape symbol by our assumption,
      {\bf(C.3)} end with more than $p(|x|)$ trailing $\$$, or
      {\bf(C.4)} contain $\$$ not only at the last positions, that is, we detect all words where $\$$ is followed by a different symbol.
    For a word $v$, we use $v^{\leq i}$ to abbreviate $\varepsilon + v + \cdots + v^i$, and we define $\bar{E}_f= (T\times (Q\setminus\{q_f\}))$. Then these properties are expressed by the following expressions:    
    \begin{align}
    \begin{array}{@{}l@{~~}l@{}}
    \textbf{(C.1)} & \Pi^*\, (\Sigma_n\times \{\#\})\, (\Pi + \cdots + \Pi^{p(|x|)})\, (\Sigma_n \times \{\$\})^{\leq p(|x|)} + {} \\
    \textbf{(C.2)} & \Pi^*\, (\Sigma_n \times \bar{E}_f)\, \Pi^{p(|x|)-1}\, (\Sigma_n \times \{\#\})\, (\Sigma_n \times \{\$\})^{\leq p(|x|)} + {} \\
    \textbf{(C.3)} & \Pi^*\, (\Sigma_n \times \{\$\})^{p(|x|)+1} + {} \\
    \textbf{(C.4)} & (\Pi\setminus (\Sigma_n\times \{\$\}))^*\, (\Sigma_n \times \{\$\})\, (\Sigma_n\times \{\$\})^*\, (\Pi\setminus (\Sigma_n\times \{\$\}))\, \Pi^*
    \end{array}\hspace{2cm}\label{eq_re_wrong_final}
    \end{align}

    As before, we cannot encode the expression directly as a ptNFA, but we can perform a similar construction as the one used for encoding \eqref{eq_re_wrong_trans}. Namely, ptNFAs for {\bf (C.1)} and {\bf (C.2)} are illustrated in Figure~\ref{figC2} and for {\bf (C.3)} in Figure~\ref{figC3}, where, analogously to Figure~\ref{fig_tree}, the edges from the accepting states of $\enc(\A_{n,n})$ to the second state of the automaton $\mathcal{C}_i$ that recognizes the language of expression {\bf (C.i)} without the initial $\Pi^*$, for $i=1,2,3$, are illustrated only for states $(0;1)$ and $(0;2)$. However, the reader should keep in mind that such transitions lead from every accepting state of $\enc(\A_{n,n})$ to the second state of $\mathcal{C}_i$. Finally, {\bf (C.4)} can be represented by a three-state partially ordered and confluent DFA.

    The expressions \eqref{eq_re_wrong_start}--\eqref{eq_re_wrong_final} together then detect all non-accepting or wrongly encoded runs of $\M$. In particular, if we start from the correct initial configuration (\eqref{eq_re_wrong_start} does not match), then for \eqref{eq_re_wrong_trans} not to match, all complete future configurations must have exactly one state, be delimited by encodings of $\#$, and correctly follow from the previous configurations. We have shown how to express each of the expressions as a ptNFA. Taking the disjoint union of all these ptNFAs results in a single ptNFA. This ptNFA is of polynomial size, and hence we have reduced the word problem of polynomially-space-bounded Turing machines to the universality problem for ptNFAs.
  \end{proof}    
  
  \begin{figure}
    \centering
    \includegraphics[angle=0,scale=.7]{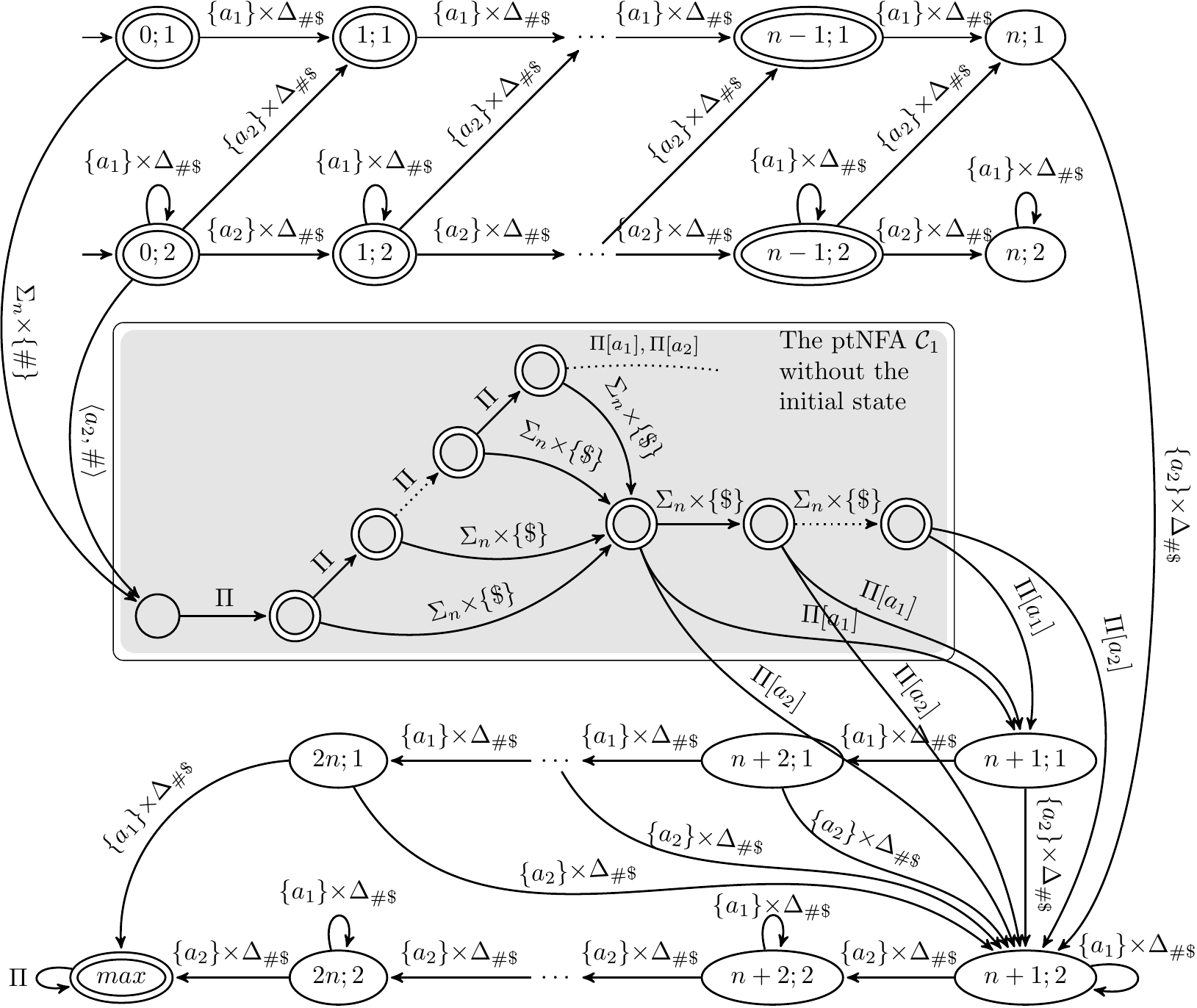}
    
    \vspace{.7cm}
    
    \includegraphics[angle=0,scale=.7]{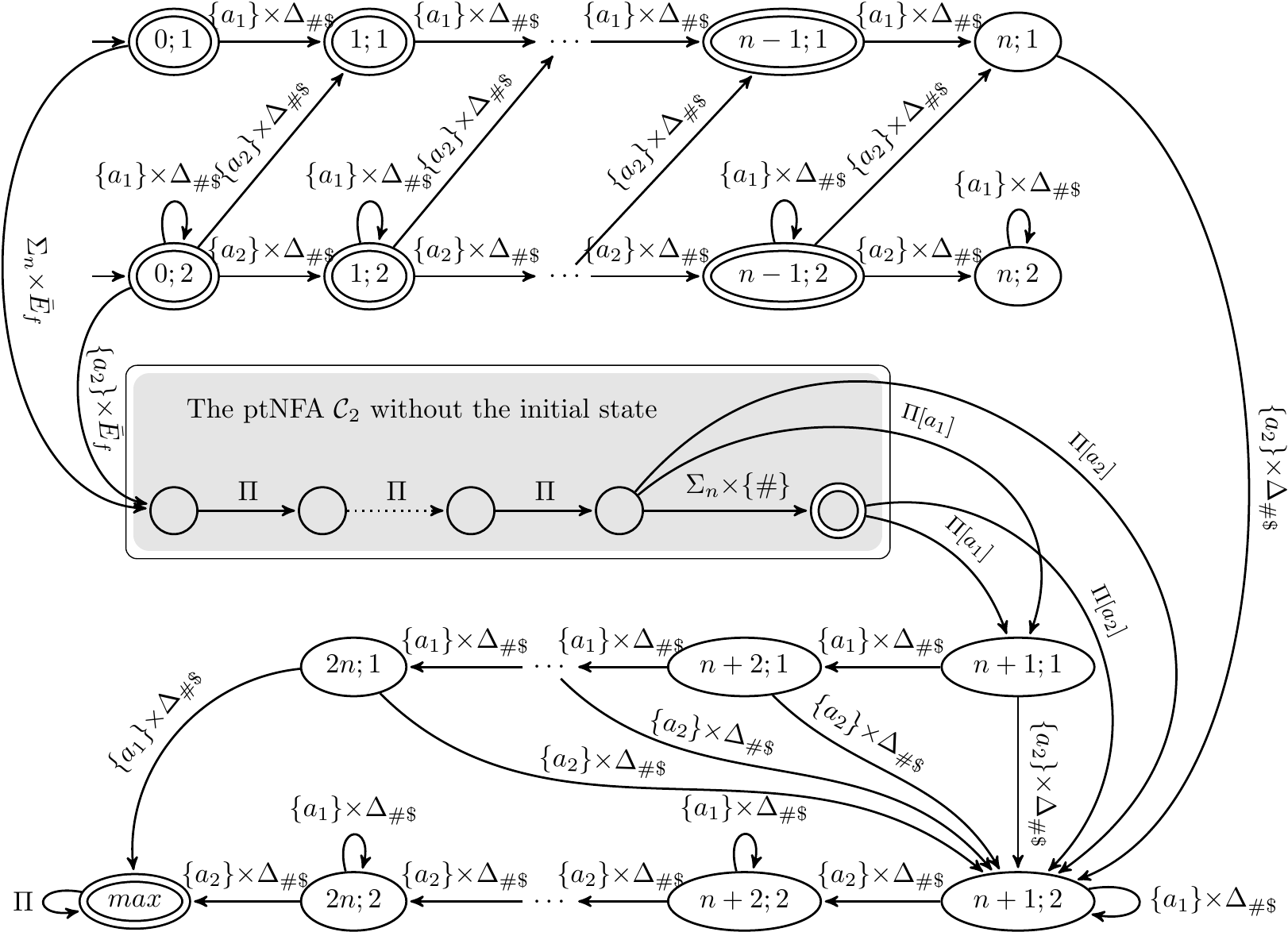}
    \vspace{0cm}
    \caption{ptNFAs for {\bf (C.1)} (top) and {\bf (C.2)} (bottom) illustrated for $n=2$; automata $\mathcal{C}_1$ and $\mathcal{C}_2$, recognizing resp.\! {\bf (C.1)} and {\bf (C.2)} without the initial $\Pi^*$, are completed by adding transitions $q\xrightarrow{\Pi[a_i]} (n+1;i)$, where $\Pi[a_i]$ denotes all letters of $\Pi$ undefined in state $q$ with the first component of the letter being $a_i$.}
    \label{figC2}\label{figC1}
  \end{figure}

  \begin{figure}
    \centering
    \includegraphics[angle=0,scale=.7]{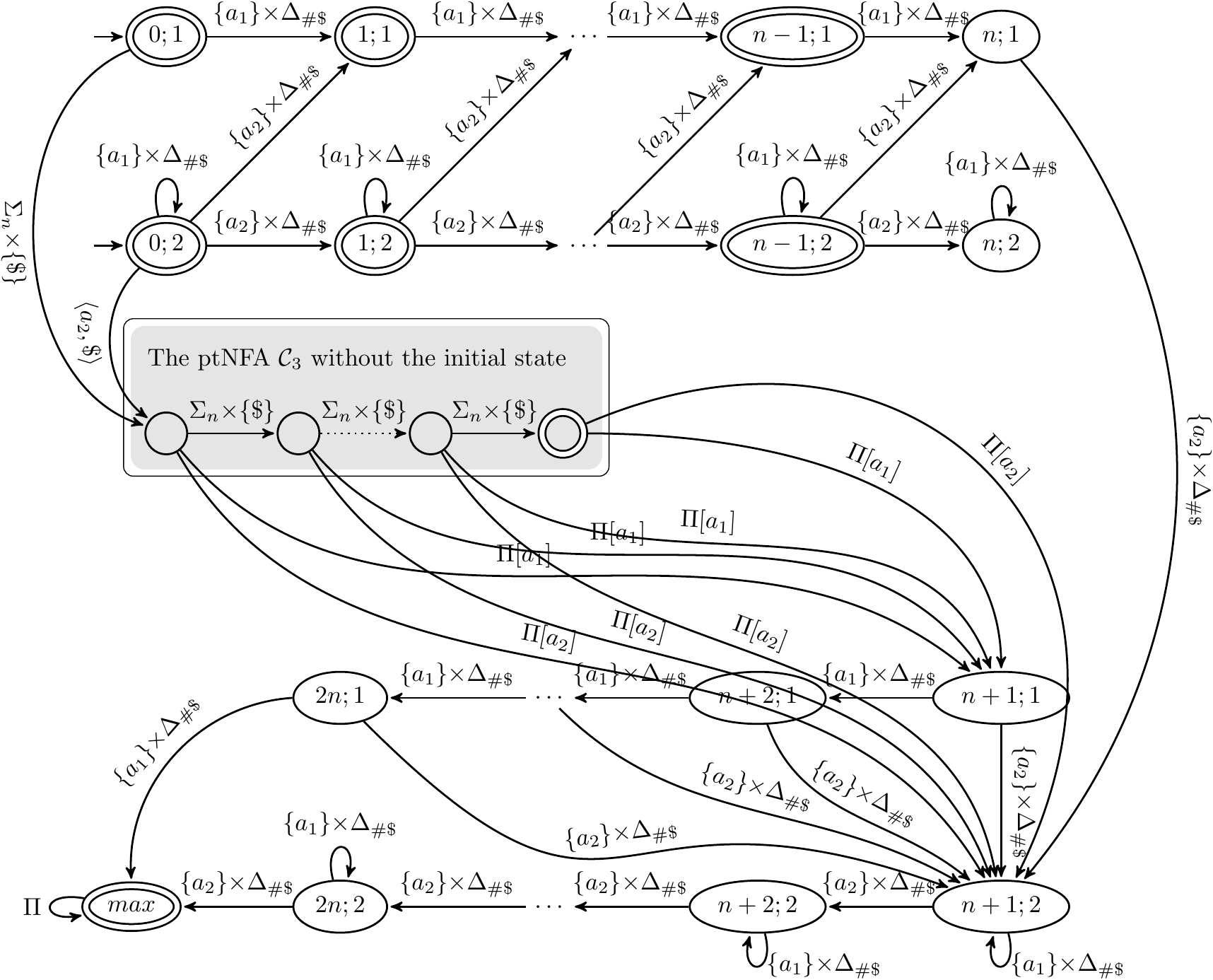}
    \caption{The ptNFA for {\bf (C.3)} illustrated for $n=2$; automaton $\mathcal{C}_3$, recognizing {\bf (C.3)} without the initial $\Pi^*$, is completed by adding transitions $q\xrightarrow{\Pi[a_i]} (n+1;i)$, where $\Pi[a_i]$ denotes all letters of $\Pi$ undefined in $q$ with the first component of the letter being $a_i$.}
    \label{figC3} 
  \end{figure}

\FloatBarrier

\section{Complexity of Deciding $k$-Piecewise Testability}
  The effort to simplify XML Schema using the BonXai language~\cite{MartensNNS15} led to the study of ($k$-)piecewise testable languages~\cite{icalp2013,HofmanM15}. A regular language over $\Sigma$ is {\em piecewise testable\/} if it is a finite boolean combination of languages of the form $\Sigma^* a_1 \Sigma^* a_2 \Sigma^* \cdots \Sigma^* a_n \Sigma^*$, where $a_i\in\Sigma$ for $i=1,\ldots,n$, $n\ge 0$. Let $k\ge 0$ be an integer. The language is {\em $k$-piecewise testable\/} if $n\le k$. The {\em $k$-piecewise testability problem\/} asks whether a given automaton recognizes a $k$-piecewise testable language.

  In this section, we study the complexity of deciding $k$-piecewise testability for partially ordered automata. Our results are summarized in Table~\ref{table1b}.
\begin{table*}\centering
  \ra{1.2}
  \begin{tabular}{@{}lllll@{}}\toprule
        & Unary alphabet
        & Fixed alphabet
        & \multicolumn{2}{l}{Arbitrary alphabet}\\
        \cmidrule{4-5}
        & $|\Sigma|=1$
        & $|\Sigma|\ge 2$
        & $k\le 3$
        & $k\ge 4$ \\
        \midrule
        DFA   & L-c            (Thm. \ref{DFAlc})
              & \NL-c          (Thm. \ref{kPTfixedDFA})
              & \NL-c          \cite{dlt15}
              & \coNP-c        \cite{KKP}    \\
        ptNFA & \NL-c          (Thm. \ref{thmP})
              & \coNP-c        (Thm. \ref{theorem27})
              & \multicolumn{2}{l}{\PSpace-c (Thm. \ref{thmMainUniv})} \\
       rpoNFA & \NL-c          
              & \coNP-c        \cite{mfcs16:mktmmt_full}
              & \multicolumn{2}{l}{\PSpace-c} \\
        poNFA & \NL-c          (Thm. \ref{poNFAnl})
              & \PSpace-c      (Thm. \ref{thm72})
              & \multicolumn{2}{l}{\PSpace-c} \\
          NFA & \coNP-c        (Thm. \ref{thm30})
              & \PSpace-c      \cite{dlt15}
              & \multicolumn{2}{l}{\PSpace-c \cite{dlt15}} \\
    \bottomrule
  \end{tabular}
  \caption{Complexity of deciding $k$-piecewise testability.}
  \label{table1b}
\end{table*}

  To simplify proofs, we make use of the following lemma that will save us a lot of work by directly obtaining the lower bounds from the complexity results for universality. 
  
  We first need some additional definitions. For $k\ge 0$, let $\sub_k(v) =\{u\in\Sigma^* \mid u\preccurlyeq v,\, |u|\le k\}$. For two words $w_1,w_2$, we write $w_1 \sim_k w_2$ if $\sub_k(w_1)=\sub_k(w_2)$. The relation $\sim_k$ is a congruence with finite index, and every $k$-piecewise testable language is a finite union of $\sim_k$ classes~\cite{Simon1972}. 
  \begin{lem}\label{lemma0k}
    Let $k\ge 0$ be a constant. Then the universality problem for ptNFAs (resp.\ rpoNFAs, poNFAs, NFAs, DFAs) is log-space reducible to the $k$-piecewise testability problem for ptNFAs (resp.\ rpoNFAs, poNFAs, NFAs, DFAs).
  \end{lem}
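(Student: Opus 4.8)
The plan is to give a single log-space reduction that works uniformly for all five automaton types. Fix the constant $k$ and let $\A$ over $\Sigma$ be the input instance of universality. I would introduce one fresh letter $c\notin\Sigma$, set $\Sigma'=\Sigma\cup\{c\}$, and build an automaton $\A'$ of the same type as $\A$ recognizing
\[
  L' \;=\; \big((\Sigma')^*\setminus c^k\Sigma^*\big)\ \cup\ c^k\,L(\A),
\]
so that $\A'$ rejects a word exactly when it has the form $c^k w$ with $w\in\Sigma^*$ and $w\notin L(\A)$. The construction exploits the saturation of $\sim_k$: since $c^k$ and $c^{k+1}$ already contain every power $c^i$ with $i\le k$ as a subword, and since any word of the form $c^m w$ with $w\in\Sigma^*$ splits as a block of $c$'s followed by a $c$-free part from which at most $k$ symbols can be selected, one has $c^k w\sim_k c^{k+1}w$ for every $w\in\Sigma^*$.

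Correctness is then immediate and is the easy half of the argument. If $L(\A)=\Sigma^*$ there are no rejected words, so $L'=(\Sigma')^*$, which is $0$-piecewise testable and hence $k$-piecewise testable. Conversely, if some $w\notin L(\A)$, then $c^kw\notin L'$ while $c^{k+1}w=c^k(cw)\in L'$, because $cw\notin\Sigma^*$ places $c^{k+1}w$ in $(\Sigma')^*\setminus c^k\Sigma^*$. Together with $c^kw\sim_k c^{k+1}w$ this exhibits a $\sim_k$-class that meets both $L'$ and its complement, so $L'$ is not a union of $\sim_k$-classes and therefore not $k$-piecewise testable. Thus $L'$ is $k$-piecewise testable if and only if $\A$ is universal.

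It remains to realize $L'$ by an $\A'$ of the correct type in logarithmic space. Writing $\A=(Q,\Sigma,\delta,I,F)$, I would add a chain of fresh states $p_0,\dots,p_k$, with $p_0$ initial and $p_0,\dots,p_{k-1}$ accepting, together with a single accept-all sink $s$ carrying self-loops on every letter of $\Sigma'$. The transitions are $p_i\xrightarrow{c}p_{i+1}$ for $i<k$; each premature step $p_i\xrightarrow{\sigma}s$ for $\sigma\in\Sigma$, $i<k$, and the step $p_k\xrightarrow{c}s$ send the run to the sink (the pattern $c^k\Sigma^*$ has been left); from $p_k$ the $\Sigma$-transitions mimic those of $I$ so that $\A$ is run on the $c$-free remainder (and $p_k$ is accepting iff $\eps\in L(\A)$); finally every state of $\A$ gains a transition to $s$ under $c$. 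Since $k$ is a fixed constant, this adds only $O(1)$ states and one edge per state, so the reduction is computable in logarithmic space, and a short trace confirms that $\A'$ recognizes $L'$.

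The delicate step I expect to be the main obstacle is checking that $\A'$ stays in the \emph{same} class as $\A$, especially for ptNFAs. Partial order holds because the fresh states are linearly ordered below the copy of $\A$, which lies below $s$, and $s$ is the only state carrying loops, so all cycles are self-loops. Self-loop-determinism is preserved since the new states carry no self-loops and the added $c$-edges leave states having no $c$-self-loop (as $c\notin\Sigma$), while the $\Sigma$-behaviour inside $\A$ is untouched. Completeness is preserved because every fresh state is given all of $\Sigma'$ and each $\A$-state receives its missing $c$-edge to $s$. For confluence I would avoid a direct verification and instead invoke the observation stated before Lemma~\ref{exprponfas}: a complete automaton with a single maximal state reachable from every state is confluent; here $s$ is reachable from every state and, because every other state can reach $s$, is the unique maximal state, so $\A'$ is a ptNFA. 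Determinism in the DFA case is immediate from the construction. Hence $\A'$ is respectively a ptNFA, rpoNFA, poNFA, NFA, or DFA whenever $\A$ is, which completes the reduction.
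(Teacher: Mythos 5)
Your overall strategy (pad the input with a prefix that saturates $\sub_k$, so that a non-universality witness yields two $\sim_k$-equivalent words with different acceptance status) is exactly the idea of the paper's proof, and your correctness argument for $L'$ is sound. However, there is a genuine gap in the class-preservation step for ptNFAs. Your state $p_k$ has $\delta'(p_k,\sigma)=\delta(I,\sigma)$ for every $\sigma\in\Sigma$, i.e., it merges the successors of \emph{all} initial states. Confluence at $p_k$ then demands, for all $x\in\delta(I,a)$ and $y\in\delta(I,b)$, a word $w\in\{a,b\}^*$ with $\delta'(x,w)\cap\delta'(y,w)\neq\emptyset$. Take $\A$ to be the disjoint union of two ptNFAs $\A_1,\A_2$ over $\Sigma$ (the paper explicitly notes this is again a ptNFA), and pick $x\in\delta(I_1,a)$, $y\in\delta(I_2,b)$: since $w$ ranges over $\{a,b\}^*\subseteq\Sigma^*$, the runs from $x$ and $y$ stay in their respective disjoint components and never meet, so $\A'$ is not confluent and hence not a ptNFA. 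The observation you invoke (``complete with a single maximal state reachable from every state implies confluent'') does not rescue you here: reaching the sink $s$ requires the fresh letter $c$, whereas confluence requires a merging word over the two specific letters $a,b$ involved, and in this example no such word exists. The paper avoids this by attaching a \emph{separate deterministic} chain to \emph{each} initial state $i_j$ (so every chain state has a single successor under every letter, making confluence at the new states trivial), rather than a single chain that fans out into $\delta(I,\sigma)$.

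Two further points, less critical but worth noting. First, your reduction changes the alphabet from $\Sigma$ to $\Sigma\cup\{c\}$. This still proves the lemma as literally stated, but the paper's applications of the lemma (hardness of $k$-piecewise testability for \emph{unary} poNFAs/ptNFAs and for \emph{binary} ptNFAs, Theorems~\ref{thmP} and~\ref{theorem27}) need the reduction to preserve the alphabet; the paper's construction works over $\Sigma$ itself, using the prefix $(a_1\cdots a_{|\Sigma|})^k$ in place of your $c^k$. Second, your claim that $\A'$ recognizes $L'$ fails for incomplete NFAs/poNFAs/rpoNFAs: if $\delta(I,v)=\emptyset$ for some $v\in\Sigma^+$, the word $c^kvcu'$ is rejected by $\A'$ although it lies in $(\Sigma')^*\setminus c^k\Sigma^*$. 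The reduction itself survives (the witness pair $c^kw$, $c^{k+1}w$ is still classified correctly, and universality forces $\delta(I,v)\neq\emptyset$ for all $v$), but the statement as you wrote it is not accurate and should be weakened to the biconditional you actually need.
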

  \begin{proof}
    Let $\M$ over $\Sigma$ be a ptNFA (resp.\ rpoNFA, poNFA, NFA, DFA) recognizing a nonempty language. We construct a ptNFA (resp. rpoNFA, poNFA, NFA, DFA) $\M_{k}$ over $\Sigma$ from $\M$ as depicted in Figure~\ref{reduction2}. Namely, we add $|\Sigma|k$ new states $i_{j,1},\ldots,i_{j,|\Sigma|k}$ for every initial state $i_j$ of $\M$. For $1\le \ell < |\Sigma|k$, we add transitions from $i_{j,\ell}$ to $i_{j,\ell+1}$ and a transition from $i_{j,|\Sigma|k}$ to the initial state $i_j$ of $\M$ under all letters of $\Sigma$. The initial states of $\M_{k}$ are the states $i_{j,1}$, the accepting states are the accepting states of $\M$ and the states $i_{j,k+1},\ldots,i_{j,|\Sigma|k}$. Note that $\M_k$ is a ptNFA (resp. rpoNFA, poNFA, NFA, DFA) constructible in logarithmic space.
    \begin{figure}[b]
      \centering
      \includegraphics[scale=.9]{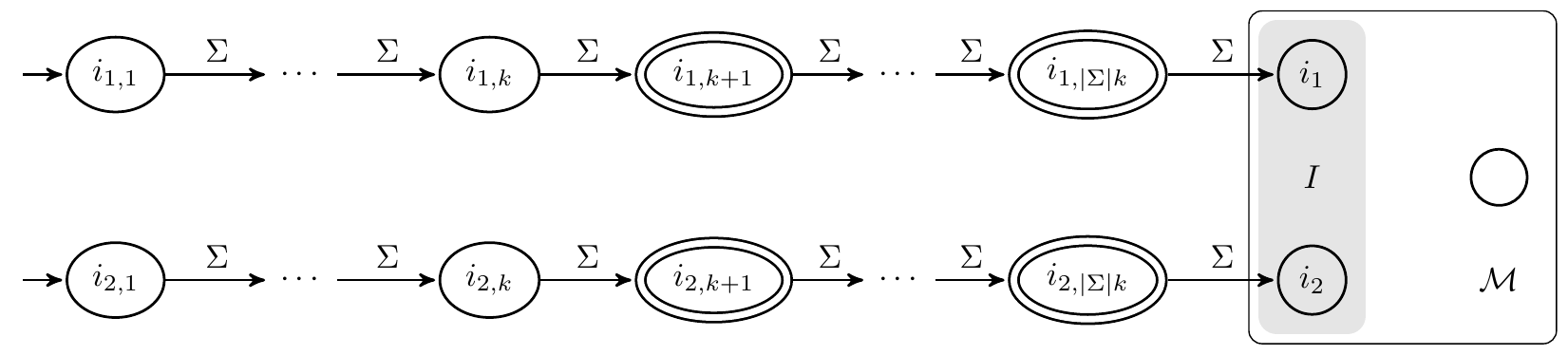}
      \caption{The ptNFA $\M_{k}$ constructed from a ptNFA $\M$ with two initial states.}
      \label{reduction2}
    \end{figure}

    If the language $L(\M)$ is universal, \ie, $L(\M)=\Sigma^*$, then the language $L(\M_k)=\Sigma^k\Sigma^*$ is $k$-piecewise testable because it consists of all words of length at least $k$, and hence if there are $x\in L(\M_k)$ and $y\notin L(\M_k)$, then the length of $y$ is less than $k$, and hence $x\sim_k y$ does not hold because $\sub_k(x)$ contains a word of length $k$ that is not in $\sub_k(y)$.
    
    If the language $L(\M)$ is not universal, then there exist $x \in L(\M)$ and $y\notin L(\M)$. Let $\Sigma$ be $\{a_1,a_2,\ldots,a_{|\Sigma|}\}$. Then $(a_1a_2\cdots a_{|\Sigma|})^kx \sim_k (a_1a_2\cdots a_{|\Sigma|})^k y$, since $\sub_{k}((a_1a_2\cdots a_{|\Sigma|})^k)=\{ u\in\Sigma^* \mid |u| \le k\}$, and $(a_1a_2\cdots a_{|\Sigma|})^kx\in L(\M_{k})$ and $(a_1a_2\cdots a_{|\Sigma|})^ky \notin L(\M_k)$, which shows that the language $L(\M_{k})$ is not $k$-piecewise testable.
  \end{proof}

  We immediately have the following consequences. The first consequence is that deciding $k$-piecewise testability for ptNFAs, where the alphabet may grow with the number of states, is \PSpace-complete.
  \begin{thm}\label{thmMainUniv}
    Deciding $k$-piecewise testability for ptNFAs is \PSpace-complete.
  \end{thm}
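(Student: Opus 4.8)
The plan is to treat this theorem as a corollary of the two results already in hand, namely Lemma~\ref{lemma0k} and Theorem~\ref{thmMain}, rather than as a fresh construction. For membership in \PSpace, I would simply observe that every ptNFA is in particular an NFA, and that deciding $k$-piecewise testability for NFAs is already known to be in \PSpace~\cite{dlt15}. Restricting the admissible inputs to ptNFAs cannot increase the complexity, so the problem stays in \PSpace; this direction requires nothing beyond the class inclusion and does not even use the special structure of ptNFAs.

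For \PSpace-hardness the idea is to chain the earlier results. By Theorem~\ref{thmMain}, deciding universality for ptNFAs is \PSpace-complete, hence in particular \PSpace-hard. By Lemma~\ref{lemma0k}, for the fixed constant $k$ the universality problem for ptNFAs is log-space reducible to the $k$-piecewise testability problem for ptNFAs, and the crucial feature is that the automaton $\M_k$ produced by that reduction is again a ptNFA, so the reduction never leaves the ptNFA class. Composing the \PSpace-hard universality problem with this log-space reduction yields \PSpace-hardness of $k$-piecewise testability for ptNFAs, matching the membership bound and hence giving \PSpace-completeness.

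I do not expect a genuine obstacle here, since all the technical effort lives elsewhere: in the construction of the ptNFA $\A_{n,n}$ (Lemma~\ref{exprponfas}), in the Turing-machine simulation inside Theorem~\ref{thmMain}, and in the correctness of the reduction gadget of Lemma~\ref{lemma0k}. The only point worth stating explicitly is that the reduction of Lemma~\ref{lemma0k} applies uniformly to every fixed $k$ and preserves the growing-alphabet instances on which Theorem~\ref{thmMain} is hard, so the claimed completeness holds across all the values of $k$ recorded in Table~\ref{table1b}, and the hardness inherited from universality is tight against the general NFA upper bound.
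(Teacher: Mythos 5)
Your proposal is correct and matches the paper's own argument exactly: membership is inherited from the \PSpace{} upper bound for general NFAs, and hardness follows by composing the log-space reduction of Lemma~\ref{lemma0k} (which stays within the ptNFA class) with the \PSpace-hardness of ptNFA universality from Theorem~\ref{thmMain}. The paper states this in one line; your added remarks about the reduction being uniform in $k$ and alphabet-preserving are accurate but not needed beyond what the two cited results already guarantee.
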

  \begin{proof}
    Membership follows from the results for NFAs, hardness follows from Lemma~\ref{lemma0k} and Theorem~\ref{thmMain}. 
  \end{proof}

  The second consequence is that the complexity decreases if we only consider ptNFAs over a fixed alphabet. We distinguish two cases: (i) at least binary alphabets, and (ii) unary alphabets. 
  \begin{thm}\label{theorem27}
    Let $\Sigma$ be a fixed alphabet with at least two letters. Deciding $k$-piecewise testability for ptNFAs over $\Sigma$ is \coNP-complete.
  \end{thm}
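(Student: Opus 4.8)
The statement is a completeness result, so the plan is to treat the lower and upper bounds separately, assembling both from machinery already in place rather than building anything new.

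For the lower bound I would reuse Lemma~\ref{lemma0k} together with the universality result of Theorem~\ref{0ptNFAhard}. Theorem~\ref{0ptNFAhard} establishes \coNP-hardness of deciding universality for ptNFAs over a fixed alphabet already in the binary case, and Lemma~\ref{lemma0k} supplies, for the fixed constant $k$, a log-space reduction from universality for ptNFAs to $k$-piecewise testability for ptNFAs that leaves the alphabet $\Sigma$ unchanged and again outputs a ptNFA. Composing the two yields \coNP-hardness of the target problem over a fixed alphabet with $|\Sigma|\ge 2$. The one compatibility condition I must verify is the hypothesis of Lemma~\ref{lemma0k} that the input recognizes a nonempty language; this holds automatically for the instances constructed in Theorem~\ref{0ptNFAhard}, whose language always contains every word of length different from $n$ and is therefore nonempty.

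For the upper bound the cleanest route is to inherit membership from the strictly larger class of rpoNFAs. Every ptNFA is in particular an rpoNFA, so a ptNFA is a legal input to the $k$-piecewise testability problem for rpoNFAs and the answer, depending only on the recognized language, is unchanged. Since deciding $k$-piecewise testability for rpoNFAs over a fixed alphabet is in \coNP~\cite{mfcs16:mktmmt_full}, membership for ptNFAs follows at once. If a self-contained argument were preferred, one can certify the complement directly: $L(\M)$ fails to be $k$-piecewise testable exactly when there are words $u\sim_k v$ with $u\in L(\M)$ and $v\notin L(\M)$, because the $k$-piecewise testable languages are precisely the unions of $\sim_k$-classes. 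A nondeterministic procedure would guess such a pair together with an accepting path for $u$ and a rejecting certificate for $v$, and over a fixed alphabet the index of $\sim_k$ is constant, which is what makes a polynomially bounded distinguishing witness available.

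I do not expect a genuinely hard step here: both directions reduce to results proved earlier, and the result sits squarely between the rpoNFA bound above it and the universality bound feeding into it. The points deserving a line of care are (i) that Lemma~\ref{lemma0k} preserves both the automaton class (ptNFA in, ptNFA out) and the fixed alphabet, so that hardness is transferred strictly within the fixed-alphabet regime and does not leak into the growing-alphabet case of Theorem~\ref{thmMainUniv}; and (ii), for the optional direct membership argument, the bound on the length of a shortest distinguishing witness $u,v$, which is exactly the place where constancy of the index of $\sim_k$ over a fixed alphabet is used.
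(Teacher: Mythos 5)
Your proposal matches the paper's proof exactly: hardness by composing Lemma~\ref{lemma0k} with Theorem~\ref{0ptNFAhard}, and membership inherited from the \coNP\ upper bound for $k$-piecewise testability of rpoNFAs over a fixed alphabet~\cite{mfcs16:mktmmt_full}. The additional self-contained membership sketch is not needed and is the only place where a real argument (the polynomial bound on a distinguishing witness) would still have to be supplied, but the primary route is correct and is precisely what the paper does.
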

  \begin{proof}
    Hardness follows from Lemma~\ref{lemma0k} and Theorem~\ref{0ptNFAhard}, membership follows from the result for rpoNFAs~\cite[Corollary~24]{mfcs16:mktmmt_full}.
  \end{proof}

  This result is in contrast with an analogous result for DFAs. Deciding $k$-piecewise testability for DFAs over a fixed alphabet is in \Ptime~\cite{KKP}. A more precise complexity can be shown.
  \begin{thm}\label{kPTfixedDFA}
    Let $\Sigma$ be a fixed alphabet with at least two letters. Deciding $k$-piecewise testability for DFAs over $\Sigma$ is \NL-complete.
  \end{thm}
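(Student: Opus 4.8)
The plan is to prove the two matching bounds separately, the hardness being routine and the membership being the substantial part.

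For \NL-hardness I would invoke Lemma~\ref{lemma0k}. Fixing any constant $k\ge 1$, that lemma gives a log-space reduction from the universality problem for DFAs over $\Sigma$ to the $k$-piecewise testability problem for DFAs over the \emph{same} alphabet $\Sigma$. Since deciding universality for DFAs over a fixed alphabet with $|\Sigma|\ge 2$ is already \NL-complete~\cite{Jones75}, composing the two reductions shows that deciding $k$-piecewise testability for DFAs over a binary (hence any fixed $|\Sigma|\ge 2$) alphabet is \NL-hard. Note that the reduction of Lemma~\ref{lemma0k} keeps the automaton a DFA and does not enlarge the alphabet, so the reduction stays inside the class under consideration.

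For membership I would argue that the complement problem, \emph{non}-$k$-piecewise testability, lies in \coNL${}={}$\NL. The basis is the standard characterisation that $L(\A)$ is $k$-piecewise testable if and only if it is saturated by $\sim_k$, i.e.\ there is no pair of words $u,v$ with $u\sim_k v$ (equivalently $\sub_k(u)=\sub_k(v)$) such that $u\in L(\A)$ but $v\notin L(\A)$; this equivalence holds for any DFA recognising $L(\A)$ and needs no minimisation. Thus a \emph{mismatched pair} $(u,v)$ is exactly a witness for the complement, and the task is to search for one in logarithmic space. Two preliminary observations tame the search over a fixed alphabet. First, $k$-piecewise testability is monotone in $k$ and implies ordinary piecewise testability, which is \NL-checkable~\cite{ChoH91}; since the minimal DFA of a piecewise testable language is partially ordered and hence of depth at most its number of states, one may first verify piecewise testability and then assume $k$ is polynomially bounded by the size of $\A$ (for larger $k$ the answer coincides with the piecewise-testability test). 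Second, on the partially ordered structure of such a language a mismatched pair can be taken of polynomial length, by pumping down self-loops.

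The core of the argument, and the step I expect to be the main obstacle, is to convert the search for a mismatched pair into reachability in a graph of polynomial size. The naïve reformulation nests a universal $\sim_k$-test (``no word of length $\le k$ separates $u$ from $v$'') inside the existential guess of $(u,v)$, and the set $\sub_k(u)$ is of exponential size, so it cannot be held in logarithmic space. The resolution, in the spirit of the depth-based analysis of $k$-piecewise testability in~\cite{dlt15}, is a characterisation in which $\sim_k$-equivalence of the two runs is certified \emph{locally}: one simulates the two runs of $\A$ while maintaining, per letter, only a bounded summary that is sufficient to guarantee that no subword of length at most $k$ discriminates the two words, so that a discrepancy of acceptance becomes a reachable configuration. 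Proving that this local bookkeeping captures $\sim_k$ exactly and uses only $O(\log(|\A|+k))$ space is where the real work lies, and it is precisely this bookkeeping that explodes once $\Sigma$ is allowed to grow with the number of states, pushing the problem up to \coNP-completeness~\cite{KKP}; the fixed-alphabet assumption is exactly what keeps the summary of polynomial size. Once the reachability instance is built, solving it is in \NL, so non-$k$-piecewise testability is in \NL, and by \NL${}={}$\coNL so is $k$-piecewise testability. Together with the hardness above, this gives \NL-completeness.
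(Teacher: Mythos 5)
Your hardness half is exactly the paper's: Lemma~\ref{lemma0k} composed with \NL-hardness of DFA universality over a binary alphabet, and it is correct. The membership half, however, has a genuine gap: you explicitly defer the decisive step (``proving that this local bookkeeping captures $\sim_k$ exactly and uses only logarithmic space is where the real work lies'') without supplying the bookkeeping or its correctness proof, so the argument is not complete as written. Worse, the premise you use to motivate that detour is wrong under the theorem's hypotheses: you assert that $\sub_k(u)$ ``is of exponential size, so it cannot be held in logarithmic space,'' but here \emph{both} $\Sigma$ and $k$ are fixed constants, so $\sub_k(u)\subseteq\Sigma^{\le k}$ ranges over a set of constant size $O(|\Sigma|^k)$ and can be stored verbatim. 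Your own plan then closes immediately: run two copies of the DFA on independently extended words, carrying the pair $(\delta(q_0,u),\sub_k(u),\delta(q_0,v),\sub_k(v))$ (the $\sub_k$-component is updatable letter by letter from itself), and ask, via \NL-reachability in this polynomial-size product graph, whether a configuration with equal $\sub_k$-components but differing acceptance is reachable; this puts non-$k$-piecewise testability in \NL${}={}$\coNL with no further work, and also makes your preliminary reduction to the piecewise-testability test and the polynomial-length-witness discussion unnecessary. Your writing of the space bound as $O(\log(|\A|+k))$ suggests you were implicitly treating $k$ as part of the input, which is a different (and harder) problem than the one stated.

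For comparison, the paper's membership argument avoids all of this: since $\Sigma$ and $k$ are fixed, there are only constantly many $k$-piecewise testable languages over $\Sigma$ (each is a union of $\sim_k$-classes, of which there are constantly many), so one hard-codes their minimal DFAs, guesses one, and checks equivalence with the input DFA in \NL. Both your repaired product-graph route and the paper's precomputation route hinge on the same observation you missed stating --- that fixing $k$ and $\Sigma$ collapses the $\sim_k$-data to constant size --- and once that is made explicit, either route is short.
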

  \begin{proof}
    Hardness follows from Lemma~\ref{lemma0k} because deciding universality for DFAs is \NL-complete~\cite{Jones75}. Membership can be shown as follows. Since $\Sigma$ and $k$ are fixed, there is a constant number of $k$-piecewise testable languages over $\Sigma$, and hence we may assume that the minimal DFAs of all these languages are precomputed. Let $\A$ be a DFA. Then $L(\A)$ is $k$-piecewise testable if and only if it is equivalent to one of the precomputed languages. This can be verified in \NL by guessing a precomputed minimal DFA and checking equivalence (see the next section for more details).
  \end{proof}
  
  In comparison with ptNFAs or rpoNFAs, fixing the alphabet does not affect the complexity for poNFAs.
  \begin{thm}\label{thm72}
    Let $\Sigma$ be a fixed alphabet with at least two letters. Deciding $k$-piecewise testability for poNFAs over $\Sigma$ is \PSpace-complete.
  \end{thm}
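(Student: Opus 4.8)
The plan is to follow the same two-pronged strategy already used for Theorems~\ref{thmMainUniv} and~\ref{theorem27}: obtain membership by collapsing to the NFA case, and obtain hardness by composing Lemma~\ref{lemma0k} with a known hardness result for universality. For membership, I would note that a poNFA is in particular an NFA, and that deciding $k$-piecewise testability for NFAs is in \PSpace~\cite{dlt15}. Since the problem for poNFAs over a fixed alphabet is merely a restriction of the general NFA problem, membership in \PSpace is immediate, with no extra work.

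For hardness, I would invoke Lemma~\ref{lemma0k}, which provides a log-space reduction from the universality problem for poNFAs to the $k$-piecewise testability problem for poNFAs. The key ingredient feeding this reduction is that deciding universality for poNFAs is already \PSpace-complete over a fixed binary alphabet~\cite{mfcs16:mktmmt_full} (cf.\ the poNFA row of Table~\ref{table_results}). Composing the log-space reduction with this \PSpace-hard source problem yields \PSpace-hardness of $k$-piecewise testability for poNFAs over a binary alphabet, and hence over any fixed alphabet with at least two letters. Together with the membership argument this gives \PSpace-completeness.

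The one point that requires care -- rather than a genuine obstacle -- is that Lemma~\ref{lemma0k} is stated for a \emph{fixed} constant $k$ and that the construction of $\M_k$ in its proof only adds new states while reusing the letters of $\Sigma$, so no new alphabet symbols are introduced. This alphabet preservation is exactly what makes the reduction usable here: the hardness genuinely holds over a fixed alphabet, not a growing one. This is also the conceptual reason the poNFA case behaves differently from the \textbf{ptNFA} and \textbf{rpoNFA} cases, where fixing the alphabet lowers the complexity to \coNP (Theorem~\ref{theorem27}); for poNFAs the unbounded nondeterminism already renders binary-alphabet universality \PSpace-hard, and that hardness transfers verbatim to $k$-piecewise testability.
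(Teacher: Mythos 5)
Your proposal matches the paper's proof exactly: membership is inherited from the general NFA case, and hardness follows by composing Lemma~\ref{lemma0k} with the \PSpace-hardness of universality for poNFAs over a fixed binary alphabet from~\cite{mfcs16:mktmmt_full}. Your added observation that the construction of $\M_k$ preserves the alphabet is correct and is indeed the reason the reduction applies in the fixed-alphabet setting.
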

  \begin{proof}
    Membership follows from the results for NFAs, hardness follows from Lemma~\ref{lemma0k} and the fact that deciding universality for poNFAs over $\Sigma$ is \PSpace-complete~\cite{mfcs16:mktmmt_full}.
  \end{proof}

  Theorems~\ref{theorem27} and~\ref{thm72} show hardness even for binary alphabets, which improves our recent result where the alphabet had at least three letters~\cite{ptnfas}. Furthermore, we point out that hardness in Theorem~\ref{theorem27} does not follow from the \coNP-hardness proof of Kl\'ima et al.~\cite{KKP} showing \coNP-completeness of deciding $k$-piecewise testability for DFAs for $k\ge 4$, since their proof requires a growing alphabet.

  It remains to consider the case of unary alphabets. 
  We first focus on the case of nondeterministic partially ordered automata and the variants thereof.
  \begin{thm}\label{poNFAnl}\label{thmP}
    Deciding $k$-piecewise testability for poNFAs, rpoNFAs, and ptNFAs over a unary alphabet is \NL-complete.
    It holds even if $k$ is given as part of the input.
  \end{thm}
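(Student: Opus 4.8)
The plan is to obtain NL-hardness for free from earlier results and then to give a direct NL algorithm that exploits the very simple shape of unary partially ordered automata. For hardness I would invoke Lemma~\ref{lemma0k}, which log-space reduces the universality problem to the $k$-piecewise testability problem within each of the classes poNFA, rpoNFA, and ptNFA for any constant $k$. Since deciding universality over a unary alphabet is \NL-hard for ptNFAs (Theorem~\ref{thmMainNL}) and likewise for rpoNFAs and poNFAs (the unary column of Table~\ref{table_results}), $k$-piecewise testability is \NL-hard for each of the three classes already for a fixed $k$, and a fortiori when $k$ is part of the input. It therefore remains to prove membership in \NL, which it suffices to establish for the most general class, poNFAs, with $k$ given in binary.

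The heart of the argument is a structural fact: if $\A$ is a unary poNFA with $N$ states, then membership of $a^m$ in $L(\A)$ is constant for all $m\ge N$. Indeed, an accepting path on $a^m$ visits a non-decreasing chain of states $q_0\le q_1\le\cdots\le q_m$, since each step is a one-letter reachability step and $\le$ is a partial order; at most $N-1$ of these steps can strictly increase the state, so if $m\ge N$ then at least one step is a self-loop. Hence $a^m\in L(\A)$ for some $m\ge N$ iff there is a state $s$ carrying a self-loop that is reachable from an initial state and from which an accepting state is reachable, and whenever such $s$ exists one can pump its self-loop to accept every length $\ge N$ (the two distances sum to at most $N-1<N$). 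Consequently $a^m\in L(\A)\iff a^N\in L(\A)$ for all $m\ge N$. Note that this uses neither completeness nor confluence, so it holds uniformly for poNFAs, rpoNFAs, and ptNFAs.

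Recall that $L(\A)$ is $k$-piecewise testable iff it is a union of $\sim_k$-classes; over a unary alphabet $a^m\sim_k a^{m'}$ iff $m=m'$ or both $m,m'\ge k$, so the classes are $\{a^0\},\dots,\{a^{k-1}\}$ together with the tail $\{a^m\mid m\ge k\}$. Thus $L(\A)$ is $k$-piecewise testable iff this tail is monochromatic. The constancy above confines the search to a polynomial window: $L(\A)$ is \emph{not} $k$-piecewise testable iff there exist $m,m'\in\{k,k+1,\dots,N\}$ with $a^m\in L(\A)$ and $a^{m'}\notin L(\A)$. The direction ``$\Leftarrow$'' is immediate, and for ``$\Rightarrow$'' any two differing witnesses of length $\ge k$ cannot both be $\ge N$ (they would share the constant tail value), so one is $<N$ and the other may be replaced by $N$, which then lies in $\{k,\dots,N\}$. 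This condition is decidable in \NL: guess $m,m'$ with $k\le m,m'\le N$ (each in $O(\log N)$ bits), verify $a^m\in L(\A)$ by guessing an accepting path of exactly $m$ steps with a counter up to $N$, and verify $a^{m'}\notin L(\A)$ using \NL~=~\coNL. Hence non-$k$-piecewise-testability is in \NL, and by closure under complement so is $k$-piecewise testability. Since the bound $N$ is polynomial and independent of $k$, the algorithm works verbatim when $k$ is part of the input (if $k>N$ the window is empty and $L(\A)$ is automatically $k$-piecewise testable).

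The main obstacle, and the only place requiring genuine care, is the structural constancy claim and the resulting polynomial bound $N$ on the lengths one must inspect; once that is in place, the reduction of $k$-piecewise testability to bounded reachability is routine, the non-membership test relying only on the Immerman--Szelepcs\'enyi theorem. Because the whole membership argument is insensitive to completeness and confluence, the single algorithm simultaneously settles poNFAs, rpoNFAs, and ptNFAs.
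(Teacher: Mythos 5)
Your proposal is correct and follows essentially the same route as the paper: hardness via Lemma~\ref{lemma0k} combined with Theorem~\ref{thmMainNL}, and membership by bounding the relevant word lengths polynomially (eventual constancy of acceptance beyond the number of states of a unary poNFA) and then guessing witness lengths in binary, using \NL~=~\coNL\ for the negative check. The only differences are cosmetic: you treat finite and infinite languages uniformly via the ``constant tail'' formulation and guess two witnesses, where the paper splits into cases and handles finite languages by complementation.
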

  \begin{proof}
    Hardness follows from Lemma~\ref{lemma0k} and Theorem~\ref{thmMainNL}.
    We now show membership for poNFAs, which covers all the cases. Let $\A$ be a poNFA over the alphabet $\{a\}$ with $n$ states. If the language $L(\A)$ is infinite, then there exists $d\le n$ such that $a^da^* \subseteq L(\A)$; indeed, $L(\A)$ is infinite if and only if there is an accepting state that is reachable via a state with a self-loop, and hence $d$ is bounded by the number of states on such a path. Therefore, the language $L(\A)$ is {\em not\/} $k$-piecewise testable if and only if there exists $\ell$ with $k < \ell \le d$ such that $a^k \in L(\A)$ if and only if $a^\ell \notin L(\A)$. Such an $\ell$ can be guessed in binary and the property verified in \NL, since \NL is closed under complement. 
    If $L(\A)$ is finite, its complement, which is $k$-piecewise testable if and only if $L(\A)$ is, is infinite.
  \end{proof}
  
  Now we focus on the case of deterministic automata.
  \begin{thm}\label{DFAlc}
    Deciding $k$-piecewise testability for DFAs over a unary alphabet is L-complete.
  \end{thm}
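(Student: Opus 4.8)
The plan is to reduce the problem to a reachability-style check on the deterministic transition graph, exploiting the very simple shape of $\sim_k$ over a single letter. First I would determine $\sim_k$ on $\{a\}^*$: since $\sub_k(a^m)=\{a^i\mid 0\le i\le\min(m,k)\}$, we have $a^m\sim_k a^{m'}$ iff $\min(m,k)=\min(m',k)$, so the classes of $\sim_k$ are $\{\eps\},\{a\},\ldots,\{a^{k-1}\}$ together with the single class $\{a^m\mid m\ge k\}$. As every $k$-piecewise testable language is a union of $\sim_k$-classes, a language $L\subseteq\{a\}^*$ is $k$-piecewise testable precisely when the membership of $a^m$ in $L$ is constant for all $m\ge k$.

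Next I would translate this to the DFA. A complete unary DFA $\A$ is just a function $\delta(\cdot,a)\colon Q\to Q$, so the trajectory $q_0,\delta(q_0,a),\delta(q_0,a^2),\ldots$ is a tail leading into a cycle. Writing $q_k=\delta(q_0,a^k)$ and $R_{\ge k}=\{\delta(q_0,a^m)\mid m\ge k\}$, the set $R_{\ge k}$ is exactly the set of states reachable from $q_k$ by reading further $a$'s. By the characterization, $L(\A)$ is $k$-piecewise testable iff all states of $R_{\ge k}$ have the same acceptance status, that is, $R_{\ge k}\subseteq F$ or $R_{\ge k}\cap F=\emptyset$. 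For the \LogSpace upper bound I would compute $q_k$ by applying the transition under $a$ exactly $k$ times from $q_0$ (a constant number of look-ups for fixed $k$), record the single bit $b=[q_k\in F]$, and then follow the transition from $q_k$ for at most $|Q|$ steps. Because the trajectory from $q_k$ is a path into a cycle, these steps enumerate all of $R_{\ge k}$, so it suffices to verify that every visited state has acceptance bit $b$. This uses only a current-state pointer, a counter bounded by $|Q|$, and the bit $b$, hence $O(\log|Q|)$ space.

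For the lower bound I would invoke Lemma~\ref{lemma0k}, which gives a log-space reduction of unary DFA universality to unary DFA $k$-piecewise testability, together with the \LogSpace-completeness of deciding universality for unary DFAs~\cite{Jones75}; the reduction of Lemma~\ref{lemma0k} maps a unary DFA to a unary DFA, as required. The main obstacle is really the membership direction: one must notice that $R_{\ge k}$ coincides with the forward-reachable set of the single state $q_k$ and that, thanks to determinism, this set is traced out by at most $|Q|$ deterministic transition steps and then compared against one acceptance bit, avoiding any explicit reachability search. (If $k$ were supplied as part of the input rather than fixed, the only extra ingredient is computing $q_k$ for a possibly large $k$, which amounts to locating the tail and cycle of the trajectory from $q_0$ and reducing $k$ modulo the cycle length, still a logarithmic-space computation.)
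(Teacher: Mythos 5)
Your proposal is correct and follows essentially the same route as the paper: hardness via Lemma~\ref{lemma0k} together with the \LogSpace-completeness of unary DFA universality, and membership by observing that over $\{a\}$ the only nontrivial $\sim_k$-class is $\{a^m\mid m\ge k\}$, so one only needs to check that the acceptance status is constant along the deterministic trajectory from $\delta(q_0,a^k)$ for at most $|Q|$ further steps. The paper states this membership test more tersely (``$a^k,\ldots,a^{k+n}$ all belong to the language or none does''), but it is the same logspace walk you describe.
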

  \begin{proof}
    Hardness follows from Lemma~\ref{lemma0k} and the fact that deciding universality for unary DFAs is \LogSpace-complete~\cite{Jones75}. 
    Membership in \LogSpace can be shown as follows. Let $n$ be the number of states of the DFA. Then the language is $k$-piecewise testable if and only if $a^k,a^{k+1},\ldots,a^{k+n}$ all belong to the language or none does. ($k+n$ because there may be a cycle to the initial state.)
  \end{proof}

  Finally, we focus on the case of general nondeterministic automata.
  \begin{thm}\label{thm30}
    Deciding $k$-piecewise testability for NFAs over a unary alphabet is \coNP-complete.
  \end{thm}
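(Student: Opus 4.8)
The plan is to obtain \coNP-hardness directly from Lemma~\ref{lemma0k} and to establish membership in \coNP by a characterization of unary $k$-piecewise testable languages together with the known \coNP-completeness of universality for unary NFAs~\cite{StockmeyerM73}.

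For hardness I would simply invoke Lemma~\ref{lemma0k}: it gives a log-space reduction from the universality problem to the $k$-piecewise testability problem that preserves the type of the automaton and the (here unary) alphabet. Since deciding universality for NFAs over a unary alphabet is \coNP-complete~\cite{StockmeyerM73}, \coNP-hardness of unary $k$-piecewise testability follows at once, in the same way as in the proofs of Theorems~\ref{theorem27} and~\ref{thm72}.

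The substance is membership. First I would pin down the $\sim_k$-classes over $\{a\}$. Because $\sub_k(a^m)=\{a^j \mid 0\le j\le \min(m,k)\}$, two unary words $a^m,a^{m'}$ satisfy $a^m\sim_k a^{m'}$ iff $\min(m,k)=\min(m',k)$, so the classes are exactly the singletons $\{a^0\},\ldots,\{a^{k-1}\}$ together with one \emph{tail} class $T=\{a^m\mid m\ge k\}$. As every $k$-piecewise testable language is a finite union of $\sim_k$-classes, a unary language $L$ is $k$-piecewise testable \emph{iff} the tail is homogeneous, i.e.\ $T\subseteq L$ or $T\cap L=\emptyset$. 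This splits the problem into two sub-questions joined by a disjunction. For the first disjunct $T\subseteq L$, I would reduce to ordinary universality: from the given $n$-state NFA $\A$ build the unary NFA $\A'$ recognizing $L(\A)\cup\{a^0,\ldots,a^{k-1}\}$ (prepend a short path of $k$ accepting states), which is log-space computable for the fixed constant $k$; then $\A'$ is universal iff $T\subseteq L(\A)$, so this disjunct is in \coNP by~\cite{StockmeyerM73}. For the second disjunct $T\cap L=\emptyset$, observe it says that $L$ has no word of length $\ge k$; its complement ``$\exists m\ge k:\ a^m\in L(\A)$'' is an \NL reachability test, since (recalling that a finite unary NFA language has all accepted lengths below $n$) it holds iff there is an accepting run of some length in $[k,n-1]$ or an accepting state is reachable through a cycle, both of which force a witness $a^m$ with $m\ge k$. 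Hence this disjunct lies in $\coNL=\NL\subseteq\coNP$. As \coNP is closed under union, the disjunction is in \coNP, completing the membership proof.

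The main obstacle is the first disjunct: a word of $T$ not in $L$ may have exponential length, so a naive ``guess the witness'' certificate fails. I would sidestep rederiving a succinct (Chrobak/Chinese-remainder-style) certificate by folding $T\subseteq L$ into plain unary-NFA universality and quoting its established \coNP-membership~\cite{StockmeyerM73}; the only point requiring care is that padding by the $k$ short words keeps the instance of polynomial size, which holds because $k$ is a constant.
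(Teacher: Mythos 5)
Your proposal is correct, and the hardness half coincides with the paper's (Lemma~\ref{lemma0k} plus the \coNP-completeness of unary NFA universality~\cite{StockmeyerM73}). For membership you take a genuinely different route. Both arguments rest on the same structural fact about $\sim_k$ over $\{a\}$ -- the classes are the singletons $\{a^0\},\dots,\{a^{k-1}\}$ plus one tail class -- but you turn it into the clean characterization ``$k$-PT iff $T\subseteq L$ or $T\cap L=\emptyset$'' and then dispatch the two disjuncts separately: the first by a constant-size padding reduction to unary NFA universality (whose \coNP-membership you cite as a black box, which is where the exponential-witness difficulty is absorbed), the second by an \NL{} finiteness/reachability test, closing under union of \coNP{} sets. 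The paper instead gives a single explicit \NP{} certificate for non-$k$-piecewise-testability: it first bounds matters by arguing that a piecewise testable unary NFA language is $2^n$-piecewise testable (via a separate \coNP{} test for partial order of the subset automaton), and then guesses a length $k<\ell\le 2^n$ in binary with $a^k$ and $a^\ell$ having different acceptance status, verified by fast matrix exponentiation. Your version is structurally simpler and avoids the paper's detour through deciding plain piecewise testability, at the cost of relying on the nontrivial cited \coNP{} upper bound for unary universality rather than exhibiting the certificate directly; the paper's version is self-contained modulo matrix exponentiation and reuses its machinery for Theorem~\ref{thm30b}. One small point worth making explicit in your write-up is the converse direction of Simon's characterization (every union of $\sim_k$-classes is $k$-piecewise testable), which over a unary alphabet is immediate since $\{a^j\}=L_{a^j}\setminus L_{a^{j+1}}$ for $j<k$ and $T=L_{a^k}$; without it the ``iff'' in your tail-homogeneity criterion is only one implication.
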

  \begin{proof}
    Hardness follows from Lemma~\ref{lemma0k} and the fact that deciding universality for unary NFAs is \coNP-complete~\cite{StockmeyerM73}. To show membership, we first show that deciding piecewise testability for NFAs over a unary alphabet is in \coNP. To do this, we show how to check non-piecewise testability in NP. Intuitively, we need to check that the corresponding DFA is partially ordered and confluent. However, confluence is trivially satisfied because there is no branching in a DFA over a single letter. Partial order is violated if and only if there exist three words $a^{\ell_1}$, $a^{\ell_2}$ and $a^{\ell_3}$ with $\ell_1 < \ell_2 < \ell_3$ such that $\delta(I,a^{\ell_1}) = \delta(I,a^{\ell_3}) \neq \delta(I,a^{\ell_2})$ and one of these sets is accepting (as a state of the DFA) and the other is not (otherwise they are equivalent). The lengths of the words are bounded by $2^n$, where $n$ denotes the number of states of the NFA, and can thus be guessed in binary. The matrix multiplication (fast exponentiation) can then be used to compute the sets of states reachable under those words in polynomial time.
    
    Thus, we can check in \coNP whether the language of an NFA is piecewise testable. If so, then it is $2^n$-piecewise testable, since the depth of the minimal DFA is bounded by $2^n$, where $n$ is the number of states of the NFA~\cite{ptnfas}. Let $M$ be the transition matrix of the NFA. To show that it is not $k$-piecewise testable, we need to find two $\sim_k$-equivalent words such that exactly one of them belongs to the language of the NFA. Since every $\sim_k$ class defined by $a^\ell$, for $\ell < k$, is a singleton, we need to find $k< \ell \le 2^n$ such that $a^k \sim_k a^\ell$ and only one of them belongs to the language. This can be done in nondeterministic polynomial time by guessing $\ell$ in binary, using the matrix multiplication to obtain the corresponding reachable sets in $M^k$ and $M^\ell$, and verifying that one set contains an accepting state and the other does not.
  \end{proof}

\section{Complexity of Deciding Piecewise Testability}
  The {\em piecewise testability problem\/} asks, given an automaton, whether it recognizes a piecewise testable language. We now study the complexity of deciding piecewise testability for partially ordered automata. Our results are summarized in Table~\ref{table2b}.
  \begin{table*}\centering
    \ra{1.2}
    \begin{tabular}{@{}llll@{}}\toprule
            & $|\Sigma|=1$
            & $|\Sigma|\ge 2$
            & $\Sigma$ is growing\\
          \midrule
            DFA    & L-c            (Thm. \ref{DFAlcb})
                    & \NL-c          \cite{ChoH91}\footnotemark{}
                    & \NL-c          \cite{ChoH91} \\
          rpoNFA    & $\checkmark$   (Thm. \ref{thm74})
                    & \coNP-c        (Thm. \ref{rpoNFApt})
                    & \PSpace-c      (Thm. \ref{rpoNFAtoPT})\\
          poNFA     & $\checkmark$   (Thm. \ref{thm74})
                    & \PSpace-c      (Thm. \ref{thm72b})
                    & \PSpace-c       \\
          NFA       & \coNP-c        (Thm. \ref{thm30b})
                    & \PSpace-c      \cite{tm2016}
                    & \PSpace-c      \cite{tm2016} \\
      \bottomrule
    \end{tabular}
    \caption{Complexity of deciding piecewise testability.}
    \label{table2b}
  \end{table*}
  \footnotetext{Cho and Huynh~\cite{ChoH91} showed hardness for a three-letter alphabet. However, the result holds also for binary alphabets, using, \eg, a reduction from the reachability problem for directed acyclic graphs with out-degree at most two.}

  To simplify proofs, we would like to use a result similar to Lemma~\ref{lemma0k}. Unfortunately, there is no such result preserving the alphabet. If there were, it would imply that deciding piecewise testability for ptNFAs has a nontrivial complexity, but these languages are trivially piecewise testable. Similarly, it would imply that deciding piecewise testability of unary (r)poNFAs is nontrivial, but we show below that they are trivially piecewise testable.
  
  Recall that $\R$-trivial languages, poDFA-languages, and rpoNFA-languages coincide.
  \begin{thm}\label{thm74}
    The classes of unary poNFA languages, unary $\R$-trivial languages, and unary piecewise testable languages coincide.
  \end{thm}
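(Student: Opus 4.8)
The plan is to prove the three classes coincide by establishing the cycle of inclusions (restricted to unary languages) $\text{piecewise testable} \subseteq \R\text{-trivial} \subseteq \text{poNFA} \subseteq \text{piecewise testable}$. The first two inclusions hold for every alphabet and follow directly from the automata characterizations recalled in the excerpt: every ptNFA is an rpoNFA, so every piecewise testable language is an rpoNFA language, hence an $\R$-trivial language; and every rpoNFA is a poNFA, so every $\R$-trivial (equivalently, rpoNFA) language is a poNFA language. Thus the genuine content is the closing inclusion, namely that every unary poNFA language is piecewise testable.

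For that inclusion, I would show that every unary poNFA language is either finite or cofinite. Let $\A$ be a poNFA over $\{a\}$ with $n$ states. If $L(\A)$ is infinite, then, as established in the proof of Theorem~\ref{poNFAnl}, there is a $d\le n$ with $a^d a^* \subseteq L(\A)$: infiniteness forces an accepting state reachable through a state carrying a self-loop, and pumping that self-loop, which adds exactly one letter, yields every length $\ge d$. Hence the complement of $L(\A)$ is contained in $\{\eps, a, \ldots, a^{d-1}\}$ and is finite. If instead $L(\A)$ is finite, then no accepting path can use a self-loop (else pumping would make $L(\A)$ infinite), so every accepting path is simple and has at most $n-1$ transitions; therefore $L(\A)\subseteq \{\eps, a, \ldots, a^{n-1}\}$.

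In both cases $L(\A)$ is piecewise testable over the unary alphabet. A finite unary language whose longest word is $a^N$ is $(N+1)$-piecewise testable, being the union of the singleton $\sim_{N+1}$-classes it contains. A unary language $L$ with $a^d a^* \subseteq L$ is $k$-piecewise testable for any $k$ exceeding $d$ and exceeding the length of every word of $L$ outside the tail $a^d a^*$: for such $k$, membership of $a^i$ is constant for all $i\ge k$, so $L$ is a union of $\sim_k$-classes. This makes $L(\A)$ piecewise testable, which closes the cycle of inclusions and shows the three classes coincide.

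I expect the only substantive point to be the structural claim that over a unary alphabet nondeterminism adds no power to partially ordered automata, so that the accepted language is eventually constant (finite or cofinite). This rests entirely on the pumping observation already underlying Theorem~\ref{poNFAnl}, so no new machinery is required; once it is in place, the reductions ``poNFA language $\Rightarrow$ finite or cofinite $\Rightarrow$ piecewise testable'' are routine.
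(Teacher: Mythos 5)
Your proposal is correct and follows essentially the same route as the paper: reduce to the single nontrivial inclusion that unary poNFA languages are piecewise testable, split into the finite and infinite cases, and in the infinite case use the pumping-through-a-self-loop observation (the same one underlying Theorem~\ref{poNFAnl}) to get $a^d a^*\subseteq L$ for some $d$ bounded by the number of states. The only cosmetic difference is that you certify piecewise testability directly via the $\sim_k$-classes, whereas the paper instead notes that the resulting minimal DFA is partially ordered and confluent; both conclusions are immediate.
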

  \begin{proof}
    Since every piecewise testable language is an $\R$-trivial language, and every $\R$-trivial language is a poNFA language, we only need to prove that unary poNFA languages are piecewise testable. If the language of a poNFA is finite, then it is piecewise testable. If it is infinite, then there is an integer $n$ bounded by the number of states of the poNFA such that the poNFA accepts all words of length longer than $n$. The minimal DFA equivalent to the poNFA is thus partially ordered and confluent.
  \end{proof}

  We first discuss the complexity of deciding piecewise testability for unary DFAs.
  \begin{thm}\label{DFAlcb}
    Deciding piecewise testability for DFAs over a unary alphabet is L-complete.
  \end{thm}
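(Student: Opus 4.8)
The plan is to reduce the whole question to an elementary property of the acceptance sequence and then treat the two inclusions separately. By Theorem~\ref{thm74} (more precisely, by its proof), a unary language is piecewise testable if and only if it is finite or cofinite, which is the same as saying that the indicator sequence $\chi$, where $\chi(i)=1$ iff $a^i\in L(\A)$, is \emph{eventually constant}. Intuitively this is because over a unary alphabet confluence is automatic, so piecewise testability amounts to the minimal DFA being partially ordered, i.e.\ its unique cycle being a self-loop. I would use this characterization throughout, without ever minimizing $\A$, since the property is intrinsic to the language.

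For membership in \LogSpace, observe that if $\A$ has $n$ states then the run $q_0,q_1,\ldots$ on $a^0,a^1,\ldots$ repeats within the first $n+1$ states, so $\chi$ is eventually periodic with pre-period and period both at most $n$. Consequently $\chi$ is eventually constant if and only if it is constant on the window $i\in\{n,n+1,\ldots,2n\}$: this window lies entirely in the periodic part ($n\ge$ pre-period) and has length $n+1$, exceeding one full period, so constancy there propagates to all larger $i$. The algorithm therefore computes $\chi(n)$, then steps the DFA through $a^{n+1},\ldots,a^{2n}$ and checks that every value agrees with $\chi(n)$; it accepts iff they all agree. This uses only a binary counter up to $2n$ and the current state, i.e.\ $O(\log n)$ space.

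For \LogSpace-hardness I would reduce from deciding universality for unary DFAs, which is \LogSpace-complete~\cite{Jones75}. As noted in the discussion preceding Theorem~\ref{thm74}, there is no alphabet-preserving reduction from universality in the style of Lemma~\ref{lemma0k}, and the two problems do not coincide directly: a rejection occurring only inside the \emph{tail} of the DFA still yields a cofinite, hence piecewise testable, language. The main point of the construction is therefore to force any rejection to propagate into a \emph{nontrivial cycle}. Given a unary DFA $\A$, build $\A'$ by adding a two-state cycle $d_0\xrightarrow{a} d_1\xrightarrow{a} d_0$ with $d_0$ accepting and $d_1$ rejecting, leaving the transitions and acceptance status of every accepting state of $\A$ unchanged, and redirecting the $a$-successor of \emph{every} rejecting state of $\A$ to $d_0$. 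This is clearly log-space computable.

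The crux is the correctness argument, and it is exactly this propagation step that I expect to be the main obstacle. If $L(\A)=\Sigma^*$, the run from the initial state never meets a rejecting state, so it never enters $\{d_0,d_1\}$ and stays in the all-accepting part of $\A$; hence $L(\A')=\Sigma^*$ is piecewise testable. Conversely, if $L(\A)\neq\Sigma^*$, let $j$ be least with $a^j\notin L(\A)$; up to step $j$ the run of $\A'$ coincides with that of $\A$, because only accepting states were left untouched, so $\A'$ sits in a rejecting state at step $j$, then moves to $d_0$ and alternates $d_0,d_1,d_0,\ldots$ forever. Thus the indicator sequence of $L(\A')$ is not eventually constant, so $L(\A')$ is not piecewise testable. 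Therefore $L(\A')$ is piecewise testable iff $L(\A)$ is universal, which gives the desired reduction; once rejections are guaranteed to reach the genuine two-cycle, the finite-or-cofinite characterization finishes the proof.
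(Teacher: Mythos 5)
Your proposal is correct. The membership argument is essentially identical to the paper's: both reduce the question to whether the acceptance status is constant on the window $a^n,\ldots,a^{2n}$, justified by the pre-period and period of the unary run being bounded by the number of states. For hardness, however, you take a genuinely different route. The paper reduces directly from the out-degree-one DAG-reachability problem of Jones: it wires the target vertex $n$ back to the source, so that reachability of $n$ creates a nontrivial cycle containing both an accepting and a rejecting state, making both $L(\A)$ and its complement infinite. You instead reduce from unary DFA universality (itself \LogSpace-complete by \cite{Jones75}), attaching an external two-state cycle $d_0\to d_1\to d_0$ with alternating acceptance and redirecting every rejecting state into it; the key point, which you argue correctly, is that the run of $\A'$ agrees with that of $\A$ up to the \emph{first} rejection (since only transitions out of rejecting states are modified), after which the indicator sequence alternates forever. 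Both reductions are valid and equally simple; the paper's has the minor advantage of starting from the canonical \LogSpace-complete problem in one step, while yours is a clean universality-to-piecewise-testability reduction in the spirit of Lemma~\ref{lemma0k} --- note that this does not conflict with the paper's remark that no such alphabet-preserving reduction exists in general, since that remark concerns a uniform reduction working for all the automata classes (\eg\ ptNFAs, whose languages are always piecewise testable), whereas your construction is specific to unary DFAs, where both problems are \LogSpace-complete anyway.
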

  \begin{proof}
    To prove hardness, we reduce from the DAG-reachability problem where no vertex has more than one outgoing directed edge~\cite{Jones75}. 
    Let $G$ be a directed acyclic graph where no vertex has more than one outgoing directed edge with vertices $1,2,\ldots,n$, $n>1$. We define a DFA $\A=(\{0,1,\ldots,n\},\{a\},\delta,1,\{n\})$ using exactly the same reduction as Jones~\cite[Theorem~26]{Jones75} defining $\delta(i,a)=j$ if $(i,j)$ is an edge of $G$ and $i\neq n$, $\delta(n,a)=1$, and $\delta(i,a)=0$ for other values of $i$. Then $n$ is reachable from $1$ in $G$ if and only if the language $L(\A)$ is infinite as well as the language $\{a\}^* \setminus L(\A)$, which implies non-piecewise testability of the language $L(\A)$ because the minimal DFA for $L(\A)$ needs to have a nontrivial cycle. 
    
    To show membership, we need to check that there is no nontrivial cycle in the minimal DFA equivalent to the given DFA. This can be done by checking that the words $a^n,\ldots,a^{2n}$ all have the same accepting status, where $n$ is the number of states of the given DFA.
  \end{proof}

  Next we discuss the case of unary NFAs.
  \begin{thm}\label{thm30b}
    Deciding piecewise testability for NFAs over a unary alphabet is \coNP-complete.
  \end{thm}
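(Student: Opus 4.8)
The plan is to prove the two directions separately, noting that membership is essentially already available, so the real work is hardness. Recall from Theorem~\ref{thm74} (and the minimal-DFA argument used in the proof of Theorem~\ref{thm30}) that a unary language is piecewise testable if and only if it is finite or cofinite, equivalently if and only if its minimal DFA has no nontrivial cycle; hence \emph{non}-piecewise testability means the language is simultaneously infinite and co-infinite. For membership I would simply reuse the argument establishing containment in \coNP inside the proof of Theorem~\ref{thm30}: non-piecewise testability is witnessed by lengths $\ell_1<\ell_2$, bounded by $2^{n+1}$ where $n$ is the number of states, such that the subset-construction DFA repeats a state set at $\ell_1$ and $\ell_2$ (so both lie on its unique cycle) while some length in $[\ell_1,\ell_2)$ is rejected. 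These lengths are guessed in binary and verified in polynomial time by fast Boolean matrix exponentiation, so non-piecewise testability is in \NP and piecewise testability is in \coNP.

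For hardness, the natural attempt of reducing from unary NFA universality as in Lemma~\ref{lemma0k} does \emph{not} work here: a non-universal unary NFA may still be cofinite (rejecting only finitely many short words inside its transient part), and a cofinite language is piecewise testable. Since a finite automaton cannot turn finitely many rejections into the infinitely many rejections needed for non-piecewise testability, I would instead reduce from the problem of deciding \emph{cofiniteness} of a unary NFA. This problem is \coNP-complete: its complement, co-infiniteness, lies in \NP by exactly the cycle-plus-rejecting-state guess described above, and \coNP-hardness is inherited from the \coNP-hardness of unary universality~\cite{StockmeyerM73}, whose hard instances accept purely periodic languages and are therefore universal precisely when they are cofinite (a non-universal such automaton rejects an entire residue class and is thus co-infinite).

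Given a unary NFA $\A$ over $\{a\}$ with $n$ states, I would build in logarithmic space an NFA $\A'$ over $\{a\}$ as the disjoint union of two gadgets. The first gadget \emph{doubles} $\A$: every transition $p\xrightarrow{a}q$ of $\A$ is replaced by a two-step path $p\xrightarrow{a}p'\xrightarrow{a}q$ through a fresh non-accepting state $p'$, so the gadget accepts $a^{2\ell}$ exactly when $\A$ accepts $a^{\ell}$. The second gadget is a two-state cycle accepting every odd length. Thus $L(\A')=\{a^{2\ell}\mid a^{\ell}\in L(\A)\}\cup\{a^{2j+1}\mid j\ge 0\}$, which is always infinite, and $a^{2\ell}\notin L(\A')$ iff $a^{\ell}\notin L(\A)$, so $L(\A')$ is co-infinite iff $L(\A)$ is co-infinite. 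Consequently $L(\A')$ is infinite and co-infinite — that is, not piecewise testable — exactly when $L(\A)$ is not cofinite, and $L(\A')$ is piecewise testable exactly when $L(\A)$ is cofinite. This is a log-space many-one reduction from the \coNP-complete cofiniteness problem, yielding \coNP-hardness.

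The step I expect to be the main obstacle is the justification that cofiniteness of unary NFAs is \coNP-hard, i.e. the claim that the universality-hardness instances of~\cite{StockmeyerM73} are purely periodic so that universality and cofiniteness coincide on them; verifying this (or re-deriving it through a covering-style reduction in which a satisfying assignment corresponds to a fully rejected residue class) is the only nonroutine ingredient, as both the doubling construction and the membership argument are otherwise straightforward.
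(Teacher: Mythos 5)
Your proposal is correct and, at its core, rests on exactly the insight the paper uses; the packaging differs slightly. The step you flag as the ``main obstacle'' --- that the Stockmeyer--Meyer universality-hard instances reject an entire residue class when non-universal --- is precisely what the paper's proof verifies, and your one-line justification is the right one: the construction encodes assignments via residues modulo the first $n$ primes, so if $0^z$ is rejected (i.e., $z$ encodes a satisfying assignment), then $z+c\cdot p_n\#$ with $p_n\#=\prod_{i=1}^{n}p_i$ encodes the same assignment for every $c$, hence $0^{z}(0^{p_n\#})^*$ is disjoint from the language and the complement is infinite. Since the language $L(E_0\cup\bigcup_k E_k)$ is itself infinite, non-universality already forces the minimal DFA to have a nontrivial cycle mixing accepting and rejecting states, so the paper concludes directly: piecewise testable iff universal iff $\varphi$ is unsatisfiable. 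Your detour through an intermediate ``cofiniteness of unary NFAs'' problem and the doubling-plus-odd-lengths gadget is sound but redundant --- the gadget only buys you robustness against hard instances with finite languages, which do not arise here. One small imprecision: your stated membership witness (a repeated state set at $\ell_1<\ell_2$ with some rejected length in between) certifies only co-infiniteness; you also need an accepted length on the cycle, otherwise a finite (e.g., empty) language would be misclassified. The argument in the proof of Theorem~\ref{thm30}, which you defer to, handles this by requiring one accepting and one non-accepting set among $\delta(I,a^{\ell_1})=\delta(I,a^{\ell_3})\neq\delta(I,a^{\ell_2})$.
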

  \begin{proof}
    Membership is shown in the proof of Theorem~\ref{thm30}.
    To show hardness, we modify the proof of Stockmeyer and Meyer~\cite{StockmeyerM73}.
    Let $\varphi$ be a formula in 3CNF with $n$ distinct variables, and let $C_k$ be the set of literals in the $k$th conjunct, $1 \le k \le m$. The assignment to the variables can be represented as a binary vector of length $n$. Let $p_1,p_2,\ldots,p_n$ be the first $n$ prime numbers. For a natural number $z$ congruent with 0 or 1 modulo $p_i$, for every $1\le i \le n$, we say that $z$ satisfies $\varphi$ if the assignment $(z \bmod p_1, z \bmod p_2,\ldots, z \bmod p_n)$ satisfies $\varphi$. Let 
    \[
      E_0 = \bigcup_{k=1}^{n} \bigcup_{j=2}^{p_k-1} 0^j\cdot (0^{p_k})^*
    \]
    that is, $L(E_0) = \{ 0^z \mid \exists k \le n, z \not\equiv 0 \bmod p_k \text{ and } z \not\equiv 1 \bmod p_k \}$ is the set of natural numbers that do not encode an assignment to the variables. For each conjunct $C_k$, we construct an expression $E_k$ such that if $0^z \in L(E_k)$ and $z$ is an assignment, then $z$ does not assign the value 1 to any literal in $C_k$. For example, if $C_k = \{x_{r}, \neg x_{s}, x_{t}\}$, for $1 \le  r,s,t \le n$ and $r,s,t$ distinct, let $z_k$ be the unique integer such that $0\le z_k < p_rp_sp_t$, $z_k \equiv 0 \bmod p_r$, $z_k \equiv 1 \bmod p_s$, and $z_k \equiv 0 \bmod p_t$. Then
    \[
      E_k = 0^{z_k} \cdot (0^{p_rp_sp_t})^*\,.
    \]
    Now, $\varphi$ is satisfiable if and only if there exists $z$ such that $z$ encodes an assignment to $\varphi$ and $0^z \notin L(E_k)$ for all $1\le k \le m$, which is if and only if $L(E_0 \cup \bigcup_{k=1}^{m} E_k) \neq 0^*$.
    
    The proof up to now shows that universality is \coNP-hard for NFAs over a unary alphabet. Let now $p_n\# = \Pi_{i=1}^{n} p_i$. If $z$ encodes an assignment of $\varphi$, then, for any natural number $c$, $z+c\cdot p_n\#$ also encodes an assignment of $\varphi$; indeed, if $z \equiv x_i \bmod p_i$, then $z + c\cdot p_n\# \equiv x_i \bmod p_i$, for every $1\le i\le n$. This shows that if, in addition,  $0^z \notin L(E_k)$ for all $k$, then $0^z (0^{p_n\#})^* \cap L(E_0 \cup \bigcup_{k=1}^{m} E_k) = \emptyset$. Since both the languages of the intersection are infinite, the minimal DFA recognizing the language $L(E_0 \cup \bigcup_{k=1}^{m} E_k)$ must have a non-trivial cycle alternating between accepting and non-accepting states. Therefore, if the language $L(E_0 \cup \bigcup_{k=1}^{m} E_k)$ is universal, then it is piecewise testable, and if it is non-universal, then it is not piecewise testable.
  \end{proof}

  We next show that deciding piecewise testability for poNFAs is \PSpace-complete even if the alphabet is binary.
  \begin{thm}\label{thm72b}
    Let $\Sigma$ be a fixed alphabet with at least two letters. Deciding piecewise testability for poNFAs over $\Sigma$ is \PSpace-complete.
  \end{thm}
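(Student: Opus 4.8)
The plan is to prove membership in \PSpace\ directly and to establish \PSpace-hardness by reduction from the universality problem for poNFAs over a binary alphabet, which is \PSpace-complete~\cite{mfcs16:mktmmt_full}. Membership is immediate: every poNFA is an NFA, and deciding piecewise testability for NFAs over a fixed alphabet with at least two letters is in \PSpace~\cite{tm2016}, so the same bound applies to poNFAs.

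For hardness, the naive attempt would be to mimic Lemma~\ref{lemma0k} and reduce universality to piecewise testability over the same alphabet. This cannot work for a structural reason: non-piecewise-testability is always witnessed by an \emph{infinite} family (a nontrivial cycle through states of different acceptance status in the minimal DFA, equivalently, for every $k$ a pair of $\sim_k$-equivalent words of which exactly one lies in the language), whereas universality is destroyed by the absence of a \emph{single} word. Hence the reduction must \emph{amplify} a single missing word of $L$ into such an infinite alternating family. Concretely, given a poNFA $\A$ over $\{a,b\}$ with $L=L(\A)$, I would construct a poNFA $\A'$ so that (i) if $L=\{a,b\}^*$, then $L(\A')$ is a trivially piecewise testable language (e.g.\ the full language over the working alphabet, or a fixed finite Boolean combination of languages $\Sigma^* a_1 \Sigma^* \cdots \Sigma^* a_n \Sigma^*$); and (ii) if $L\neq\{a,b\}^*$, with some $w_0\notin L$, then $L(\A')$ has, for every $k$, two $\sim_k$-equivalent words separated by $L(\A')$.

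The amplification I have in mind is a pumpable gadget that repeatedly reads a block and checks it against $\A$: in the universal case every block passes and the gadget degenerates, so the determinisation stays acyclic and $L(\A')$ is piecewise testable; in the non-universal case the block $w_0$ can be inserted at arbitrarily many pumped positions, and pumping toggles membership, forcing the required alternation. Because $\A'$ must itself be a poNFA, it is acyclic up to self-loops, so the defeating nontrivial cycle appears only in the subset determinisation --- exactly the mechanism by which a poNFA can recognise a non-$\R$-trivial language. It is crucial that the construction use $\A$ only for the \emph{positive} language $L$ and never for $\overline{L}$: poNFAs are not closed under complement, and it is precisely this asymmetry (one can guess an accepting run but not certify its absence) that keeps universality \PSpace-hard, so the amplification must be arranged so that the determinisation, rather than $\A'$ itself, performs the complementation that exposes the cycle.

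The hard part will be designing and verifying this gadget under three simultaneous constraints: keeping $\A'$ partially ordered, relying only on $\A$ and not on $\overline{L}$, and staying over a fixed, ultimately binary, alphabet. I expect it to be cleanest to first carry out the construction over a three-letter alphabet $\{a,b,\#\}$, where a separator $\#$ makes the blockwise toggling transparent and non-piecewise testability is easy to exhibit through factor-like (hence, over at least three letters, non-subword) obstructions; one then transfers the result to a binary alphabet by a standard length-respecting block encoding of $\{a,b,\#\}$ into $\{a,b\}^*$ that preserves both poNFA-recognisability and (non-)piecewise testability. The most delicate step is the rigorous proof of direction (ii): that a single missing word $w_0$ really yields, for every $k$, a $\sim_k$-equivalent pair split by $L(\A')$, i.e.\ that the determinisation genuinely contains a nontrivial cycle of mixed acceptance.
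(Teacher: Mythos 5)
Your membership argument matches the paper's. For hardness, however, the paper does not reduce from poNFA universality as a black box: it reuses the specific construction from the \PSpace-hardness proof of universality for rpoNFAs in~\cite{mfcs16:mktmmt_full}, which already produces, from a polynomially-space-bounded DTM $M$ and input $x$, a \emph{binary} poNFA $\A_x$ whose language is $\{0,1\}^*$ when $M$ rejects $x$ and is provably not $\R$-trivial (hence not piecewise testable) when $M$ accepts $x$. This sidesteps exactly the two steps your plan leaves open, and both are genuine gaps. First, the amplification gadget is never actually constructed; the one such gadget in this paper (the proof of Theorem~\ref{rpoNFAtoPT}) crucially adds \emph{two fresh letters} $a,b$ to create the alternating family $wa(ba)^k \sim_k wb(ab)^k$, so it cannot be run inside the binary alphabet that $\A$ already uses, and designing a substitute within $\{0,1\}$ while keeping $\A'$ partially ordered is precisely the content of the missing proof.

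Second, the ``standard length-respecting block encoding'' transfer from $\{a,b,\#\}$ to $\{a,b\}$ is a step that would fail as stated. Forward images under fixed-length block codes do not preserve piecewise testability (for instance, $\#^*$ is piecewise testable, but its image $(11)^*$ under $\#\mapsto 11$ is not even aperiodic), and to keep the reduction meaningful you must union in the non-code ``garbage'' words, whose language (the complement of the set of valid block sequences) involves positional constraints modulo the block length and is therefore not aperiodic, hence not recognizable by any poNFA. So the encoded automaton either fails to be a poNFA or fails to have the intended piecewise-testability status. Unless you replace this step by an argument that works natively over $\{0,1\}$, the reduction does not go through; inheriting a ready-made binary construction from the universality hardness proof, as the paper does, is the clean way around this.
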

  \begin{proof}
    Membership in \PSpace follows from the results for NFAs.
    \PSpace-hardness follows from an analogous result for rpoNFAs~\cite{mfcs16:mktmmt_full} where we construct, given a polynomial-space-bounded DTM $M$ and an input $x$, a binary poNFA $\A_x$ in polynomial time such that if $M$ does not accept $x$, then $L(\A_x)=\{0,1\}^*$, which is piecewise testable, and if $M$ accepts $x$, then $L(\A_x)$ is not $\R$-trivial, and hence neither piecewise testable. The language of $\A_x$ is thus piecewise testable if and only if $M$ does not accept $x$.
  \end{proof}  

  The case of rpoNFAs is more complicated. In the next theorem, we show that deciding piecewise testability for rpoNFAs is \PSpace-complete if the alphabet is not fixed, and then we discuss the case of rpoNFAs over a fixed (binary) alphabet.
  \begin{thm}\label{rpoNFAtoPT}
    Deciding piecewise testability for rpoNFAs is \PSpace-complete.
  \end{thm}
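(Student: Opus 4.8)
Membership in \PSpace follows from the corresponding result for NFAs, exactly as in Theorem~\ref{thm72b}. For hardness, the plan is to reduce from the universality problem for ptNFAs, which is \PSpace-complete by Theorem~\ref{thmMain}. The one feature of that construction I will exploit is that its non-universal instances miss only finitely many words: the ptNFA $\A$ built there accepts every word over $\Pi$ except those $w$ with $w[1]=W_{n,n}$ whose second component correctly encodes an accepting run of $\M$ on $x$. All such words have the \emph{same} length $|W_{n,n}|$, so the complement $S=\Pi^*\setminus L(\A)$ is finite, and $S=\emptyset$ precisely when $\M$ rejects $x$.

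The reason this case is harder than the poNFA case of Theorem~\ref{thm72b} is conceptual. Recall that rpoNFAs recognize exactly the $\R$-trivial languages (poDFA-languages); hence $L(\A_x)$ is \emph{always} $\R$-trivial and its minimal DFA is always partially ordered. Non-piecewise-testability can therefore never arise from a violation of partial order, as it did for poNFAs, but must arise from a failure of \emph{confluence}: by Simon's characterization a language is piecewise testable iff its minimal DFA is partially ordered and confluent, so for an rpoNFA language the piecewise testability question reduces to confluence of the minimal DFA alone. The gadget must thus convert non-universality into a confluence obstruction while keeping the constructed automaton self-loop-deterministic and partially ordered.

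Concretely, given the ptNFA $\A$ over $\Pi$ from Theorem~\ref{thmMain}, I would introduce a fresh letter $d\notin\Pi$, set $\Gamma=\Pi\cup\{d\}$, and build $\A_x$ from $\A$ by declaring all states of $\A$ non-accepting, adding one new accepting state $\top$ with self-loops $\top\xrightarrow{\gamma}\top$ for every $\gamma\in\Gamma$, and adding a transition $q\xrightarrow{d}\top$ from every originally accepting state $q$ of $\A$. One checks that $L(\A_x)=L(\A)\,d\,\Gamma^*$: a run reaches $\top$ exactly after reading a $d$-free prefix $u\in L(\A)$ and then the first $d$, after which it loops in $\top$. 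Moreover $\A_x$ is an rpoNFA, since $\top$ is a new maximal state whose only outgoing edges are self-loops (partial order is preserved), and the new $d$-edges leave their source and hence create no self-loop (self-loop-determinism is preserved).

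It then remains to show that $L(\A_x)$ is piecewise testable iff $\M$ rejects $x$. If $\M$ rejects, then $L(\A)=\Pi^*$ and $L(\A_x)=\Gamma^*d\Gamma^*$ is the set of words containing $d$, which is $1$-piecewise testable. If $\M$ accepts, pick any $s\in S$; since every word of $S$ has length $|W_{n,n}|$, for each $\sigma\in\Pi$ any word whose maximal $d$-free prefix starts with $s\sigma$ is longer than $|W_{n,n}|$ and so lies in $L(\A)$. Computing residuals gives $(sd)^{-1}L(\A_x)=\emptyset$ (the reject sink) while $(s\sigma)^{-1}L(\A_x)=\Gamma^*d\Gamma^*$; from the empty sink only the empty language is reachable, whereas from $\Gamma^*d\Gamma^*$ the empty language is never reached, so no word over $\{d,\sigma\}$ can merge the two successors of the state reached by $s$. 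The minimal DFA is therefore not confluent and $L(\A_x)$ is not piecewise testable. The step I expect to be the main obstacle is precisely this one: securing that the source's non-universal complement is finite and of uniform length, so that the residual $(s\sigma)^{-1}L(\A_x)$ is forced to be $\Gamma^*d\Gamma^*$ and the confluence violation is unavoidable rather than accidentally repairable. Reducing from \emph{general} rpoNFA universality, where the missed set may be infinite, would leave $(s\sigma)^{-1}L(\A_x)$ uncontrolled and possibly mergeable with the reject sink; it is the specific finite-complement structure of the ptNFA of Theorem~\ref{thmMain} that makes the confluence argument go through, and a careful verification that $\A_x$ stays self-loop-deterministic and partially ordered is the remaining routine-but-essential check.
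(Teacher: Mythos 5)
Your proof is correct, but it takes a genuinely different route from the paper's. The paper reduces from universality of \emph{rpoNFAs} (known to be \PSpace-complete from prior work), adds \emph{two} fresh letters $a,b$ and two sink states, and then, given any word $w$ outside the source language, directly exhibits for every $k$ the pair $wa(ba)^k \sim_k wb(ab)^k$ with opposite acceptance status; this refutes piecewise testability via Simon's congruence without any appeal to the minimal DFA and, importantly, without needing any structural information about the non-universal instance. You instead reduce from universality of \emph{ptNFAs} (Theorem~\ref{thmMain}), add a single fresh letter and a single accepting sink, and argue through the minimal-DFA characterization: partial order comes for free from $\R$-triviality of rpoNFA languages, and confluence is shown to fail at the state reached by a non-accepted word $s$. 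Your argument is sound, but it leans on the instance-specific fact that the complement of the language of Theorem~\ref{thmMain} is finite and concentrated at the single length $|W_{n,n}|$ (which forces $(s\sigma)^{-1}L(\A_x)=\Gamma^*d\Gamma^*$), and you correctly flag that the reduction would break from general rpoNFA universality. What each approach buys: yours is a leaner gadget (one letter, one state) and a nice illustration that for rpoNFA languages piecewise testability is exactly a confluence question; the paper's two-letter gadget is more robust, since the $wa(ba)^k$ versus $wb(ab)^k$ trick works from an arbitrary non-universal source and so the same construction template is reusable (e.g., it does not care whether the missed set is finite). Both reductions add fresh letters and hence only establish hardness for growing alphabets, consistent with the separate fixed-alphabet result of Theorem~\ref{rpoNFApt}.
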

  \begin{proof}
    Membership follows from the result for NFAs. To prove hardness, we reduce the universality problem for rpoNFAs, which is \PSpace-complete in general and \coNP-complete for fixed alphabets~\cite{mfcs16:mktmmt_full}.
    
    Let $\A$ be an rpoNFA, and let $\Sigma$ be its alphabet. We construct an rpoNFA $\B$ from $\A$ by adding two fresh letters $a,b\notin \Sigma$ and by adding two new states $1$ and $2$. State $1$ is the only accepting state of $\B$. From every non-accepting state of $\A$, we add an $a$-transition to state $1$ and a $b$-transition to state $2$. From every accepting state of $\A$, we add an $a$- and a $b$-transition to $1$. Finally, states $1$ and $2$ contain self-loops under all lettrers from $\Sigma\cup\{a,b\}$. The construction is illustrated in Figure~\ref{proofOfThm20}. 
    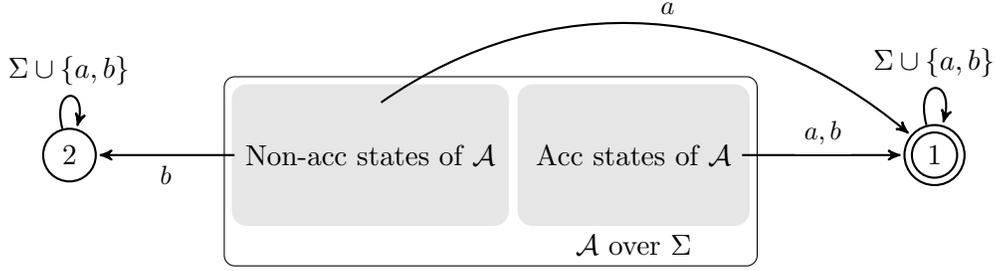
\begin{figure}
      \centering
      \begin{tikzpicture}[>=stealth',->,auto,shorten >=1pt,node distance=3cm,thick,
        state/.style={circle,minimum size=7mm,thick,draw=black,initial text=},
        eloop/.style={min distance=15mm,in=45,out=135,looseness=10},
        edgenode/.style={fill=white,font=\small},
        bigloop/.style={shift={(0,0.01)},text width=1.6cm,align=center}
        ]
        \tikzstyle{place}=[circle,thick,draw=black,minimum size=8mm]
        \tikzstyle{red place}=[place,draw=red!75,fill=red!20]
        \tikzstyle{transition}=[rectangle,thick,draw=black!75,fill=black!20,minimum size=4mm]

        \begin{scope}
          \node []      (a1)  {Acc states of $\A$};
          \node []      (d1) [above of=a1, node distance=.6cm]   {};
          \node []      (e1) [below of=a1, node distance=.6cm]   {};
          \node []      (d2) [left  of=d1, node distance=1.2cm]  {};
          \node []      (e2) [right of=e1, node distance=1.2cm]  {};
          \node []      (a2) [below of=e1, node distance=.6cm] {$\A$ over $\Sigma$};
          \node []      (d1) [left  of=a1, node distance=3.5cm] {Non-acc states of $\A$};
          \node []      (b1) [above of=d1, node distance=.6cm]  {};
          \node []      (c1) [below of=d1, node distance=.6cm]  {};
          \node []      (b2) [left  of=b1, node distance=1.5cm]  {};
          \node []      (c2) [right of=c1, node distance=1.5cm]  {};
          \node [state,accepting] (1)  [right of=a1, node distance=4cm] {$1$};
          \node [state]           (2)  [left  of=d1, node distance=4cm] {$2$};
        \end{scope}

        \path
          (a1) edge node[edgenode] {$a,b$} (1)
          (1) edge[loop above] node[bigloop] {$\Sigma\cup\{a,b\}$} (1)
          (d1) edge node[edgenode] {$b$} (2)
          (b1) edge[out=35,in=140] node[edgenode] {$a$} (1)
          (2) edge[loop above] node[bigloop] {$\Sigma\cup\{a,b\}$} (2)
          ;

        \begin{pgfonlayer}{background}
          \filldraw [line width=4mm,join=round,black!10]
            (c2.south  -| c2.east)  rectangle (b2.north  -| b2.west);
          \filldraw [line width=4mm,join=round,black!10]
            (e2.south  -| e2.east)  rectangle (d2.north  -| d2.west);
          \path (a2.south  -| b2.west)+(-0.3,0) node (a) {};
          \path (b1.north  -| e2.east)+(0.3,0.3)   node (b) {};
          \path[rounded corners, draw=black] (a) rectangle (b);
        \end{pgfonlayer}
      \end{tikzpicture}
      \caption{The illustration from the proof of Theorem~\ref{rpoNFAtoPT}.}
      \label{proofOfThm20}
    \end{figure}
    We now show that $L(\B)$ is piecewise testable if and only if $\A$ is universal.
    
    If $L(\A)=\Sigma^*$, then $L(\B)=\Sigma^*(a+b)(\Sigma\cup\{a,b\})^*$, because for every $w\in L(\A)$, the set of reachable states in $\B$ under $w$ contains an accepting state, and hence both $wa$ and $wb$ lead to state $1$. Language $\Sigma^*(a+b)(\Sigma\cup\{a,b\})^*$ is piecewise testable; it can be seen by computing the two-state minimal DFA and verifying that it is partially ordered and confluent.

    If $L(\A)$ is not universal, then there is a $w\in \Sigma^*\setminus L(\A)$. Then the set of states reachable under $w$ in $\B$ consists only of non-accepting states. By the construction, only state $1$ is reachable under $wa$, and only state $2$ is reachable under $wb$. Let $L_1 = wa(ba)^*$ and $L_2=wb(ab)^*$. Every word of $L_1$ is accepted by $\B$ whereas none word of $L_2$ is. If $L(\B)$ was piecewise testable, then there would be $k\ge 0$ such that for any words $w_1$ and $w_2$ with $w_1 \sim_k w_2$, either both words belong to $L(\B)$ or neither does. However, for every $k\ge 0$, we have that $wa(ba)^k \sim_k wb(ab)^k$ and the acceptance status of the two words is different. Therefore, the language $L(\B)$ is not piecewise testable.
  \end{proof}

  We now discuss the complexity of deciding piecewise testability for rpoNFAs over a fixed (binary) alphabet.
  \begin{thm}\label{rpoNFApt}
    Let $\Sigma$ be a fixed alphabet with at least two letters. Deciding piecewise testability for rpoNFAs over $\Sigma$ is \coNP-complete.
  \end{thm}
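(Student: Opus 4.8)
The plan is to treat the two directions separately: \coNP-hardness comes almost for free from an earlier construction, and the real work goes into the membership argument.

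For \coNP-hardness I would reuse the construction of Theorem~\ref{rpoNFAtoPT} verbatim. That construction turns an rpoNFA $\A$ over $\Sigma$ into an rpoNFA $\B$ over $\Sigma\cup\{a,b\}$ with $L(\B)$ piecewise testable iff $\A$ is universal, it is computable in logarithmic space, and it keeps the alphabet of bounded size (it only adds the two fresh letters $a,b$ and the two states $1$ and $2$). Since deciding universality for rpoNFAs over a fixed alphabet is \coNP-complete~\cite{mfcs16:mktmmt_full}, this immediately yields \coNP-hardness of piecewise testability for rpoNFAs over a fixed (here $(|\Sigma|+2)$-letter) alphabet; the exact number of letters is irrelevant as long as it stays bounded.

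For membership, the key observation is that the language $L$ of an rpoNFA is always $\R$-trivial, so its minimal DFA $D$ is partially ordered. By Simon's characterization, $L$ is then piecewise testable iff $D$ is additionally confluent, and in that case $L$ is already $k$-piecewise testable for $k=\depth(D)$~\cite{dlt15}; conversely, if $L$ is not piecewise testable it is not $k$-piecewise testable for any $k$. Hence $L$ is piecewise testable iff it is $\depth(D)$-piecewise testable. Over a fixed alphabet the distinguishing words of an rpoNFA have length polynomial in the number $n$ of states~\cite{mfcs16:mktmmt_full} (for instance, the single rejected word $W_{n,n}$ of $\A_{k,n}$ has polynomial length once the alphabet is fixed), so $\depth(D)$ is bounded by a fixed polynomial $k=k(n)$ that I can compute from $n$. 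It therefore suffices to decide $k$-piecewise testability for this polynomially large $k$. I would then show that non-$k$-piecewise-testability is in \NP even when $k$ is part of the input, mirroring the argument behind Theorem~\ref{theorem27}: a language fails to be $k$-piecewise testable precisely when there are words $w_1\sim_k w_2$ with exactly one of them accepted, over a fixed alphabet such a witness pair can be taken of polynomial length, so I would guess $w_1,w_2$ explicitly, check that exactly one of them lies in $L$ by simulating the rpoNFA in polynomial time, and verify $w_1\sim_k w_2$, i.e.\ $\sub_k(w_1)=\sub_k(w_2)$, using the polynomial-time decision procedure for Simon's congruence (so one never enumerates the exponentially many subwords). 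This places non-piecewise-testability in \NP, hence piecewise testability in \coNP.

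The main obstacle I anticipate is the membership direction, specifically pinning down the polynomial bound on $\depth(D)$ for rpoNFAs over a fixed alphabet together with the reduction of piecewise testability to $\depth(D)$-piecewise testability. The subtlety is that $D$ itself can be of exponential size, so I must reason about its depth \emph{indirectly} through the polynomial length of distinguishing words rather than by building $D$, and I must make sure that feeding a polynomially large value $k$ (rather than a constant) into the $k$-piecewise-testability test still keeps the refutation procedure in \NP.
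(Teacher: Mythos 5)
Your hardness argument does not prove the theorem as stated. The statement requires \coNP-hardness for \emph{every} fixed alphabet with at least two letters, in particular for a binary alphabet. Reusing the construction of Theorem~\ref{rpoNFAtoPT} adds two fresh letters $a,b$ to the alphabet of the universality instance; since universality for rpoNFAs is \coNP-hard only from binary alphabets upwards, your reduction yields hardness only for alphabets of size at least four, leaving the cases $|\Sigma|\in\{2,3\}$ uncovered. The paper avoids this by reducing DNF validity directly: it builds an rpoNFA $\M$ over $\{0,1\}$ accepting $L(\beta)\{0,1\}^*\cup\{w\mid |w|<n\}$ and shows, via the UMS property of the minimal DFA, that $L(\M)$ is piecewise testable iff $\varphi$ is valid. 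Without such a binary-alphabet construction (or an encoding of the two fresh letters into $\{0,1\}$, which you do not provide), the hardness direction is incomplete.

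The membership direction also rests on two claims that you flag but do not establish, and they are precisely where the work lies. First, the polynomial bound on $\depth(\D)$: citing polynomial-length \emph{distinguishing} words does not bound the depth of the minimal DFA; what is needed is the argument via unique minimal representatives of the $\sim^{\R}_d$-classes (the paper's Claim~\ref{claim11.5} shows every state of $\D$ is \emph{reached} by a word of length $O(d^c)$, and one must additionally argue that along a simple path in $\D$ the classes of the prefixes are pairwise distinct with strictly growing minimal representatives). Second, and more seriously, your \NP{} refutation procedure guesses a pair $w_1\sim_K w_2$ with different acceptance status for a value $K$ that is polynomial in $n$ rather than constant; the existence of such a witness pair of polynomial length is not established anywhere (the cited Corollary~24 of~\cite{mfcs16:mktmmt_full} concerns constant $k$, where the number of $\sim_k$-classes is constant). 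The paper sidesteps $\sim_k$-witnesses entirely: it certifies non-piecewise-testability by exhibiting two polynomial-length words $w_1,w_2$ whose reached state sets $X_s,X_t$ in the rpoNFA are saturated maximal sets with $\Sigma(X_s)\subseteq\Sigma(X_t)$ and are non-equivalent (Claims~\ref{claim11.5} and~\ref{claim11.6}), i.e., a polynomial certificate of a UMS violation in $\D$ computed without building $\D$. Your route may well be completable, but as written both of its load-bearing steps are assertions rather than proofs.
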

  \begin{proof}
    To show membership in \coNP, we proceed as follows. Let $\A=(Q,\Sigma,\delta_{\A},Q_0,F)$ be an rpoNFA over a fixed alphabet $\Sigma=\{a_1,a_2,\ldots,a_c\}$, and consider the minimal DFA $\D$ equivalent to $\A$. Then $L(\A)$ is piecewise testable if and only if $\D$ satisfies the UMS property (\cf\ Subsection~\ref{subsec3}). We proceed by first showing two auxiliary claims. 

    \begin{claim}\label{claim11.5}
      Every state of $\D$ is reachable by a word of polynomial length with respect to the size of $\A$.
    \end{claim}
    \begin{proof}
      To show this claim, we briefly recall basic definitions and results we need here. For more details, we refer the reader to Kr\"otzsch et al.~\cite{mfcs16:mktmmt_full}. 
      
      Similarly to piecewise testable languages, $\R$-trivial languages can be defined by a congruence $\sim^{\R}_{k}$ that considers subsequences of prefixes. For $x,y\in\Sigma^*$ and $k\ge 0$, we define $x \sim^{\R}_{k} y$ if and only if (i) for each prefix $u$ of $x$, there exists a prefix $v$ of $y$ such that $u \sim_k v$, and (ii) for each prefix $v$ of $y$, there exists a prefix $u$ of $x$ such that $u \sim_k v$. A regular language is \emph{$k$-$\R$-trivial} if it is a union of $\sim^{\R}_{k}$ classes, and it is $\R$-trivial if it is $k$-$\R$-trivial for some $k\ge 0$. Every $\sim^{\R}_{k}$ class has a unique minimal representative~\cite{BrzozowskiF80}. It is known that every $k$-$\R$-trivial language is also $(k+1)$-$\R$-trivial, and that the language recognized by a complete rpoNFA $\B$ is $\depth(\B)$-$\R$-trivial.

      Let $d$ be the depth of (the completion of) $\A$. Then, the language $L(\A)$ is $d$-$\R$-trivial~\cite[Theorem~8]{mfcs16:mktmmt_full}, and hence there is a congruence $\sim_d^{\R}$, where every $\sim_d^{\R}$ class has a unique minimal representative. Let $s$ be a state of $\D$ and $w$ a word reaching state $s$ from the initial state of $\D$. Let $w'$ denote the unique minimal representative of the $\sim_d^{\R}$-class containing $w$.
      If $w'$ leads $\D$ to a state $t\neq s$, then there is $u$ distinguishing $s$ and $t$ in $\D$ because $\D$ is minimal. Since $\sim_d^{\R}$ is a congruence and $w \sim_d^{\R} w'$, we have that $wu \sim_d^{\R} w'u$ and $wu\in L(\A)$ if and only if $w'u\notin L(\A)$, which is a contradiction to the fact that $L(\A)$ is $d$-$\R$-trivial, \ie, a union of $\sim^{\R}_{d}$ classes. Therefore, $w'$ leads $\D$ to state $s$. The length of $w'$ is polynomial, namely $O(d^c)$, where $c$ is the cardinality of $\Sigma$~\cite[Lemma~15]{mfcs16:mktmmt_full}.
    \end{proof}

    \begin{claim}\label{claim11.6}
      If $s$ is a state of $\D$ reachable by a word $w$, and $\{b_1,\ldots,b_m\}\subseteq \Sigma(s)$, then, for $v=b_1b_2\cdots b_m$, $\delta_{\A}(Q_0,w v^n)=\delta_{\A}(Q_0,w v^{n+1})$, where $n$ is the number of states of $\A$. Moreover, $\{b_1,\ldots,b_m\}\subseteq \Sigma(\delta_{\A}(Q_0,w v^n))\subseteq \Sigma(s)$.
    \end{claim}
    \begin{proof}
      Let $Q$ denote the set of states of $\A$ and extend the partial order of $\A$ to a linear order. Let $Q'\subseteq Q$ be the set of all states with self-loops under all letters of $\{b_1,\ldots,b_m\}$, and let $Q''=Q\setminus Q' =\{p_1,\ldots,p_{n'}\}$. We assume that $p_1 < p_2 < \ldots < p_{n'}$ in the linear order. Let $\delta_{\A}(Q_0,w) = X_1 \cup Z_1$, where $X_1 \subseteq Q''$ and $Z_1 \subseteq Q'$. Then $X_1 \subseteq \{p_i,p_{i+1},\ldots, p_{n'}\}$ for some $i$ such that $p_i \in X_1$. Let $X_1 \xrightarrow{v} X_2$. Then the minimal state $p_j$ of $X_2$ is strictly greater than $p_i$, since $\A$ is an rpoNFA, and hence $X_2 \subseteq \{p_{j},\ldots,p_{n'}\}$ with $j>i$. By induction, we have that $X_1 \xrightarrow{v^n} Z_2$, where $Z_2 \subseteq Q'$. Let $Z=Z_2\cup Z_1$. Then $\delta_{\A}(Q_0,w)=X_1\cup Z_1 \xrightarrow{v^n} Z =\delta_{\A}(Q_0,wv^n) \xrightarrow{v} Z=\delta_{\A}(Q_0,wv^{n+1})$.
      
      Since $\{b_1,\ldots,b_m\}\subseteq\Sigma(Z)$, it remains to show that $\Sigma(Z)\subseteq \Sigma(s)$. For the sake of contradiction, assume that there is $a \in \Sigma(Z)\setminus \Sigma(s)$. Then we have that $Z\xrightarrow{a} Z$, and hence, for any $u\in\Sigma^*$, $wv^nu$ belongs to $L(\A)$ if and only if $wv^nau$ does. However, in $\D$, $s\xrightarrow{a} s'$ for some $s'\neq s$, and hence, since $v^n\in\Sigma(s)^*$, there is $u\in\Sigma^*$ such that $wv^nu$ belongs to $L(\D)=L(\A)$ if and only if $wv^nau$ does not; a contradiction.
    \end{proof}

    Now, $L(\A)=L(\D)$ is not piecewise testable if and only if there are states $s\neq t$ in $\D$ such that $s$ and $t$ are two maximal states of the connected component of $G(\D,\Sigma(s))$ containing $s$; that is, $\Sigma(s)\subseteq\Sigma(t)$. By Claim~\ref{claim11.5}, there are two words $w_s$ and $w_t$ of polynomial length with respect to the size of $\A$ reaching the states $s$ and $t$ of $\D$, respectively. Then, in $\A$, $Q_0\xrightarrow{w_s} S$ for some $S\subseteq Q$. If $\Sigma(s)=\{b_1,\ldots,b_{c'}\}$, let $v=b_1\cdots b_{c'}$. Then, by Claim~\ref{claim11.6}, $S \xrightarrow{v^n} X_s \xrightarrow{v} X_s$, where $n$ is the number of states of $\A$, and $\Sigma(X_s)= \Sigma(s)$. Analogously, $Q_0\xrightarrow{w_t} T \xrightarrow{v^n} X_t \xrightarrow{v} X_t$ with $\Sigma(X_s)\subseteq\Sigma(X_t)\subseteq\Sigma(t)$.
    Furthermore, since the length of $v^n$ is $nc'$, which is polynomial in the size of $\A$, the length of the two words $w_1 = w_s v^n$ and $w_2 = w_t v^n$ is polynomial in the size of $\A$.
   
    Altogether, we have shown that the language of $\A$ is not piecewise testable if and only if there are two different words $w_1$ and $w_2$ of polynomial length in the size of $\A$ such that 
    \begin{itemize}
      \item $Q_0 \xrightarrow{w_1} X_s$ and $Q_0 \xrightarrow{w_2} X_t$, 
      \item $X_s$ and $X_t$ are maximal with respect to $\Sigma(X_s)$, and
      \item $X_s$ and $X_t$ are non-equivalent as states of the subset automaton -- which can be checked by guessing a word that distinguishes them; by Claim~\ref{claim11.5} applied to $\A$ with the set of initial states $X_s$ (resp. $X_t$) instead of $Q_0$, which results in a subautomaton of $\D$, and the existence of unique minimal representatives of the equivalence classes, \cf\ the proof of the claim, such a word is of polynomial length.
    \end{itemize}
    This shows that non-piecewise testability of an rpoNFA-language over a fixed alphabet is in \NP, which was to be shown.

    To show hardness, we reduce the DNF validity. Let $U=\{x_1,\ldots,x_n\}$ be a set of variables and $\varphi = \varphi_1 \lor \ldots \lor \varphi_m$ be a formula in DNF, where every $\varphi_i$ is a conjunction of literals. We assume that no $\varphi_i$ contains both $x$ and $\neg x$. For every $i=1,\ldots,m$, we define $\beta_i = \beta_{i,1}\beta_{i,2}\ldots\beta_{i,n}$, where 
    \[
      \beta_{i,j} = \left\{
        \begin{array}{ll}
          0+1 & \text{ if neither } x_j \text{ nor } \neg x_j \text{ appear in } \varphi_i\\
          0   & \text{ if } \neg x_j \text{ appears in } \varphi_i\\
          1   & \text{ if } x_j \text{ appears in } \varphi_i
        \end{array}
        \right.
    \]
    for $j=1,2,\ldots,n$. Let $\beta = \sum_{i=1}^{m} \beta_{i}$. Then $w\in L(\beta)$ if and only if $w$ satisfies some $\varphi_i$, that is, $L(\beta) = \{0,1\}^n$ if and only if $\varphi$ is valid.

    We construct an rpoNFA $\M$ as follows. For every $\beta_{i}$, we construct a deterministic path $q_{i,0} \xrightarrow{\beta_{i,1}} q_{i,1} \xrightarrow{\beta_{i,2}} q_{i,2} \ldots \xrightarrow{\beta_{i,n}} q_{i,n} \xrightarrow{0,1} q_{i,n}$ with a self-loop at the end; if $\beta_{i,k}=0+1$, the notation means that there are two transitions under both letters $0$ and $1$. Then we add a path $\alpha_1 \xrightarrow{0,1} \alpha_2 \xrightarrow{0,1} \ldots \xrightarrow{0,1} \alpha_{n}$ to accept all words of length less than $n$. The automaton $\M$ consists of these paths, where the initial states are $\{q_{i,0} \mid i=1,\ldots,m\}\cup\{\alpha_1\}$ and the accepting states are $\{q_{i,n} \mid i=1,\ldots,m\}\cup\{\alpha_1,\ldots,\alpha_{n}\}$. Notice that $\M$ is an rpoNFA accepting the language $L(\M) = L(\beta)\{0,1\}^* \cup \{w \in \{0,1\}^* \mid |w| < n\}$.
    
    If $L(\beta) = \{0,1\}^n$, then $L(\M)=\{0,1\}^*$ is piecewise testable.
    
    If $L(\beta) \neq \{0,1\}^n$, we show that $L(\M)$ is not piecewise testable using the UMS property on the minimal DFA equivalent to $\M$. Since $\M$ is an rpoNFA, the minimal DFA is partially ordered. By the assumption that $L(\beta) \neq \{0,1\}^n$, there is a $w\in\{0,1\}^n$ such that $w\{0,1\}^* \cap L(\M) = \emptyset$. Since $L(\beta)\neq\emptyset$ by the construction, there is $w'\in L(\beta)$, which implies that $w'\{0,1\}^* \subseteq L(\M)$. Since no word of $w\{0,1\}^*$ is accepted by $\M$, there is a path from the initial state of the minimal DFA to a rejecting state, say $q_r$, that is maximal under $\{0,1\}$. Similarly, since all words of $w'\{0,1\}^*$ are accepted by $\M$, there is a path in the minimal DFA to an accepting state, say $q_a$, that is maximal with respect to $\{0,1\}$. But then $q_a$ and $q_r$ are two maximal states violating the UMS property of the minimal DFA.

    Thus, $L(\M)$ is piecewise testable if and only if $\varphi$ is valid.
  \end{proof}

\section{Inclusion and Equivalence}\label{IandE}
  A consequence of the complexity of universality is the worst-case lower-bound complexity for the inclusion and equivalence problems. These problems are of interest, \eg, in optimization. The problems ask, given languages $K$ and $L$, whether $K\subseteq L$, resp. $K=L$.  Although equivalence means two inclusions, complexities of these two problems may differ significantly, \eg, inclusion is undecidable for deterministic context-free languages~\cite{Friedman76} while equivalence is decidable~\cite{Senizergues97}. 
  
  Since universality can be expressed as the inclusion $\Sigma^* \subseteq L$ or the equivalence $\Sigma^* = L$, we immediately obtain the hardness results for inclusion and equivalence from the results for universality. Therefore, it remains to show memberships of our results summarized in Tables~\ref{table3b} and~\ref{table4b}.
  
  %inclusion
\begin{table*}\centering
  \ra{1.2}
  \begin{tabular}{@{}lllll@{}}\toprule
                    & \multicolumn{4}{c}{$B$} \\
    \cmidrule{2-5}
    $A$ & DFA & ptNFA \& rpoNFA & poNFA & NFA\\ \midrule
    DFA   & \LogSpace/\NL
          & \NL/\coNP/\PSpace
          & \NL/\PSpace
          & \coNP/\PSpace \\
    ptNFA & \NL
          & \NL/\coNP/\PSpace
          & \NL/\PSpace
          & \coNP/\PSpace \\
    rpoNFA& \NL
          & \NL/\coNP/\PSpace
          & \NL/\PSpace
          & \coNP/\PSpace \\
    poNFA & \NL
          & \NL/\coNP/\PSpace
          & \NL/\PSpace
          & \coNP/\PSpace \\
    NFA   & \NL
          & \NL/\coNP/\PSpace
          & \NL/\PSpace
          & \coNP/\PSpace\\
    \bottomrule
  \end{tabular}
  \caption{Complexity of deciding inclusion $L(A)\subseteq L(B)$ (unary/fixed[/growing] alphabet), all results are complete for the given class.}
  \label{table3b}
\end{table*}

\begin{table*}\centering
  \ra{1.2}
  \begin{tabular}{@{}lllll@{}}
        \toprule
                  & DFA & ptNFA \& rpoNFA & poNFA & NFA \\ \midrule
           DFA    & \LogSpace/\NL
                  & \NL/\coNP/\PSpace
                  & \NL/\PSpace
                  & \coNP/\PSpace \\
        ptNFA     & 
                  & \NL/\coNP/\PSpace
                  & \NL/\PSpace
                  & \coNP/\PSpace \\
        rpoNFA    & 
                  & \NL/\coNP/\PSpace
                  & \NL/\PSpace
                  & \coNP/\PSpace \\
        poNFA     & 
                  & 
                  & \NL/\PSpace
                  & \coNP/\PSpace \\
        NFA       & 
                  & 
                  & 
                  & \coNP/\PSpace \\
    \bottomrule
  \end{tabular}
  \caption{Complexity of deciding equivalence (unary/fixed[/growing] alphabet), the problems are complete for the given classes.}
  \label{table4b}
\end{table*}

  \subsection{Proofs}
  Let $A$ be an automaton of any of the considered types. We now discuss the cases depending on the type of $B$. We assume that both automata are over the same alphabet specified by $B$.
  
  If $B$ is a DFA, then $L(A)\subseteq L(B)$ if and only if $L(A) \cap L(\overline{B}) = \emptyset$, which can be checked in \NL (or in \LogSpace if both automata are unary DFAs), where $\overline{B}$ denotes the DFA obtained by complementing $B$. This covers the first column of Table~\ref{table3b}.
  
  If $B$ is an rpoNFA over a fixed alphabet, then deciding $L(A)\subseteq L(B)$ is in \coNP~\cite[Theorem~23]{mfcs16:mktmmt_full}. Furthermore, the case of a unary alphabet follows from the case of unary poNFAs, and the case of a growing alphabet from the case of general NFAs discussed below.
  
  If $B$ is a unary poNFA, we distinguish several cases. First, deciding whether the language of an NFA is finite is in \NL. Thus, if $L(A)$ is infinite and $L(B)$ finite, the inclusion does not hold. If both the languages are finite, then the number of words is bounded by the number of states, and hence the inclusion can be decided in \NL. If $L(B)$ is infinite, then there is $n$ bounded by the number of states of $B$ such that $L(B)$ contains all words of length at least $n$. Thus, the inclusion does not hold if and only if there is a word of length at most $n$ in $L(A)$ that is not in $L(B)$, which can again be checked in \NL.
  
  If $B$ is an NFA, then deciding $L(A)\subseteq L(B)$ is in \PSpace using the standard on-the-fly computation of $\overline{B}$ and deciding $L(A) \cap L(\overline{B}) = \emptyset$.
    
  If $B$ is a unary NFA, then if $L(B)$ is finite, we proceed as in the case of $B$ being a unary poNFA. Therefore, assume that $L(B)$ is infinite and $B$ has $n$ states. Then the minimal DFA recognizing $L(B)$ has at most $2^n$ states (a better bound is given by Chrobak~\cite{chrobak}). If the inclusion $L(A)\subseteq L(B)$ does not hold and $A$ has $m$ states, then there exists $k \le m\cdot 2^n$, the number of states of $A\times \overline{B}$, such that $a^k \in L(A)\setminus L(B)$. We can guess $k$ in binary and verify that the inclusion does not hold in polynomial time by computing the reachable states under $a^k$ using the matrix multiplication. Hence, checking that the inclusion holds is in \coNP.
  
  %equivalence
  Notice that the upper-bound complexity for equivalence follows immediately from the upper-bound complexity for inclusion, which completes this section.

\section{Conclusion}
  We studied the complexity of deciding universality for ptNFAs, a type of nondeterministic finite automata the expressivity of which coincides with level 1 of the Straubing-Th\'erien hierarchy. Our proof showing \PSpace-completeness required a novel and nontrivial extension of our recent construction for self-loop-deterministic poNFAs. Consequently, we obtained \PSpace-completeness for several restricted types of poNFAs for problems including inclusion, equivalence, and ($k$-)piecewise testability.

\subsection*{Acknowledgements.}
  We thank O. Kl\'ima, M. Kunc and L. Pol\'ak for providing us with their manuscript~\cite{KKP}, and we gratefully acknowledge very useful suggestions and comments of the anonymous referees.

\bibliographystyle{plain}
\bibliography{mybib}

\begin{thebibliography}{10}

\bibitem{AhoHU74}
A.~V. Aho, J.~E. Hopcroft, and J.~D. Ullman.
\newblock {\em The Design and Analysis of Computer Algorithms}.
\newblock Addison-Wesley, 1974.

\bibitem{AlmeidaBKK15}
J.~Almeida, J.~Barto{\v{n}}ov{\'{a}}, O.~Kl{\'{\i}}ma, and M.~Kunc.
\newblock On decidability of intermediate levels of concatenation hierarchies.
\newblock In {\em Developments in Language Theory {(DLT)}}, volume 9168 of {\em
  LNCS}, pages 58--70. Springer, 2015.

\bibitem{BaierKatoenBook}
C.~Baier and J.{-}P. Katoen.
\newblock {\em Principles of model checking}.
\newblock {MIT} Press, 2008.

\bibitem{BarceloLR:jacm14}
P.~Barcel{\'{o}}, L.~Libkin, and J.~L. Reutter.
\newblock Querying regular graph patterns.
\newblock {\em Journal of the ACM}, 61:8:1--8:54, 2014.

\bibitem{Bojanczyk08}
M.~Boja{\'{n}}czyk.
\newblock The common fragment of {ACTL} and {LTL}.
\newblock In {\em Foundations of Software Science and Computational Structures
  ({FOSSACS})}, volume 4962 of {\em LNCS}, pages 172--185. Springer, 2008.

\bibitem{Bojanczyk:2012}
M.~Boja{\'{n}}czyk, L.~Segoufin, and H.~Straubing.
\newblock Piecewise testable tree languages.
\newblock {\em LMCS}, 8, 2012.

\bibitem{BMT2001}
A.~Bouajjani, A.~Muscholl, and T.~Touilim.
\newblock Permutation rewriting and algorithmic verification.
\newblock {\em Information and Computation}, 205:199--224, 2007.

\bibitem{Bryans2005}
J.~W. Bryans, M.~Koutny, and P.~Y.~A. Ryan.
\newblock Modelling opacity using petri nets.
\newblock {\em Electronic Notes in Theoretical Computer Science}, 121:101--115,
  2005.

\bibitem{BrzozowskiF80}
J.~A. Brzozowski and F.~E. Fich.
\newblock Languages of ${R}$-trivial monoids.
\newblock {\em Journal of Computer and System Sciences}, 20:32--49, 1980.

\bibitem{BrzozowskiK78}
J.~A. Brzozowski and R.~Knast.
\newblock The dot-depth hierarchy of star-free languages is infinite.
\newblock {\em Journal of Computer and System Sciences}, 16:37--55, 1978.

\bibitem{CainesGW1988}
P.~E. Caines, R.~Greiner, and S.~Wang.
\newblock Dynamical logic observers for finite automata.
\newblock In {\em Conference on Decision and Control (CDC)}, pages 226--233,
  1988.

\bibitem{CalvaneseGLV03:rpqreasoning}
D.~Calvanese, G.~{De Giacomo}, M.~Lenzerini, and M.~Y. Vardi.
\newblock Reasoning on regular path queries.
\newblock {\em {ACM SIGMOD} Record}, 32:83--92, 2003.

\bibitem{ChoH91}
S.~Cho and D.~T. Huynh.
\newblock Finite-automaton aperiodicity is {PSPACE}-complete.
\newblock {\em Theoretical Computer Science}, 88:99--116, 1991.

\bibitem{chrobak}
M.~Chrobak.
\newblock Finite automata and unary languages.
\newblock {\em Theoretical Computer Science}, 47:149--158, 1986.
\newblock {E}rrata: Theoretical Computer Science 302 (2003) 497-498.

\bibitem{CohenB71}
R.~S. Cohen and J.~A. Brzozowski.
\newblock Dot-depth of star-free events.
\newblock {\em Journal of Computer and System Sciences}, 5:1--16, 1971.

\bibitem{Craig57}
W.~Craig.
\newblock Linear reasoning. {A} new form of the herbrand-gentzen theorem.
\newblock {\em Journal of Symbolic Logic}, 22(3):250--268, 1957.

\bibitem{icalp2013}
W.~Czerwi{\'n}ski, W.~Martens, and T.~Masopust.
\newblock Efficient separability of regular languages by subsequences and
  suffixes.
\newblock In {\em International Colloquium on Automata, Languages, and
  Programming {(ICALP)}}, volume 7966 of {\em LNCS}, pages 150--161, 2013.

\bibitem{DaxK08}
C.~Dax and F.~Klaedtke.
\newblock Alternation elimination by complementation (extended abstract).
\newblock In {\em Logic for Programming, Artificial Intelligence, and Reasoning
  ({LPAR})}, volume 5330 of {\em LNCS}, pages 214--229. Springer, 2008.

\bibitem{DiekertGK08}
V.~Diekert, P.~Gastin, and M.~Kufleitner.
\newblock A survey on small fragments of first-order logic over finite words.
\newblock {\em Int. J. Found. Comput. Sci.}, 19:513--548, 2008.

\bibitem{EhlersLTV17}
R.~Ehlers, S.~Lafortune, S.~Tripakis, and M.~Y. Vardi.
\newblock Supervisory control and reactive synthesis: a comparative
  introduction.
\newblock {\em Discrete Event Dynamic Systems}, 27(2):209--260, 2017.

\bibitem{EllulKSW05}
Keith Ellul, Bryan Krawetz, Jeffrey Shallit, and Ming-Wei Wang.
\newblock Regular expressions: New results and open problems.
\newblock {\em J. Autom. Lang. Comb.}, 10:407--437, 2005.

\bibitem{Friedman76}
E.~P. Friedman.
\newblock The inclusion problem for simple languages.
\newblock {\em Theoretical Computer Science}, 1:297--316, 1976.

\bibitem{GareyJ79}
M.~R. Garey and D.~S. Johnson.
\newblock {\em Computers and Intractability: A Guide to the Theory of
  NP-Completeness}.
\newblock W. H. Freeman, 1979.

\bibitem{GlasserS08}
C.~Gla{\ss}er and H.~Schmitz.
\newblock Languages of dot-depth 3/2.
\newblock {\em Theory of Computing Systems}, 42:256--286, 2008.

\bibitem{green}
J.~A. Green.
\newblock On the structure of semigroups.
\newblock {\em Ann. of Math. (2)}, 54:163--172, 1951.

\bibitem{GrosshansMS17}
N.~Grosshans, P.~McKenzie, and L.~Segoufin.
\newblock The power of programs over monoids in {DA}.
\newblock In {\em Mathematical Foundations of Computer Science ({MFCS})},
  volume~83 of {\em LIPIcs}, pages 2:1--2:20. Schloss Dagstuhl -
  Leibniz-Zentrum fuer Informatik, 2017.

\bibitem{HarelPnueli85}
D.~Harel and A.~Pnueli.
\newblock On the development of reactive systems.
\newblock In {\em Logics and Models of Concurrent Systems}, pages 477--498,
  Berlin, Heidelberg, 1985. Springer Berlin Heidelberg.

\bibitem{Heam08}
P.{-}C. H{\'{e}}am.
\newblock A note on partially ordered tree automata.
\newblock {\em Inform. Process. Lett.}, 108:242--246, 2008.

\bibitem{HofmanM15}
P.~Hofman and W.~Martens.
\newblock Separability by short subsequences and subwords.
\newblock In {\em International Conference on Database Theory {(ICDT)}},
  volume~31 of {\em LIPIcs}, pages 230--246, 2015.

\bibitem{Hunt73}
H.~B. {Hunt III}.
\newblock {\em On the Time and Tape Complexity of Languages}.
\newblock PhD thesis, Department of Computer Science, Cornell University,
  Ithaca, NY, 1973.

\bibitem{JacobLF16}
R.~Jacob, J.{-}J. Lesage, and J.{-}M. Faure.
\newblock Overview of discrete event systems opacity: Models, validation, and
  quantification.
\newblock {\em Annual Reviews in Control}, 41:135--146, 2016.

\bibitem{Jones75}
N.~D. Jones.
\newblock Space-bounded reducibility among combinatorial problems.
\newblock {\em Journal of Computer and System Sciences}, 11:68--85, 1975.

\bibitem{KarandikarS19}
P.~Karandikar and Ph. Schnoebelen.
\newblock The height of piecewise-testable languages and the complexity of the
  logic of subwords.
\newblock {\em Logical Methods in Computer Science}, 15(2), 2019.

\bibitem{KKP}
O.~Kl{\'{\i}}ma, M.~Kunc, and L.~Pol{\'{a}}k.
\newblock Deciding $k$-piecewise testability.
\newblock Manuscript, 2014.

\bibitem{KlimaP13}
O.~Kl\'{\i}ma and L.~Pol{\'a}k.
\newblock Alternative automata characterization of piecewise testable
  languages.
\newblock In {\em Developments in Language Theory {(DLT)}}, volume 7907 of {\em
  LNCS}, pages 289--300, 2013.

\bibitem{KomendaM17}
J.~Komenda and T.~Masopust.
\newblock Computation of controllable and coobservable sublanguages in
  decentralized supervisory control via communication.
\newblock {\em Discrete Event Dynamic Systems}, 27(4):585--608, 2017.

\bibitem{KomendaMS15}
J.~Komenda, T.~Masopust, and J.~H. van Schuppen.
\newblock Coordination control of discrete-event systems revisited.
\newblock {\em Discrete Event Dynamic Systems}, 25(1-2):65--94, 2015.

\bibitem{Kozen77}
D.~Kozen.
\newblock Lower bounds for natural proof systems.
\newblock In {\em IEEE Symposium on Foundations of Computer Science {(FOCS)}},
  pages 254--266. {IEEE} Computer Society, 1977.

\bibitem{mfcs16:mktmmt_full}
M.~Kr\"otzsch, T.~Masopust, and M.~Thomazo.
\newblock Complexity of universality and related problems for partially ordered
  {NFAs}.
\newblock {\em Information and Computation}, 255:177--192, 2017.

\bibitem{KufleitnerL11}
M.~Kufleitner and A.~Lauser.
\newblock Partially ordered two-way {B}{\"{u}}chi automata.
\newblock {\em Int. J. Found. Comput. Sci.}, 22:1861--1876, 2011.

\bibitem{KupfermanV01}
O.~Kupferman and M.~Y. Vardi.
\newblock Weak alternating automata are not that weak.
\newblock {\em {ACM} Trans. Comput. Log.}, 2(3):408--429, 2001.

\bibitem{Lin2011}
F.~Lin.
\newblock Opacity of discrete event systems and its applications.
\newblock {\em Automatica}, 47(3):496--503, 2011.

\bibitem{LodayaPS08}
K.~Lodaya, P.~K. Pandya, and S.~S. Shah.
\newblock Marking the chops: an unambiguous temporal logic.
\newblock In {\em {IFIP} International Conference On Theoretical Computer
  Science ({TCS})}, volume 273 of {\em {IFIP}}, pages 461--476. Springer, 2008.

\bibitem{LodayaPS10}
K.~Lodaya, P.~K. Pandya, and S.~S. Shah.
\newblock Around dot depth two.
\newblock In {\em Developments in Language Theory {(DLT)}}, volume 6224 of {\em
  LNCS}, pages 303--315. Springer, 2010.

\bibitem{Maidl00}
M.~Maidl.
\newblock The common fragment of {CTL} and {LTL}.
\newblock In {\em Foundations of Computer Science ({FOCS})}, pages 643--652.
  {IEEE} Computer Society, 2000.

\bibitem{MartensNNS15}
W.~Martens, F.~Neven, M.~Niewerth, and T.~Schwentick.
\newblock Bonxai: Combining the simplicity of {DTD} with the expressiveness of
  {XML} schema.
\newblock In {\em Principles of Database Systems ({PODS})}, pages 145--156.
  {ACM}, 2015.

\bibitem{MartensNS09}
W.~Martens, F.~Neven, and T.~Schwentick.
\newblock Complexity of decision problems for {XML} schemas and chain regular
  expressions.
\newblock {\em {SIAM} J. Comput.}, 39:1486--1530, 2009.

\bibitem{MartensT18}
W.~Martens and T.~Trautner.
\newblock Evaluation and enumeration problems for regular path queries.
\newblock In {\em International Conference on Database Theory ({ICDT})},
  volume~98 of {\em LIPIcs}, pages 19:1--19:21. Schloss Dagstuhl -
  Leibniz-Zentrum fuer Informatik, 2018.

\bibitem{ptnfas}
T.~Masopust.
\newblock Piecewise testable languages and nondeterministic automata.
\newblock In {\em Mathematical Foundations of Computer Science {(MFCS)}},
  volume~58 of {\em LIPIcs}, pages 67:1--67:14, 2016.

\bibitem{MasopustDetectability}
T.~Masopust.
\newblock Complexity of deciding detectability in discrete event systems.
\newblock {\em Automatica}, 93:257--261, 2018.

\bibitem{tm2016}
T.~Masopust.
\newblock Separability by piecewise testable languages is {PT}ime-complete.
\newblock {\em Theoretical Computer Science}, 711:109--114, 2018.

\bibitem{MasopustK18}
T.~Masopust and M.~Kr{\"{o}}tzsch.
\newblock Deciding universality of {ptNFAs} is {PSpace}-complete.
\newblock In {\em International Conference on Current Trends in Theory and
  Practice of Computer Science ({SOFSEM})}, volume 10706 of {\em LNCS}, pages
  413--427. Springer, 2018.

\bibitem{dlt15}
T.~Masopust and M.~Thomazo.
\newblock On boolean combinations forming piecewise testable languages.
\newblock {\em Theoretical Computer Science}, 682:165--179, 2017.

\bibitem{MY2017}
T.~Masopust and X.~Yin.
\newblock Complexity of detectability, opacity and {A}-diagnosability for
  modular discrete event systems.
\newblock {\em Automatica}, 101:290--295, 2019.

\bibitem{MeyerS72}
A.~R. Meyer and L.~J. Stockmeyer.
\newblock The equivalence problem for regular expressions with squaring
  requires exponential space.
\newblock In {\em Symposium on Switching and Automata Theory (STOC)}, pages
  125--129. {IEEE} Computer Society, 1972.

\bibitem{MullerSS92}
D.~E. Muller, A.~Saoudi, and P.~E. Schupp.
\newblock Alternating automata, the weak monadic theory of trees and its
  complexity.
\newblock {\em Theor. Comput. Sci.}, 97(2):233--244, 1992.

\bibitem{OzverenW1990}
C.~M. Ozveren and A.~S. Willsky.
\newblock Observability of discrete event dynamic systems.
\newblock {\em IEEE Transactions on Automatic Control}, 35(7):797--806, 1990.

\bibitem{pin2019}
J.-E. Pin.
\newblock Mathematical foundations of automata theory.
\newblock http://www.irif.fr/~jep/PDF/MPRI/MPRI.pdf, 2019.

\bibitem{PinW97}
J.{-}E. Pin and P.~Weil.
\newblock Polynomial closure and unambiguous product.
\newblock {\em Theory Comput. Syst.}, 30(4):383--422, 1997.

\bibitem{Place15}
T.~Place.
\newblock Separating regular languages with two quantifiers alternations.
\newblock In {\em LICS}, pages 202--213. {IEEE} Computer Society, 2015.

\bibitem{Place18}
T.~Place.
\newblock Separating regular languages with two quantifier alternations.
\newblock {\em Logical Methods in Computer Science}, 14(4), 2018.

\bibitem{PlaceRZ13}
T.~Place, L.~van Rooijen, and M.~Zeitoun.
\newblock On separation by locally testable and locally threshold testable
  languages.
\newblock {\em Logical Methods in Computer Science}, 10(3), 2014.

\bibitem{PlaceZ15}
T.~Place and M.~Zeitoun.
\newblock Separation and the successor relation.
\newblock In {\em STACS}, volume~30 of {\em LIPIcs}, pages 662--675, 2015.

\bibitem{PlaceZ18}
T.~Place and M.~Zeitoun.
\newblock The covering problem.
\newblock {\em Logical Methods in Computer Science}, 14(3), 2018.

\bibitem{Ramadge1986}
P.~J. Ramadge.
\newblock Observability of discrete event systems.
\newblock In {\em Conference on Decision and Control (CDC)}, pages 1108--1112,
  1986.

\bibitem{RybalchenkoS10}
A.~Rybalchenko and V.~Sofronie{-}Stokkermans.
\newblock Constraint solving for interpolation.
\newblock {\em Journal of Symbolic Computation}, 45(11):1212--1233, 2010.

\bibitem{SabooriHadjicostis2007}
A.~Saboori and C.~N. Hadjicostis.
\newblock Notions of security and opacity in discrete event systems.
\newblock In {\em Conference on Decision and Control (CDC)}, pages 5056--5061,
  2007.

\bibitem{SchmuckMoorMajumdar18}
A.-K. Schmuck, Th. Moor, and R.~Majumdar.
\newblock On the relation between reactive synthesis and supervisory control of
  input/output behaviours.
\newblock {\em IFAC-PapersOnLine}, 51(7):31--38, 2018.

\bibitem{Sch76}
M.~P. Sch{\"u}tzenberger.
\newblock Sur le produit de concatenation non ambigu.
\newblock {\em Semigroup Forum}, 13:47--75, 1976.

\bibitem{SchwentickTV01}
T.~Schwentick, D.~Th{\'{e}}rien, and H.~Vollmer.
\newblock Partially-ordered two-way automata: {A} new characterization of {DA}.
\newblock In {\em Developments in Language Theory {(DLT)}}, volume 2295 of {\em
  LNCS}, pages 239--250, 2001.

\bibitem{Senizergues97}
G.~S{\'{e}}nizergues.
\newblock The equivalence problem for deterministic pushdown automata is
  decidable.
\newblock In {\em International Colloquium on Automata, Languages, and
  Programming {(ICALP)}}, volume 1256 of {\em LNCS}, pages 671--681, 1997.

\bibitem{ShuLin2011}
S.~Shu and F.~Lin.
\newblock Generalized detectability for discrete event systems.
\newblock {\em Systems {\&} Control Letters}, 60(5):310--317, 2011.

\bibitem{ShuLinYing2007}
S.~Shu, F.~Lin, and H.~Ying.
\newblock Detectability of discrete event systems.
\newblock {\em IEEE Transactions on Automatic Control}, 52(12):2356--2359,
  2007.

\bibitem{Simon1972}
I.~Simon.
\newblock {\em Hierarchies of Events with Dot-Depth One}.
\newblock PhD thesis, University of Waterloo, Canada, 1972.

\bibitem{SistlaVW87}
A.~P. Sistla, M.~Y. Vardi, and P.~Wolper.
\newblock The complementation problem for b{\"{u}}chi automata with
  appplications to temporal logic.
\newblock {\em Theor. Comput. Sci.}, 49:217--237, 1987.

\bibitem{SMKR:elcq14}
G.~Stefanoni, B.~Motik, M.~Kr{\"{o}}tzsch, and S.~Rudolph.
\newblock The complexity of answering conjunctive and navigational queries over
  {OWL} 2 {EL} knowledge bases.
\newblock {\em Journal of Artificial Intelligence Research}, 51:645--705, 2014.

\bibitem{StockmeyerM73}
L.~J. Stockmeyer and A.~R. Meyer.
\newblock Word problems requiring exponential time: Preliminary report.
\newblock In {\em ACM Symposium on the Theory of Computing {(STOC)}}, pages
  1--9, 1973.

\bibitem{Straubing81}
H.~Straubing.
\newblock A generalization of the {S}ch{\"{u}}tzenberger product of finite
  monoids.
\newblock {\em Theoretical Computer Science}, 13:137--150, 1981.

\bibitem{Straubing85}
H.~Straubing.
\newblock Finite semigroup varieties of the form {V*D}.
\newblock {\em J. Pure Appl. Algebra}, 36:53--94, 1985.

\bibitem{Therien81}
D.~Th{\'{e}}rien.
\newblock Classification of finite monoids: The language approach.
\newblock {\em Theoretical Computer Science}, 14:195--208, 1981.

\bibitem{TherienW98}
D.~Th{\'{e}}rien and Th. Wilke.
\newblock Over words, two variables are as powerful as one quantifier
  alternation.
\newblock In {\em {ACM} Symposium on the Theory of Computing ({STOC})}, pages
  234--240. {ACM}, 1998.

\bibitem{Trahtman2001}
A.~N. Trahtman.
\newblock Piecewise and local threshold testability of {DFA}.
\newblock In {\em International Symposium on Fundamentals of Computation Theory
  {(FCT)}}, volume 2138 of {\em LNCS}, pages 347--358, 2001.

\bibitem{Wagner04}
K.~W. Wagner.
\newblock Leaf language classes.
\newblock In {\em MCU}, volume 3354 of {\em LNCS}, pages 60--81, 2004.

\end{thebibliography}

\end{document}